\newtheorem{theorem}{Theorem}
\newtheorem{lemma}{Lemma}
\newtheorem{definition}{Definition}
\newtheorem{prop}{Proposition}
\newtheorem{corollary}{Corollary}
\newtheorem{remark}{Remark}
\newtheorem{conj}{Conjecture}
\newtheorem{claims}{Claim}
\newcommand{\e}{\epsilon}
\newcommand{\R}{\mathbb{R}}
\newcommand{\N}{\mathbb{N}}
\renewcommand{\t}{\tau}
\renewcommand{\k}{\kappa}
\renewcommand{\o}{\omega}
\renewcommand{\d}{\delta}
\renewcommand{\Re}{\operatorname{Re}}
\renewcommand{\Im}{\operatorname{Im}}
\def\TWOLF{\texttt{OPTiWinD}}
\def\OPTiWinD{\texttt{OPTiWinD}}
\begin{document}
\begin{frontmatter}
\title{Orienteering problem with time-windows\\ and updating delay }

\author[1]{Marc Demange}
\ead{marc.demange@rmit.edu.au} 
\author[1]{David Ellison} 
\ead{david.ellison2@rmit.edu.au} 
\author[3]{Bertrand Jouve}
\ead{jouve@univ-tlse2.fr} 



\address[1]{RMIT University, School of Science, Melbourne, Australia}

 \address[3]  {
              LISST UMR5193, Toulouse University, CNRS, France}


\begin{abstract}
    The Orienteering Problem with Time Window and Delay (\OPTiWinD) is a variant of the online orienteering problem. A series of requests appear in various locations while a vehicle moves within the territory to serve them. Each request has a time window during which it can be served and a weight which describes its importance. There is also a minimum delay $T$ between successive requests. The objective is to find a path for the vehicles that maximises the sum of the weights of the requests served. We further assume that the length of each time window is equal to the diameter of the territory. We study the optimal performance and competitive ratio for the set of instances with $n$ requests. We obtain complete resolution for $T$ at least half of the diameter, small values of $T$ or small values of $n$, as well as partial results in the remaining cases.
\end{abstract}
\end{frontmatter}

\providecommand{\keywords}[1]{\textit{\textbf{Keywords --- }#1}}
\keywords{Online orienteering problem, Online Scheduling, Competitive analysis, Combinatorial optimisation.}

\section{Introduction and related work}
\subsection{Defining the problem}
This work is motivated by a wildfire emergency management problem. On very hot days and during the dry season, in regions exposed to wildfires, it is not rare to observe, in a single region, up to one hundred  fire ignitions per day. Some of these fires spread very quickly and, if not extinguished quickly, they escape and become completely out of control. Although the risk factors, such as temperature, wind, fuel load, dryness, etc. and the main causes of ignition are well-understood, it is almost impossible to forecast, during a day, where and when a fire will occur.  Wildfires are already an important threat to human lives, goods and the environment; and in the near future, this is expected to become worse due to  climate change. 

For emergency management commanders, one of the main challenges is  to rely on efficient decision making. And when disaster strikes, especially when fatalities occur, they need to be able to justify their decisions once the crisis is over. Hence, in order to satisfy legal constraints, the decision chain needs to be very clear and well-defined.
The most critical decision commanders need to make during the initial outbreak is the allocation of firefighting resources in cases where there are not enough resources to keep all the fires under control. Such a decision relies on the information available about the on-going fires and the threats they pose. 
Early warning systems are set-up to detect new fires, but once a smoke plume has been detected, an on-site reconnaissance is necessary to assess the importance of the fire, evaluate the possible induced risk and  estimate how long until the fire can no longer be contained. Based on this information, the management team decides whether or not to send firefighting resources to each new fire. Given that making this decision requires having evaluated all the risks associated with the fire, each new fire is only taken into account after a delay corresponding to the time used by the reconnaissance team. We will assume that there is a single firefighter team which cannot be divided, thus two fires igniting simultaneously will be processed one at a time with a delay between them. Natural generalisations may induce several or splittable teams. The objective is to schedule the movements of the team by selecting which fires to extinguish and when, in order to maximise the total weight of the fires contained. 
We model this situation using a sequence of fires, each with a weight representing the expected loss if the fire goes out of control and a time window in which the fire has been assessed and remains controllable. We include a minimum delay between the release time of fires.

This model corresponds to the standard Orienteering Problem in a  space $X$, with Time Windows and with the addition of time delays. This variant will be called {\em The Orienteering Problem with Time Window and Delay} (\OPTiWinD). In this problem, time is continuous and new requests can appear at any moment and at any point in $X$. Each request has a location, a weight and a time window during which it can be served. There is a minimum delay between successive new requests. The player controls a unique vehicle which travels to serve the requests. A request is instantaneously satisfied when it is visited. The player's objective is to find a path in the space that maximises the sum of the weights of the requests served. A strategy of the player is described as an algorithm.  

We will consider online algorithms for instances where each request is revealed at the start of its time window. This corresponds to the reality of our motivating problem, for which the difficulty to establish firefighting strategies is largely due to not knowing in advance when and where the fires will appear.

\subsection{Related work}

The underlying combinatorial optimisation problem  is a natural generalisation of the Metric Travelling Salesman Problem (Metric TSP), which is also called the \emph{Orienteering Problem}~\citep{orienteering1}: one is given a time limit  and a graph with $n$ vertices including departure and arrival points. Each vertex is associated with a score and each edge with a distance. The objective is to select some vertices and an order to visit them while travelling at constant speed~1 from the departure point to the arrival point, so that the time limit is not exceeded and the total score of visited vertices is maximised. In \citet{laporte1990selective}, the orienteering problem is called \emph{Selective TSP}. The same problem has been studied under other names, in particular  \emph{Bank Robber Problem}~\cite{quota} and \emph{Maximum Collection Problem}. The dual problem where the aim is to collect a fixed targeted total score in the minimum time is  called \emph{Quota TSP}~\citep{quota} or also \emph{Prize Collecting Problem}. These problems fall into the class of \emph{Travelling Salesman Problems with Profits}~\citep{tsp-profit}. 
Our application deals with the version of the orienteering problem with time windows~\citep{orienteeringsurv}: instead of an overall time limit, each vertex is associated with a release time and a deadline and it must be visited between these two dates in order to collect the associated score.  

Some of these problems have been considered in the online case, also referred as \emph{dynamic case}. In most cases, the online setup consists in revealing requests (vertices) over time, so that the online algorithm needs to decide about the movements of the vehicle without knowing when and where the next request will be revealed. The Metric Travelling Salesman has been considered under this setup  in~\cite{olqtsp,Ausiello2001}, for instance, and the Quota TSP in~\cite{olqtsp}.
When the delay is zero, $\OPTiWinD$ is equivalent to the \emph{Whack-a-Mole Problem}, which is defined in~\citet{gut06}, or the \emph{Dynamic Traveling Repair Problem}~\citep{iri04}. For a space which is a truncated line $[-L,L]$, \citet{gut06} note that the only non trivial cases are those for $\theta/4 < L \leq \theta$, where $\theta$ denotes the size of the time windows. Thus, they study the case $\theta=L$ with moles that may only arrive at integer positions on $[-L,L]$. In the case of the segment, we study in this paper the case where $L=\theta/2$,  which is not examined in  \citet{gut06}. To our knowledge, no previous paper deals with a delay between requests.

\section{Preliminaries}

\subsection{Geodesic metric spaces}

A metric space $E=(X,d)$ is said to be \emph{geodesic}~\citep{dez09} if for any $a,b\in X$, there is a continuous path $\gamma:[0,1]\rightarrow X$ from $a$ to $b$ such that the range of $\gamma$ is isometric to the segment $[0,d(a,b)]$.

Defining the offline orienteering problem on graphs or on metric spaces is equivalent. The distance, or travel time, between requests is the only aspect of the underlying metric space relevant to the game. However, in the online case, the position of the vehicle must be defined at all time. Hence the online orienteering problem is defined either on graphs (e.g. \citep{iri04}) or on geodesic metric spaces (e.g. \citep{Ausiello2001}), producing two slightly different online games.

Having a geodesic metric space means that the distance between two points is equal to the length of the shortest path connecting them. The vehicle is allowed to change directions whenever a new request appears, even when the vehicle is travelling between requests. Alternatively, on graphs, the vehicle can only select a new direction after reaching a vertex.

\subsection{Definitions and notations}
 For geodesic metric spaces with only one pair of points maximising the distance, it is natural to consider a worst case scenario when all the requests appear on these two points. As it happens, the case where $E$ is a segment is critical to understanding the general case. Thus, except in Section \ref{sec:geodesic}, we will consider the online problem on the segment $[-1,1]$ with the vehicle starting at $0$. We will show in Section \ref{sec:geodesic} that most of the results for the segment can be extended to bounded geodesic metric spaces of diameter 2, and we will give counter-examples to the others.

An instance of \OPTiWinD is a set of requests, each characterised by a triple $(A,[r,r'],\delta)$ where $A\in X$ is the location of the request, $r$ is the release time, $r'$ is the time when the request is lost if it is not served and $\delta \in \R_+^*$ is the \emph{weight} (or score) associated with the request.

A request $(A,[r,r'],\delta)$ is said to be \emph{served} if the vehicle is on $A$ at a time $t\in [r, r']$. A feasible solution is an itinerary for the vehicle starting at $O$ at time $t=0$ and serving some requests. The objective is to maximise the total weight of the requests served. We denote by $\mathcal I_{n,T}$ the set of instances of $\OPTiWinD$ on $[-1,1]$ with at most $n$ requests and delay at least $T$ between requests.

We will make the following assumptions about  the $\OPTiWinD$ Problem:  
\begin{itemize}
    \item We consider that $X$ is the segment $[-1,1]$, except in Section~\ref{sec:geodesic} where we extend the problem to the case of a bounded geodesic metric space $X$ of diameter 2.

    \item The vehicle is able to move no faster than unit speed in $X$ and a request is instantaneously satisfied when it is visited. 

    \item  We assume that each time window is of the form $[r,r+2]$ where $r$ is the release time. Thus, once a request appears, the vehicle always has a chance to reach its location in time from anywhere in $E$.
    
    \item The release times of any two successive requests differ by at least a delay $T\geq 0$, which is independent from the requests.
     
    \item Online algorithms for $\OPTiWinD$  learn about a new request (its location and weight) at the start of its associated time window. The vehicle can serve it at any time within that time window or it can choose not to serve it at all. The geodesic metric space is completely known from the beginning. 
    
\end{itemize}

In the following, the sets of non-negative integers and real numbers are denoted by $\N$ and $\R$ respectively. The golden ratio is $\varphi$. The $i$-th request is denoted $f_i$ and its release time is $t_i$. The weight of $f_i$ is $\d_i$ and $S_i=\sum_{k=1}^i \d_k$. The earliest time when the vehicle may reach $f_i$, assuming it starts heading towards it at $t_i$, is $\t_i$.


\subsection{Performance and competitive ratio}

The \emph{performance} $\lambda_{ALG}^I$ of an algorithm $ALG$ for an instance $I\in\mathcal I_{n,T}$ of  $\TWOLF$  is the ratio of the sum of the weights of the served requests by the total sum of the weights of all requests. The performance of $ALG$ for at most $n$ requests and delay $T$ is then defined as:
$$\lambda_{ALG}^{\mathcal I_{n,T}}=\inf_{I\in\mathcal I_{n,T}} \lambda_{ALG}^I. $$
By definition, any $\lambda_{ALG}^I$ is in $[0,1]$; hence $\lambda_{ALG}^{\mathcal I_{n,T}} \in [0,1]$. An online algorithm $ALG$ is $\gamma$-competitive for problem $\TWOLF$ with at most $n$ requests and delay $T$ if, given any instance $I\in \mathcal I_{n,T}$, the performance of the online algorithm is at least $\gamma$ multiplied by the performance of $OPT$, the optimal offline algorithm:  $\forall I\in \mathcal I_{n,T}$,  $\lambda_{ALG}^I \geq \gamma \cdot \lambda_{OPT}^I$, with $\gamma>0$. Note that this definition of competitive ratio is standard for maximisation problems. Hence, we have $0\leq \gamma \leq 1$. The {\emph competitive ratio} of $ALG$ for at most $n$ requests and delay $T$ is defined by: $$\gamma_{ALG}^{\mathcal I_{n,T}}=\inf_{I\in\mathcal I_{n,T}} \frac{\lambda_{ALG}^I}{\lambda_{OPT}^I}.$$



For any algorithm $ALG$, we have $\lambda_{ALG}^{\mathcal I_{n,T}} \leq \gamma_{ALG}^{\mathcal I_{n,T}}$. In order to evaluate the performance and competitive ratio of an algorithm, we need to consider a worst case scenario. This situation corresponds to having a malicious adversary deciding  when the requests are released and choosing their weights. While our problem is a one-player online game, as the adversary is not technically a player, it is standard practice to discuss strategies of the adversary as though in a two-player game.

The \emph{optimal performance} and the \emph{optimal competitive ratio} are respectively:
$$\lambda^{\mathcal I_{n,T}}=\sup_{ALG} \lambda_{ALG}^{\mathcal I_{n,T}} ~~~\mathrm{and} ~~~\gamma^{\mathcal I_{n,T}}=\sup_{ALG} \gamma_{ALG}^{\mathcal I_{n,T}} $$

An online algorithm will be called \emph{optimal} if its competitive ratio is equal to the optimal competitive ratio. Note that we have:
$$\lambda^{\mathcal I_{n,T}} \leq \gamma^{\mathcal I_{n,T}}.$$

\begin{definition}\label{f(n)-comp}
Given a function $f:\N\rightarrow \R^+$, we say that an online algorithm $ALG$ is $f(n)$-performant (resp. $f(n)$-competitive) if it guarantees a performance (resp. competitive ratio) of $f(n)$ for instances with $n$ requests. Hence:

$$\lambda_{ALG}^{\mathcal I_{n,T}}\geq f(n) ~~~ (\mathrm{resp.} ~ \gamma_{ALG}^{\mathcal I_{n,T}}\geq f(n)).$$


\end{definition}

Note that in Definition \ref{f(n)-comp}, the algorithm functions without knowing ahead of time the total number of requests that will appear. 

\begin{prop}
The functions $\lambda^{\mathcal I_{n,T}}$ and $\gamma^{\mathcal I_{n,T}}$ are non-decreasing with respect to $T$ and non-increasing with respect to $n$.
\end{prop}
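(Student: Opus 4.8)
The statement follows from two elementary set inclusions combined with the monotonicity of infima and suprema. First I would record that, for all $n_1\le n_2$ in $\N$ and all $0\le T_1\le T_2$,
\[
\mathcal I_{n_1,T}\subseteq \mathcal I_{n_2,T}\quad\text{and}\quad \mathcal I_{n,T_2}\subseteq \mathcal I_{n,T_1}.
\]
Both are immediate from the definition of $\mathcal I_{n,T}$: an instance with at most $n_1$ requests is in particular an instance with at most $n_2$ requests, and an instance whose successive release times differ by at least $T_2$ has them differing by at least $T_1$ whenever $T_1\le T_2$. The common ground space $[-1,1]$, the common start $O=0$ and the common window length $2$ play no role in these inclusions.

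Next, for a fixed online algorithm $ALG$, I would use that the infimum of a function over a subset is at least its infimum over the containing set. Applying this to the maps $I\mapsto \lambda_{ALG}^I$ and $I\mapsto \lambda_{ALG}^I/\lambda_{OPT}^I$ over the two inclusions above yields
\[
\lambda_{ALG}^{\mathcal I_{n_2,T}}\le \lambda_{ALG}^{\mathcal I_{n_1,T}},\qquad \lambda_{ALG}^{\mathcal I_{n,T_1}}\le \lambda_{ALG}^{\mathcal I_{n,T_2}},
\]
and likewise with $\gamma_{ALG}$ in place of $\lambda_{ALG}$. Finally, taking the supremum over all online algorithms preserves these inequalities (if $u(ALG)\le v(ALG)$ for every $ALG$, then $\sup_{ALG}u\le \sup_{ALG}v$), which gives $\lambda^{\mathcal I_{n_2,T}}\le \lambda^{\mathcal I_{n_1,T}}$, $\lambda^{\mathcal I_{n,T_1}}\le \lambda^{\mathcal I_{n,T_2}}$, and the two analogous statements for $\gamma^{\mathcal I_{n,T}}$; that is exactly the claimed monotonicities.

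There is essentially no real obstacle here, only two bookkeeping points worth a word. First, the ratio $\lambda_{ALG}^I/\lambda_{OPT}^I$ should be checked to be well defined on every non-empty instance: the window-length assumption $[r,r+2]$ together with the diameter being $2$ guarantees that $OPT$ reaches at least the first request, so $\lambda_{OPT}^I>0$; the empty instance, if it is admitted at all, can be excluded or assigned ratio $1$ by convention. Second, the fact that online algorithms do not know $n$ in advance (the remark following Definition~\ref{f(n)-comp}) is irrelevant to the argument: we never alter $ALG$, we only re-evaluate one and the same algorithm on the nested families $\mathcal I_{n,T}$, so no uniformity-in-$n$ issue arises.
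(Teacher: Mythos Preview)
Your argument is correct and follows exactly the paper's approach: both rest on the set inclusions $\mathcal I_{n,T_2}\subseteq\mathcal I_{n,T_1}$ for $T_1\le T_2$ and $\mathcal I_{n_1,T}\subseteq\mathcal I_{n_2,T}$ for $n_1\le n_2$, from which the monotonicity of $\lambda$ and $\gamma$ follows by taking infima over instances and then suprema over algorithms. Your version is simply more explicit about the inf--sup manipulation and adds the well-definedness check for the competitive ratio, but the substance is identical.
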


\begin{proof}
Increasing $T$ restricts the strategy of the adversary without affecting the player's algorithm. Similarly, increasing $n$ increases the possibilities for the adversary. More precisely, if $T\leq T'$, then $\mathcal I_{n,T'}\subset \mathcal I_{n,T}$ and if $n\leq n'$, then $\mathcal I_{n,T}\subset \mathcal I_{n',T}$.
\end{proof}

\begin{remark} In order to find bounds to performances and competitive ratios, we may restrict our analysis by omitting inefficient algorithms. It is inefficient for the vehicle to slow down or to change directions when no request appears. Thus we will assume that the vehicle always moves at speed 1 and with a constant direction between the release times $t_i$ and $t_{i+1}$ of two successive requests.

\end{remark}

\begin{remark} Multiplying all the weights $\delta_i$ by a positive constant does not affect the game. Thus we will always suppose that the weight of the first request is $\delta_1=1$, except in the proof of Theorem \ref{large delay}.
\end{remark}

\subsection{Summary of results}

The optimal performance and competitive ratio are both decreasing with the number of requests $n$. For a fixed $n$, they are non-decreasing step functions with regards to the delay $T$. We distinguish three cases corresponding to small, large or medium delays. The following results were obtained for the case where $X$ is the segment $[-1,1]$.
 
 \begin{itemize}
     \item When there is no delay, there is a natural greedy algorithm for the vehicle which consists in always heading towards the request of greatest weight. This greedy algorithm is optimal, guaranteeing a performance and a competitive ratio of $\frac1n$. Small delays refer to values of $T$ small enough that this greedy algorithm remains optimal. This means $T<T_0=\frac 1{2^{n-3}+1}$ in terms of performance and  $T<T_1=\frac 1{2^{n-1}-2}$ in terms of competitive ratio.
 
 \item  With a delay $T$ greater or equal to 1, a more refined greedy algorithm, defined in Section \ref{T geq 1}, becomes optimal. The values of the optimal performances and optimal competitive ratios are then defined by a sequence $(\alpha_n)$ which is analysed in the appendix.

\item  The intermediate case is the most complex and remains mostly open. When the delay $T$ reaches the critical value $T_0$ (resp. $T_1$), the performance (resp. competitive ratio) increases by at least $\e=\frac 1{n(n-1)(n+3)}$. For $\frac12\leq T<1$, the performance is characterised by a sequence $(\beta_n)$, which is difficult to compute beyond the first few terms.

 \end{itemize}

 
 \begin{table}[ht]
    \centering
    \setcellgapes{3pt}
    \makegapedcells
    \begin{tabular}{|c|c|c|}
        
         \hline
         
         Delay &  Performance & Competitive Ratio\\
         \hline
         $T<T_1$ & $\frac 1n$ \hspace{3mm}\footnotesize{(**)} & $\frac 1n$ \hspace{7mm} \footnotesize{(**)}\\
         \hline
         $ T_1\leq T<T_0$ & $\frac 1n $ \hspace{3mm} \footnotesize{(**)} & $>\frac1n$\hspace{9mm} \\
         \hline
         $T_0\leq T<\frac 12$ & $\geq\frac 1n+\e$ \hspace{3mm} & $\geq\frac 1n+\e$ \hspace{5mm} \\
         \hline
         $\frac 12\leq T< 1$ & $\beta_n$ \hspace{3mm} \footnotesize{(*)}& $\geq \beta_n$ \hspace{5mm} \footnotesize{(*)}  \\
         \hline
         $1\leq T< 2-\frac 1{n-1}$ & $\alpha_{n-\lfloor \frac{1}{2-T}\rfloor}$ \footnotesize{(*)} & $\alpha_{n-\lfloor \frac{1}{2-T}\rfloor}$ \hspace{1mm} \footnotesize{(*)}\\
         \hline
         $2-\frac 1{n-1}\leq T$ & 1 \hspace{4mm} \footnotesize{(*)}& 1 \hspace{9mm} \footnotesize{(*)}\\
         \hline
    \end{tabular}
    \caption{Performances and Competitive Ratios with
     $T_0=\frac 1{2^{n-3}+1}$, $T_1=\frac 1{2^{n-1}-2}$, $\e=\frac 1{n(n-1)(n+3)}$. The results marked with an asterisk (resp. a double asterisk) can be generalised to centred geodesic metric spaces (resp. geodesic metric spaces) of diameter 2.} 
    \label{abcd}
\end{table}

These general results are summarised in Table \ref{abcd}. In sections 3 to 6, we study successively the performances and competitive ratios for no delay, large delays, small delays and medium delays on the segment $[-1,1]$. Section \ref{sec:small n} deals with values of $n\leq 4$, whence we are able to calculate explicitly the performance and competitive ratios for all the possible values of the delay. Results for $n=3$ and $n=4$ are summarised in Tables \ref{fig n=3 bis} and \ref{fig n=4 bis}. In Section \ref{sec:geodesic}, we generalise previous results to the case of centred geodesic metric spaces of diameter 2 (indicated with an asterisk in Table \ref{abcd}) or more generally to geodesic metric spaces of diameter 2 (indicated with a double asterisk). \emph{Centred} geodesic metric spaces of diameter $2$ are spaces included in $\overline B(O,1)$, where $\overline B(O,1)$ is the closed ball of radius $1$ centred in $O$.

\begin{table}[ht]
    \centering
    
    \setcellgapes{3pt}
    \makegapedcells

    \begin{tabular}{|c|c|c|}
        
         \hline
         Delay &  Performance & Competitive Ratio\\
         \hline
         $T<1/2$ & $1/3$ & $1/3$\\
         \hline
         $1/2\leq T< 1$ & $1/\varphi^2\approx 0.3820$ & $1/\varphi^2\approx 0.3820$\\
         \hline
         $1\leq T<1.5$ & $1/2$ & $1/2$\\
         \hline
         $1.5\leq T$ & 1 & 1\\
         \hline
    \end{tabular}
    \caption{Performances and Competitive Ratios for 3 requests ($n=3$)}
    \label{fig n=3 bis}
\end{table}

\begin{table}[ht]
    \centering
        \setcellgapes{3pt}
    \makegapedcells

    \begin{tabular}{|c|c|c|}
        
         \hline
         Delay &  Performance & Competitive Ratio\\
         \hline
         $T<1/6$ & \multirow{4}{*}{$1/4$} & $1/4$\\
         \cline{1-1} \cline{3-3} 
         $1/6\leq T<1/5 $ &  & $0.2578$\\
         \cline{1-1} \cline{3-3} 
         $1/5\leq T<1/4$ &  & $2-\sqrt{3}\approx 0.2679$\\
         \cline{1-1} \cline{3-3} 
         $1/4\leq T<1/3$ &  & $0.2803$\\
         \hline
         $1/3\leq T<1/2$ & $2-\sqrt{3}\approx 0.2679$ & $1-\sqrt 2/2\approx 0.2929$\\
         \hline
         $1/2\leq T< 1$ & $1-\sqrt{2}/2\approx 0.2929$ & $0.3177$\\
         \hline
         $1\leq T< 1.5$ & $1/\varphi^2\approx 0.3820$ & $1/\varphi^2\approx 0.3820$\\
         \hline
         $1.5\leq T< 5/3$ & $1/2$ & $1/2$\\
         \hline
         $5/3\leq T$ & 1 & 1\\
         \hline
    \end{tabular}
    \caption{Performances and Competitive Ratios for 4 requests ($n=4$)}
    \label{fig n=4 bis}
\end{table}

\section{\OPTiWinD$\;$with  no Delay}\label{T=0}

In this section, we will show that a greedy algorithm is optimal when there is no delay, and that no online algorithm guarantees a constant competitive ratio.

We consider a greedy algorithm for the vehicle, denoted by $GR_0$, in which the vehicle always moves towards the request of greatest weight. Thus, whenever a new request appears with a greater weight, the vehicle ignores his previous destination to go towards the new one.

\begin{lemma}\label{GR_0 perf}
The algorithm $GR_0$ is $\frac 1n$-performant; i.e. $\lambda_{GR_0}^{\mathcal I_{n,0}}=\frac 1n$.

\end{lemma}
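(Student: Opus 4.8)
We need to show two things: that $GR_0$ guarantees performance at least $\frac1n$ on every instance of $\mathcal I_{n,0}$, and that this bound is tight, i.e. there exists an instance (or a sequence of instances) on which the performance approaches $\frac1n$. The plan is to treat the lower bound by an averaging/charging argument and the upper bound by exhibiting an explicit adversary strategy.

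**Lower bound.** The plan is to argue that $GR_0$ always serves the request of maximal weight, or more carefully, that among the requests that are ``lost'' there is a structure forcing a served request of comparable weight. Consider the request $f_j$ of greatest weight $\delta_j=\max_i \delta_i$ (breaking ties by release time). At time $t_j$ the vehicle turns towards $f_j$ and, by the standing assumption that each time window has length $2$ equal to the diameter, the vehicle can reach $A_j$ within $[t_j,t_j+2]$. The only way it fails to serve $f_j$ is if a later request $f_k$ with $\delta_k\ge \delta_j$ appears before the vehicle arrives; but then $\delta_k=\delta_j$ by maximality, and we may instead track $f_k$. Iterating, since there are at most $n$ requests, the vehicle eventually commits to and serves a request of weight $\max_i\delta_i\ge \frac1n\sum_i\delta_i$. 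Hence $\lambda_{GR_0}^I\ge \frac{\max_i\delta_i}{\sum_i\delta_i}\ge\frac1n$, giving $\lambda_{GR_0}^{\mathcal I_{n,0}}\ge\frac1n$. The one subtlety to handle carefully is the tie-breaking and the finiteness of the chain of ``overtakes'', which is where I expect the only real friction.

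**Upper bound (tightness).** The plan is to construct, for every $\eta>0$, an instance with $n$ requests on which $GR_0$ serves total weight at most $\frac1n+\eta$ of the whole. The natural adversary releases requests of rapidly increasing weight at alternating endpoints $+1$ and $-1$, timed so that each new, heavier request appears just after the vehicle has committed past the midpoint towards the previous one, luring it back and forth so it serves only the last (heaviest) one: with weights growing fast enough (e.g. geometrically with large ratio), the last request dominates only slightly more than a $\frac1n$ fraction is wrong in the wrong direction — instead we want the weights nearly \emph{equal} so that serving one request yields exactly $\frac1n$. So the adversary issues $n$ requests of (almost) equal weight, placed and timed so that $GR_0$'s greedy reaction chases each newest request and ends up serving exactly one of them, while the offline optimum — in fact any single-request comparison — still leaves performance $\to\frac1n$. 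Concretely, release $f_1$ at $+1$; once the vehicle has passed the midpoint, release $f_2$ at $-1$ with weight $\delta_2=\delta_1(1+\epsilon)$, and so on, each slightly heavier and on the opposite side, timed so the vehicle never reaches any $A_i$ before $f_{i+1}$ appears, except that it does reach $A_n$. Then $GR_0$ collects only $\delta_n$, and $\frac{\delta_n}{S_n}\to\frac1n$ as $\epsilon\to0$.

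**Main obstacle.** The genuinely delicate part is the timing in the upper-bound construction: one must verify that with delay $T=0$ the adversary can place the release times so that at each step the vehicle is strictly past the midpoint (so that reversing direction costs more than the remaining window allows for the earlier request) yet the newest request is always reachable within its own length-$2$ window — and that this is consistent for all $n$ steps simultaneously. I would set the release of $f_{i+1}$ at the exact instant the vehicle crosses a point at distance just under $1$ from $A_i$, so the earlier request becomes permanently unreachable while $f_{i+1}$, whose window has length $2$ and whose location is within distance $2$ of the vehicle, stays serviceable; a short induction then shows $GR_0$ serves only $f_n$.
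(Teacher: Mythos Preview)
Your lower bound is exactly the paper's argument: $GR_0$ eventually serves a request of maximal weight, hence collects at least $\frac{\max_i\delta_i}{S_n}\ge\frac1n$. The paper's proof of this lemma in fact stops there and only establishes the inequality $\lambda_{GR_0}^{\mathcal I_{n,0}}\ge\frac1n$; the matching upper bound is obtained later (the adversary strategy in Theorem~\ref{frac 1n} with equal weights forces any algorithm, in particular $GR_0$, down to $\frac1n$).

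Your upper-bound construction is a legitimate and slightly different route: using weights $(1+\epsilon)^{i-1}$ avoids the tie-breaking issue that equal weights create for $GR_0$, and letting $\epsilon\to0$ gives performance $\to\frac1n$. One simplification: your ``midpoint'' discussion and the concern that $f_i$ become \emph{permanently unreachable} are unnecessary here. You are analysing $GR_0$, not an arbitrary algorithm; since weights strictly increase, once $f_{i+1}$ appears $GR_0$ turns toward it and never turns back, so it suffices to release $f_{i+1}$ at any time before the vehicle actually reaches $A_i$ (e.g.\ at $\tau_i-\eta$). The midpoint condition is what one needs for the \emph{competitive ratio} argument against a general algorithm, not for the performance of $GR_0$.
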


\begin{proof}
By applying $GR_0$, the vehicle is guaranteed to serve the request of greatest weight. This is sufficient to ensure a performance at least equal to $1/n$.
\end{proof}

\noindent Note that it follows from Lemma \ref{GR_0 perf} that $GR_0$ is $\frac 1n$-competitive.


\begin{theorem}\label{GR_0 optimal}
For instances of $\mathcal I_{n,0}$, the algorithm $GR_0$ is optimal; i.e. $\gamma^{\mathcal I_{n,0}}=\frac 1n$
\end{theorem}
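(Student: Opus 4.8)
The inequality $\gamma^{\mathcal I_{n,0}}\geq\frac1n$ is immediate: by Lemma~\ref{GR_0 perf} and the general inequality $\lambda_{ALG}^{\mathcal I_{n,T}}\leq\gamma_{ALG}^{\mathcal I_{n,T}}$ the greedy algorithm $GR_0$ is $\frac1n$-competitive, so $\gamma^{\mathcal I_{n,0}}\geq\gamma_{GR_0}^{\mathcal I_{n,0}}\geq\frac1n$. Hence the work is in the matching upper bound $\gamma^{\mathcal I_{n,0}}\leq\frac1n$, i.e. that \emph{every} online algorithm $ALG$ admits an instance $I\in\mathcal I_{n,0}$ with $\lambda_{ALG}^{I}\leq\frac1n\,\lambda_{OPT}^{I}$. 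The plan is an adaptive adversary argument in which all $n$ requests are given equal weight $1$, so that the target inequality reduces to the purely combinatorial statement: $ALG$ serves at most one request while $OPT$ serves all $n$.

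I would start the adversary with one request $f_1$ at an endpoint, say $+1$, released at time $0$. If $ALG$ never places itself so as to still be able to serve $f_1$ within $[0,2]$, the adversary reveals nothing more and $ALG$ collects $0$ against $OPT$'s $1$, so the ratio is $0\leq\frac1n$; thus we may assume $ALG$ commits to $f_1$ and serves it at $+1$ at some time $s_1\in[1,2]$. The adversary then releases the remaining $n-1$ requests, one at a time, always on the side opposite to the vehicle's current heading, placed near the centre rather than at the far endpoint, with release times chosen so that two properties hold simultaneously: (i) once the vehicle has committed to $f_1$ and served it at $+1$, it can no longer reach the later requests before their deadlines — with an interior target $p\in[-\tfrac12,0)$ and release times $r_i<-p$ the vehicle needs $s_1+(1-p)\geq 1-p>r_i+2$ time to come back from $+1$, so it misses them; (ii) each time the vehicle abandons its current target to chase a freshly released request, the next request is released on the opposite side early enough to make that choice again exclusive, so the vehicle can never lock in two served requests. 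A short induction on the revealed requests then gives $\lambda_{ALG}^{I}\leq\frac1n$.

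For the optimum, the point is that the $n-1$ later requests can be put at a fixed interior point, e.g. $p=-\tfrac12$, and released at times $r_i<\tfrac12$, so that $OPT$, knowing the whole sequence, first travels straight to $p$ (arriving at time $\tfrac12$ and serving all $n-1$ of them, since $r_i<\tfrac12\leq r_i+2$) and then reaches $+1$ at time $\tfrac12+\tfrac32=2$, serving $f_1$ at the very last instant of its window $[0,2]$. Hence $OPT$ serves all $n$ requests, $\lambda_{OPT}^{I}=1$, and $\lambda_{ALG}^{I}/\lambda_{OPT}^{I}\leq\frac1n$, which together with the lower bound gives $\gamma^{\mathcal I_{n,0}}=\frac1n$ and the optimality of $GR_0$.

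The main obstacle is robustness against an arbitrary and possibly clever online algorithm: the adversary must be genuinely adaptive, and one has to check that every fork it creates is truly exclusive for the vehicle while the route planned for $OPT$ stays feasible through all $n$ requests in \emph{every} branch of the adversary's strategy — in particular that the interior position and the nested release times can be picked to satisfy (i), (ii) and the $OPT$-feasibility condition at once. This bookkeeping of positions and release times (which, for the analogous statements at larger delays, is where exponential thresholds such as $T_1=\frac1{2^{n-1}-2}$ come from) is the technical heart of the proof.
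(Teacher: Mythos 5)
There is a genuine gap, and it is exactly at the point you yourself flag as ``the technical heart'': the exclusivity of the forks and the feasibility of $OPT$ cannot both hold in the construction you describe. Your reduction ``we may assume $ALG$ commits to $f_1$ and serves it at time $s_1\in[1,2]$'' is not legitimate, because the later requests are released at times $r_i<-p\leq \tfrac12$, i.e.\ long before $f_1$ could be served; an online vehicle is free to abandon $f_1$ the moment it sees them. And in the concrete instance you exhibit for $OPT$-feasibility (all $n-1$ later requests at a fixed interior point $p=-\tfrac12$, released before time $\tfrac12$), a trivial online response defeats the adversary: at any release time $r_i<\tfrac12$ the vehicle is within distance $\tfrac12$ of the origin, so it can reach $p$ by time $r_i+1<r_i+2$ and, since all these requests share the location $p$ and have deadlines at least $2$, one visit serves all $n-1$ of them, giving ratio $\tfrac{n-1}{n}$, not $\tfrac1n$. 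Your adaptive clause (ii) (alternate sides when the vehicle chases) is meant to prevent this, but then the requests are no longer at a single early point and your $OPT$ route ``straight to $p$, then $+1$ at time $2$'' is never verified in those branches; the two requirements pull in opposite directions, because making a fork exclusive requires releasing the new request only when the vehicle is more than halfway to its current target (the criterion of Remark~\ref{ETA}), which pushes release times towards the vehicle's ETA and hence beyond the time budget you leave $OPT$ by insisting it finish everything by time~$2$. (There is also a small arithmetic slip: the chain $s_1+(1-p)\geq 1-p>r_i+2$ is false as written; the intended bound is $s_1+(1-p)\geq 2-p>r_i+2$, which is what $r_i<-p$ gives.)

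The paper resolves this tension differently: all requests are placed at the two endpoints, and each new request is released an arbitrarily small amount of time before the vehicle's current ETA, so that by the halfway criterion the vehicle can never pick up two of them no matter how it reacts; crucially, $OPT$ is \emph{not} required to finish by time $2$ but zigzags between the endpoints, reaching $(-1)^{i+1}$ at time $2i+1$ and catching each request late in its length-$2$ window. Exploiting the full window length (rather than compressing all releases before the deadline of $f_1$) is what makes exclusivity for the online vehicle and feasibility for $OPT$ compatible, and it is the ingredient your proposal is missing. The lower-bound half of your argument (via Lemma~\ref{GR_0 perf}) is fine and matches the paper.
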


\begin{proof}
Let us consider an online algorithm $ALG$ for the vehicle. We will describe a strategy of the adversary which limits the competitive ratio of $ALG$ to $\frac 1n$. The adversary first releases a request $f_1=(-1,[1,3],1)$. If the vehicle does not go towards $-1$, the adversary will not release any more requests, and the performance will be $0$. Thus, we may assume that the vehicle will start moving towards $-1$ at time $1$. The strategy of the adversary is as follows: let $\t_1$ be the estimated time arrival (ETA) of the vehicle, i.e. $\t_1=2$. While the vehicle maintains his course and fewer than $n$ requests have been released, the adversary releases requests $$f_{1,k}=(1,[\t_1-\frac 1{4.3^k.n},\t_1-\frac 1{4.3^k.n}+2],1)\, ,\,\, k\geq 0.$$ If at time $\t_1-\frac 1{4.3^k.n}$, the vehicle changes directions to go towards $1$, the adversary sets $\t_2$ as the new ETA in 1. He then repeats the process, with $$f_{i,k}=((-1)^{i+1},[\t_i-\frac 1{4.3^k.n},\t_i-\frac 1{4.3^k.n}+2],1)$$ where $\t_i$ is the ETA in $(-1)^i$ after the vehicle's $i$-th change of directions, until $n$ requests have been released (see Figure \ref{Fig1} for an illustration). In this manner, a total of exactly $n$ requests will be released.  Also, when request $f_{i,k}$ is released, if $k>0$, it is already too late for the vehicle to reach $f_{i,k-1}$. Thus the vehicle will serve only one request. Note that $2i-\frac i{2n}\leq \t_i\leq 2i$. Hence, $2i-\frac 12\leq \t_i\leq 2i$. So $2i+1\in [\t_i-\frac 1{4.3^k.n},\t_i-\frac 1{2.3^k.n}+2]$. Hence, an optimal offline algorithm will serve all the requests by starting at $t=0$ towards $-1$ and reaching $(-1)^{i+1}$ at time $2i+1$.
Thus, no algorithm can guarantee a competitive ratio greater than $\frac 1n$. 
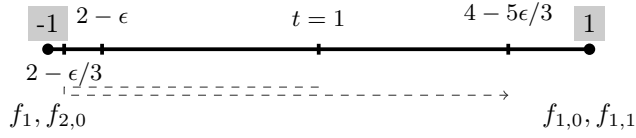
\begin{figure}
\begin{center}
\begin{tikzpicture}[scale=.72]
\draw [solid,line width=.5mm](0,0)--(10,0);
\draw [solid,line width=.5mm, color=black](5,-.1)--(5,.1) node[above=.1cm] {\small \color{black}$t=1$};
\fill(0,0) circle(3pt) node[above=.1cm, fill=black!20] {-1} node[below=.6cm] {\color{black}$f_1,f_{2,0}$};
\fill(10,0) circle(3pt) node[above=.1cm, fill=black!20] {1} node[below=.6cm] {\color{black}$f_{1,0}, f_{1,1}$};
\draw [solid,line width=.5mm, color=black](1,-.1)--(1,.1) node[above=.1cm] {\small \color{black}$2-\e$};
\draw [solid,line width=.5mm, color=black](.3,-.1)--(.3,.1) node[below=.1cm] {\small \color{black}$2-\e/3$};
\draw [solid,line width=.5mm, color=black](8.5,-.1)--(8.5,.1) node[above=.1cm] {\small \color{black}$4-5\e/3$};
\draw [color=black!80,->,dashed](5,-.7)--(.3,-.7)--(.3,-.85)--(8.5,-.85);
\end{tikzpicture}
\caption{Initial movements of the vehicle in the case where it changes direction toward $f_{1,1}$ at time $2-\e/3$. Using $\e=1/4n$, the figure displays the vehicle's position at each release time}.\label{Fig1}
\end{center}
\end{figure}
\end{proof}

\begin{remark}\label{ETA}
When a new request $f_i$ is released at time $t_i$, if $\t_i$ denotes the ETA for reaching $f_i$, then, given that the vehicle moves at speed $1$, the distance to $f_i$ at $t_i$ is $\t_i-t_i$.

Also, if the vehicle moves away from $f_i$ towards $f_k$ at time $t_i$, when the vehicle has passed half the distance to $f_k$, it is too late to serve $f_i$. Thus, if a request $f_{i+1}$ is released at the same end as $f_i$, it is impossible to serve both $f_i$ and $f_{i+1}$ if and only if $\t_k-t_{i+1}<\frac 12(\t_k-t_i)$.
\end{remark}


\begin{remark}
Note that Theorem \ref{GR_0 optimal} implies that there is no algorithm with a constant competitive ratio for $\TWOLF$ with zero delay.
\end{remark}


\section{\OPTiWinD$\;$with Large Delays}\label{T geq 1}

In this section, we will show that introducing a large delay significantly improves the performance and competitive ratio. The result shown in Theorem \ref{GR_0 optimal} indicates that the previous case is extremely unfavorable to the vehicle. However, this changes when we impose a delay between the release times of successive requests. With a delay sufficiently large, the strategy of the adversary used in the proof of Theorem \ref{GR_0 optimal} is no longer possible. Indeed, this strategy required that requests be released in rapid succession in $-1$ and $1$. 

Let us note that if $T \geq 2$ then $GR_0$ will serve all the requests. So, the only interesting case is when $T<2$. For $n\geq 1$, we define:

$$\alpha_n=\inf\limits_{\delta \in (\R_+^*)^{n}} \max \{\frac{\delta_1}{S_2}, \cdots , \frac{\delta_{i}}{S_{i+1}},\cdots, \frac{\delta_{n-1}}{S_{n}},\frac{\delta_{n}}{S_{n}}\}$$ where $S_i=\sum_{k=1}^i\d_k$. (See Appendix for the study of $(\alpha_n)$.)

\begin{theorem}\label{large delay}
For $1\leq T\leq 2-\frac 1{n-1}$, we have $\lambda^{\mathcal I_{n,T}}=\gamma^{\mathcal I_{n,T}}=\alpha_{n-\lfloor \frac{1}{2-T}\rfloor}$ and for $T\geq 2-\frac 1{n-1}$, $\lambda^{\mathcal I_{n,T}}=\gamma^{\mathcal I_{n,T}}=1$.
\end{theorem}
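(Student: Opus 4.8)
The plan is to handle the two ranges of $T$ separately and, in the main range $1\le T<2-\frac1{n-1}$, to prove the lower bound $\lambda^{\mathcal I_{n,T}}\ge\alpha_{n-p}$ by analysing the refined greedy algorithm and the matching upper bound $\gamma^{\mathcal I_{n,T}}\le\alpha_{n-p}$ via an adversary, where $p:=\lfloor\frac1{2-T}\rfloor$, so that $2-\frac1p\le T<2-\frac1{p+1}$ and, since $T<2-\frac1{n-1}$, $1\le p\le n-2$. Because $\lambda^{\mathcal I_{n,T}}\le\gamma^{\mathcal I_{n,T}}$ always holds, and because $\lambda_{OPT}^I\le1$ upgrades any performance guarantee for an algorithm into the same competitive-ratio guarantee, these two inequalities combine to $\lambda^{\mathcal I_{n,T}}=\gamma^{\mathcal I_{n,T}}=\alpha_{n-p}$; the case $T\ge 2-\frac1{n-1}$ drops out as the limit $p=n-1$, $\alpha_1=1$ of the same analysis.

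For the lower bound I would study the refined greedy algorithm: it always heads toward a heaviest outstanding request it can still reach, and when it has no such target it repositions toward the endpoint it is not currently occupying. The key fact, proved by induction on the requests using the ETA criterion of Remark~\ref{ETA}, is that with delay $T$ this preemptive repositioning lets the vehicle ``keep up'' with and serve the first $p$ requests regardless of the adversary's placement — the extremal instance being alternating endpoints at spacing exactly $T$, where the $i$-th request is reached at a time lying inside its window precisely because $T\ge 2-\frac1{i-1}$ for every $i\le p+1$. From request $p+1$ on the vehicle is permanently one transit behind, so at each later stage it faces a forced choice between two competing requests at opposite endpoints, and, since $T<2$, committing to one of them forfeits every later request at the other endpoint. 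Following the weight of the request the vehicle ultimately keeps, and invoking the definition of $\alpha_{n-p}$ (for any positive weights, $\max\{\delta_1/S_2,\dots,\delta_{n-p-1}/S_{n-p},\delta_{n-p}/S_{n-p}\}\ge\alpha_{n-p}$), shows the algorithm always serves total weight at least $\alpha_{n-p}S_n$. When $T\ge 2-\frac1{n-1}$ the same keep-up argument extends to all $n$ requests, so everything is served and the value is $1$.

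For the upper bound I would run an adaptive adversary against an arbitrary online $ALG$, with a vanishing parameter $\eta>0$. It first releases $p$ requests of weight $\eta$ at alternating endpoints spaced exactly $T$ apart; by the keep-up analysis these only pin the vehicle to a known endpoint at a known time, at negligible cost. It then releases the remaining requests, essentially again at alternating endpoints, with weights $(\delta_1,\dots,\delta_{n-p})$ chosen to attain the infimum in $\alpha_{n-p}$ up to $\eta$, emitting each new one precisely when $ALG$ is caught out of position relative to the previous one: if $ALG$ ever commits to the $i$-th heavy request the adversary emits one more and stops (possible since $p+i+1\le n$ whenever $i\le n-p-1$), forcing served/total $\to\delta_i/S_{i+1}$; if $ALG$ dodges through all of them it is trapped into keeping only one of the last two, i.e.\ $\delta_{n-p-1}/S_{n-p}$ or $\delta_{n-p}/S_{n-p}$. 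One must also check that $OPT$, knowing the full instance, visits all emitted points within their windows (this forces some release gaps to be taken slightly above $T$), so that $\lambda_{OPT}^I=1$ and the ratio is bounded by one of the terms in the $\alpha_{n-p}$ maximum; letting $\eta\to0$ and then optimising over instances and algorithms gives $\gamma^{\mathcal I_{n,T}}\le\alpha_{n-p}$.

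The main obstacle is the position-and-timing bookkeeping common to both directions: one must show rigorously that, because $T<2$, committing to a request at one endpoint forfeits all later requests at the other, and that the number of requests served ``for free'' before this forfeiture takes effect is exactly $p=\lfloor\frac1{2-T}\rfloor$ and not $p\mp1$ — essentially a careful case analysis around the ETA inequality $\tau_k-t_{i+1}<\tfrac12(\tau_k-t_i)$. A further subtlety is making the adversary robust against a vehicle that departs from the obvious greedy behaviour (repositioning differently, or serving out of release order) and against requests placed strictly inside $(-1,1)$, the latter dispatched by verifying that the two extreme points are the worst locations.
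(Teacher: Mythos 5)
Your overall architecture (greedy lower bound plus adversary upper bound, organised around $\alpha_{n-p}$ with $p=\lfloor\frac1{2-T}\rfloor$, then monotonicity for $T\ge 2-\frac1{n-1}$) is the paper's, but both of your key steps have genuine gaps. For the lower bound, your algorithm differs from the paper's $GR_1$ in two ways that each destroy the bound. First, ``head toward a heaviest reachable request'' does not achieve $\alpha_m$ on a chain of $m$ mutually exclusive requests: with weights $1,1+\epsilon,(1+\epsilon)^2,\dots$ released so that each new request conflicts with the current target (exactly the chains the adversary can build, as in the paper's own construction), heaviest-first switches every time and serves only the last request, giving a ratio tending to $\frac1m$, which is strictly below $\alpha_m$ for $m\ge3$ (e.g.\ $\alpha_3=1/\varphi^2>\frac13$). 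The fact you invoke — that $\max_i\,\delta_i/S_{i+1}\ge\alpha_{n-p}$ for every weight vector — only says that a good commitment point exists in the chain; an online rule must locate it, and the paper's $GR_1$ does so by committing to $f_i$ precisely when $\delta_i/S_{i+1}\ge\alpha_{n-p}$, a threshold test your rule lacks. Second, your idle rule ``reposition toward the endpoint not currently occupied'' is exploitable: the adversary may wait arbitrarily long, let the vehicle walk to an endpoint, and only then start an alternating conflict chain of length up to $n$ (or $n-1$ after the first request), so even a perfect chain policy would guarantee only about $\alpha_{n-1}<\alpha_{n-p}$ once $p\ge2$. The paper's $GR_1$ parks at the origin when idle; that is what guarantees no forced choice can arise before the $(p+2)$-th request of a phase, hence chains of length at most $n-p$.

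For the upper bound, the ``pinning'' phase with $p$ requests of equal vanishing weight $\eta$ pins nothing: the vehicle can ignore them all at cost $O(\eta)$ and stay central, after which your heavy chain cannot hold it to $\alpha_{n-p}$ (for $n=3$, $T=1$, a vehicle skipping the $\eta$-request serves both heavy requests and gets $1-O(\eta)$ instead of $\alpha_2=\frac12$); and a stop-on-skip rule with equal weights also fails, since skipping only the last small request already yields $\frac{p-1}{p}\ge\frac12\ge\alpha_{n-p}$. The paper's device is geometrically escalating initial weights $4^{i-1}\epsilon$ combined with stopping at the first skip, so that any skip caps the ratio at $\frac{4^{i-1}-1}{4^i-1}<\frac14<\alpha_{n-p}$ (using the appendix fact that $\alpha_m>\frac14$); only this genuinely forces the vehicle out to the alternating endpoints before the heavy phase with the weights realising $\alpha_{n-p}$. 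Your heavy-phase trap and the concluding limit case match the paper, but without the threshold rule, the park-at-origin rule, and the escalating forced prefix, neither $\lambda^{\mathcal I_{n,T}}\ge\alpha_{n-p}$ nor $\gamma^{\mathcal I_{n,T}}\le\alpha_{n-p}$ is established.
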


\begin{proof}
First, we will give a strategy of the adversary with a parameter $\e>0$ which limits the competitive ratio to $\alpha_{n-\lfloor \frac{1}{2-T}\rfloor}+O(\epsilon)$, when $\epsilon \rightarrow 0$. Then, we will give a greedy algorithm for the vehicle which guarantees a performance of $\alpha_{n-\lfloor \frac{1}{2-T}\rfloor}$.

\pagebreak

\begin{enumerate}
    \item Let $\epsilon >0$. The strategy of the adversary is defined as follows:
    \begin{enumerate}
        \item First, release a request $f_1=(-1,[1,3],\epsilon)$.
        \item If the vehicle does not serve $f_1$, do not release any more requests.
        \item Similarly, while the vehicle serves all the requests up to $f_{i-1}$ and $i\leq i_0=\lfloor \frac{1}{2-T}\big\rfloor$, release a request $$f_i=((-1)^i,[1+(i-1)T,3+(i-1)T],4^{i-1}\epsilon).$$
        \item If the vehicle serves $f_{i_0}$, release a request $$f_{i_0+1}=((-1)^{i_0+1},[1+i_0T,3+i_0T],1).$$
        \item Then, while the vehicle does not serve any more requests and $i\leq n$, release a request $$f_i=((-1)^i,[2i-2-\eta_i,2i-\eta_i],\delta_{i-i_0})$$ where $(\delta_i)$ realises $\alpha_{n-\lfloor \frac{1}{2-T}\rfloor}$ (see appendix) and $\eta_i=\frac{i-i_0}n-\frac 3{2n}$.
        \item When the vehicle serves a request $f_i$, $i>i_0$, do not release any more requests.
    \end{enumerate}
    
    Note that $\forall i\leq i_0+1$, $2i-1\in [1+(i-1)T,3+(i-1)T]$. Since $i_0\geq 1$, we have $\eta_i\leq \eta_n\leq 1-\frac5{2n}<1$. So, $\forall i\geq i_0+2$, $2i-1\in [2i-2-\eta_i,2i-\eta_i]$. Therefore, the optimal offline algorithm will serve all the requests by being in $(-1)^i$ at time $2i-1$. Also, the values of $\eta_i$ have been chosen so that for $i\geq i_0+2$, $t_i=\t_{i-1}-\frac1{2n}$.
    
    Let us now consider an online algorithm, $ALG$. When the adversary uses the above strategy, if $ALG$ does not allow the vehicle to serve $f_i$ for $i\leq i_0$, then the performance is $\frac{\sum\limits_{k=1}^{i-1} 4^{k-1}\epsilon}{\sum\limits_{k=1}^{i} 4^{k-1}\epsilon}=\frac{4^{i-1}-1}{4^i-1}<\frac 14$. It is shown in Proposition~\ref{alpha decreases} in the appendix that $\frac14<\alpha_{n-\lfloor \frac{1}{2-T}\rfloor}$.
    
    If $ALG$ allows the vehicle to serve the first $i_0$ requests, the vehicle will then reach at most one $f_i$ with $i>i_0$. Its performance will then be 
    \begin{eqnarray*}
    \frac{\delta_{i-i_0}+\sum\limits_{k=1}^{i_0} 4^{k-1}\epsilon}{\sum\limits_{k=1}^{i-i_0+1}\delta_k+\sum\limits_{k=1}^{i_0} 4^{k-1}\epsilon}&=&\frac{\delta_{i-i_0}}{S_{i-i_0+1}}+O(\epsilon)\\
    &=&\alpha_{n-\lfloor \frac{1}{2-T}\rfloor}+O(\epsilon).
    \end{eqnarray*}
    Therefore, $\gamma^{\mathcal I_{n,T}}\leq\alpha_{n-\lfloor \frac{1}{2-T}\rfloor}$.
    
    \pagebreak
    
    \item The greedy algorithm $GR_1$ is defined as follows:
    \begin{enumerate}
        \item When there is no request to be served, or none that can be reached in time, head towards $0$.
        \item While a single request that can be served is ongoing, head towards it.
        \item When a request $f_2=(x_2,[t_2,t_2+2],\delta_2)$ is released while another request $f_1=(x_1,[t_1,t_1+2],\delta_1)$ is still reachable, if $\frac{\delta_1}{\delta_1+\delta_2}\geq \alpha_{n-\lfloor \frac{1}{2-T}\rfloor}$, head towards $f_1$; else head towards $f_2$. From this point on, until a request is served, number the requests $f_1$ to $f_d$ and denote by $\delta_i$ the weight of $f_i$ and $S_i=\sum_{k=1}^i \delta_k$. When $f_{i+1}$ is released, if $\frac {\delta_i}{S_{i+1}}\geq \alpha_{n-\lfloor \frac{1}{2-T}\rfloor}$, head towards $f_i$; else, head towards $f_{i+1}$.
        
    \end{enumerate}
    
    We will divide the game into phases according to when the vehicle is either in cases a) and b) or in case c). 
    
    In the phases corresponding to a) and b), the vehicle will serve all the requests, and the performance over those phases of the game will be $1$.
    
    If at some point during the game, a sequence of requests $f_1,\ldots,f_d$ places the vehicle in case c), we will show that the performance will be greater or equal to $\alpha_{n-\lfloor \frac{1}{2-T}\rfloor}$ over this phase. With a delay of $T$ and the vehicle applying algorithm $GR_1$, the first time case c) may occur is for the $(i_0+1)$-th and $(i_0+2)$-th requests, where $i_0=\lfloor \frac{1}{2-T}\big\rfloor$. Hence, $d\leq n-i_0$. Note that when a request $f_i=(x_i,[t_i,t_i+2],\delta_i)$ is served at time $t$, since $T\geq 1$ and the time windows have length $2$, at most one request may have been released between $t_i$ and $t$. It follows that when $f_i$ is reached, the performance over that phase of the game is $\frac{\delta_i}{S_i}$ or $\frac{\delta_i}{S_{i+1}}$. By definition of $(\alpha_n)$, this performance is at least $\alpha_{n-i_0}$.
    
    Hence, in all phases of the game, the performance is at least $\alpha_{n-\lfloor \frac{1}{2-T}\rfloor}$. Therefore, $\lambda_{ALG}^{\mathcal I_{n,T}}\geq \alpha_{n-\lfloor \frac{1}{2-T}\rfloor}$.
    
    It follows from 1. and 2. that we have $\lambda^{\mathcal I_{n,T}}=\gamma^{\mathcal I_{n,T}}=\alpha_{n-\lfloor \frac{1}{2-T}\rfloor}$ for $1\leq T\leq 2-\frac 1{n-1}$.
    
    Hence, $\lambda^{\mathcal I_{n,2-\frac 1{n-1}}}=\gamma^{\mathcal I_{n,2-\frac 1{n-1}}}=\alpha_1=1$. Therefore, for $T\geq 2-\frac 1{n-1}$, we have $\lambda^{\mathcal I_{n,T}}=\gamma^{\mathcal I_{n,T}}=1$.
\end{enumerate}
\end{proof}

\section{\OPTiWinD$\;$with Small Delays}\label{small delays}

In this section, we will consider cases where $T$ is small enough that the results from Section \ref{T=0} still apply.

\begin{theorem}\label{frac 1n}
For $T<T_0=\frac 1{2^{n-3}+1}$, we have $\lambda^{\mathcal I_{n,T}}=\frac 1n$.
\end{theorem}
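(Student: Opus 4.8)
\emph{The lower bound} $\lambda^{\mathcal I_{n,T}}\ge\frac1n$ is immediate and holds for every $T\ge 0$: since $\mathcal I_{n,T}\subseteq\mathcal I_{n,0}$, Lemma~\ref{GR_0 perf} gives $\lambda^{\mathcal I_{n,T}}_{GR_0}\ge\lambda^{\mathcal I_{n,0}}_{GR_0}=\frac1n$ (equivalently, $\lambda^{\mathcal I_{n,T}}\ge\lambda^{\mathcal I_{n,0}}=\frac1n$ by monotonicity in $T$ and Theorem~\ref{GR_0 optimal}). So the whole point is the matching upper bound for $T<T_0$: against any online algorithm $ALG$ I must produce an instance $I\in\mathcal I_{n,T}$ made of $n$ unit-weight requests on which the vehicle serves at most one request, which forces $\lambda^I_{ALG}=\frac1n$ and hence $\lambda^{\mathcal I_{n,T}}\le\frac1n$.

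\emph{The adversary} I would use is the one from Theorem~\ref{GR_0 optimal}, with its geometric release schedule (now inadmissible, since consecutive requests must be $\ge T$ apart) replaced by one respecting the delay. Release $f_1=(-1,[1,3],1)$; a vehicle not heading straight for $-1$ from the centre is only worse off, so assume its ETA at $-1$ is $\tau_1=2$. The play then proceeds in \emph{phases}: during a phase the adversary keeps releasing ``decoys'' at the end opposite the vehicle's current destination, with release times pushed as close as the delay allows to just below the vehicle's current ETA; as soon as the vehicle turns toward a fresh decoy a new phase opens with the recomputed ETA and the decoys switch ends; the adversary stops after $n$ requests. The release times are dictated by two invariants: (i) if the vehicle keeps its course and serves its current target, then every decoy released so far has a window closing strictly before the vehicle could reach the opposite end, so it is lost; (ii) the moment the vehicle commits toward a decoy, all earlier requests become unreachable. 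Maintaining (i)--(ii) keeps the vehicle down to at most the one request it finally chases.

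\emph{The threshold} $T_0$ comes out of the worst branch, in which the vehicle never turns and all $n-1$ decoys must be packed at $+1$ inside $[1+T,2)$ with release times $s_0<\dots<s_{n-2}$. Invariant~(i) requires $s_{n-2}<2$; invariant~(ii) requires $s_j>\tfrac{s_{j-1}}2+1$ for each $j$ (otherwise, by turning to $+1$ when $f_{1,j}$ appears, the vehicle---which reaches $+1$ at time $2s_j$---would also catch $f_{1,j-1}$); and the delay requires $s_0\ge 1+T$ and $s_j\ge s_{j-1}+T$. The extremal schedule $s_j=\max\{s_{j-1}+T,\ \tfrac{s_{j-1}}2+1\}$ satisfies $s_j=2-2^{-j}(1-T)$ as long as the first term is not the maximum, i.e.\ as long as $T<\frac1{2^{j}+1}$; a short induction then shows that $s_{n-2}<2$ is attainable precisely when $(2^{n-3}+1)T<1$, i.e.\ $T<T_0$. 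For the branches where the vehicle does turn, phases $2,3,\dots$ obey recursions of the same type (now also constrained by the window of $f_1$) but with fewer decoys per phase, hence all satisfied once $T<T_0$. The delicate part is checking this whole case analysis---above all that the vehicle can never ``double back'' to collect a second request---which is made fiddly by the time windows having length exactly the diameter, so that every reachability inequality is tight and leaves no slack.
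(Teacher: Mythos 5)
Your lower bound is exactly the paper's (Lemma~\ref{GR_0 perf}), and your analysis of the never-turn branch is correct and essentially reproduces the paper's construction: the extremal schedule $s_j=\max\{s_{j-1}+T,\ \tfrac{s_{j-1}}2+1\}$ starting from $s_0=1+T$ is the same ``halve the remaining time to the ETA, except take the bare delay $T$ when that is the binding constraint'' schedule that the paper writes in closed form ($t_2,\dots,t_{n-1}$ follow the halving pattern and $t_n=t_{n-1}+T$), and your feasibility computation $s_{n-2}<2\iff T<T_0$ is right. It is also to your credit that your greedy schedule enforces the kill inequality at \emph{every} step, including the last one where the delay term binds; this is genuinely needed, since otherwise the vehicle could turn at the last release and collect the last two decoys together.

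The genuine gap is in the branches where the vehicle does turn. You propose a fully adaptive, phase-based adversary (in the spirit of Theorems~\ref{GR_0 optimal} and~\ref{GREE0 superoptimal}) and then assert that ``phases $2,3,\dots$ obey recursions of the same type \dots hence all satisfied once $T<T_0$'', explicitly leaving the case analysis unchecked. That step is not automatic. For instance, if the vehicle turns towards $+1$ at time $s$, a fresh decoy released at $-1$ at any time $u\le s+\tfrac12$ lets the vehicle double back and serve both it and $f_1$ (whose window runs to $3$); so every new phase opens with a mandatory waiting period, i.e.\ its first request is itself subject to a kill constraint against the still-open requests at that end, and the packing of the remaining decoys before the new ETA $2s$ must be re-verified with the delay constraint. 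Pure halving in such a phase only works for $T\le\frac1{2^{n-2}-2}$, which is strictly below $T_0$ once $n\ge5$, so one must redo the ``last gap equals $T$'' argument phase by phase; it does go through, but it is real work, not a corollary of the first branch. The paper avoids all of this with a dominance observation you did not use: when the vehicle must choose between two symmetrically placed unit-weight requests that cannot both be served, it may be assumed to head for the closer one; since each decoy, at its release time, already kills its predecessor, a rule-following vehicle never turns, and the single never-turn schedule (yours or the paper's) settles the upper bound with a one-line induction. Either import that reduction, or carry out the per-phase verification you postponed; as written, your upper bound is not established against algorithms that turn.
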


\begin{proof}
When the vehicle has to choose between two symmetrically placed requests of equal weights and cannot serve both, it is optimal to go towards the closest one. We will restrict our description of the strategy of the adversary by assuming that the vehicle follows this rule.

The strategy of the adversary is to release requests $$f_i=(-1,[t_i,t_i+2],1)$$ where $t_1=1$, $t_i=2-\frac{1-T}{2^{i-2}}+\frac{1-T}{2^{n-2}}-\frac T2$ for $2\leq i\leq n-1$, and $t_n=t_{n-1}+T$.
Note that $T<\frac 1{2^{n-3}+1}$ implies $T<\frac{1-T}{2^{n-3}}$. Hence, $t_{i+1}-t_i\geq T$ for all $1\leq i\leq n-1$. 

Let us show by induction that if the player follows the rule given at the beginning of this proof, starting from $t=1$, the vehicle will always move towards $-1$. This is true when the request $f_1$ appears. Let us assume that it moved towards $-1$ until $f_{i+1}$ appears. Thus, when $f_{i+1}$ is released, the vehicle is located at $1-t_{i+1}$. Hence, the distance between its current position and $1$ is $t_{i+1}$. Note that for $2\leq i \leq n-1$, we have $\frac 12(2-t_i)=\frac{1-T}{2^{i-1}}-\frac{1-T}{2^{n-1}}-\frac T4>2-t_{i+1}$. Hence, $2t_{i+1}>t_i+2$. Therefore, when $f_{i+1}$ is released, it is too late to serve $f_i$. Thus, the vehicle will continue towards $-1$. 

So, this strategy is well defined for optimal play. Also, the vehicle will only be able to serve a single request, and $n$ requests will be released.  Hence, the performance is at most $\frac 1n$. Since the algorithm $GR_0$ guarantees at least $\frac 1n$, we have an equality.
\end{proof}




\begin{theorem}\label{GREE0 superoptimal}
For $n\geq 3$ and $T<T_1=\frac 1{2^{n-1}-2}$, the algorithm $GR_0$ is optimal and $\gamma^{\mathcal I_{n,T}}=\frac 1n$.
\end{theorem}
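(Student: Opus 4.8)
The plan is to pair an immediate lower bound with an adversary argument that refines the ``chase'' used for Theorem~\ref{GR_0 optimal}, the only new ingredient being that consecutive releases must be kept at least $T$ apart.

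For the lower bound there is essentially nothing to do: by Lemma~\ref{GR_0 perf} (and $\mathcal I_{n,T}\subseteq\mathcal I_{n,0}$), on every $I\in\mathcal I_{n,T}$ the vehicle running $GR_0$ serves the request of largest weight and hence collects at least a $\tfrac1n$ fraction of the total weight; since $\lambda_{OPT}^I\le 1$ this is at least $\tfrac1n\lambda_{OPT}^I$, so $\gamma^{\mathcal I_{n,T}}\ge\gamma_{GR_0}^{\mathcal I_{n,T}}\ge\tfrac1n$. Consequently the statement reduces to proving $\gamma^{\mathcal I_{n,T}}\le\tfrac1n$, and once that is done the matching value forces $GR_0$ to be optimal.

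For the upper bound I would, given an arbitrary online algorithm $ALG$, construct an instance of $\mathcal I_{n,T}$ with $n$ requests all of weight $1$ on which $ALG$ serves at most one request while $OPT$ serves all $n$, so that $\lambda_{ALG}^I/\lambda_{OPT}^I\le\tfrac1n$. The instance is produced by an adaptive adversary. It first releases a weight-$1$ request $f_1$ with window $[1,3]$ at whichever endpoint is farthest from the vehicle's position at time~$1$: this is at distance $2$ if $ALG$ has driven the vehicle to an endpoint, but only distance $1$, giving the adversary the least room, if $ALG$ hedged at $O$; write $\tau_1\in[2,3]$ for the corresponding earliest arrival time, the tight case being $\tau_1=2$. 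If the vehicle is not going to serve $f_1$, the adversary stops (ratio $0$); otherwise, exactly as in Theorem~\ref{GR_0 optimal}, it keeps dangling a new weight-$1$ request at the endpoint opposite the vehicle's current heading, each released at the \emph{latest} moment that is (i) at least $T$ after the previous release and (ii) late enough that, by the half-distance criterion of Remark~\ref{ETA}, a vehicle turning around there can no longer reach the previously dangled same-side request. Whenever the vehicle does turn around, the adversary records the new earliest arrival time and restarts the procedure from the other endpoint. Whatever the vehicle does, exactly $n$ requests are released, and since every new release invalidates the preceding same-side offer the instant the vehicle reverses, $ALG$ serves at most one of them; meanwhile the releases and the prescribed length-$2$ windows are positioned so that the offline back-and-forth sweep $O\to-1\to+1\to-1\to\cdots$ --- visiting $\pm1$ at the successive times $1,3,5,\dots$ --- meets every request inside its window, so $\lambda_{OPT}^I=1$.

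What remains is to check that (i) and (ii) can be met simultaneously. Condition~(ii) forces the successive ``slacks'' --- the time still left to reach the current target when the next request appears --- to more than halve at each step, while condition~(i) makes each step consume at least $T$ of the interval $[1,\tau_1]$, of length only $1$ in the tight case, that precedes the vehicle's first arrival. Summing the resulting geometric cascade over the (at most) $n-1$ wasted requests, everything fits exactly when $T(2^{n-1}-2)<1$, i.e. $T<T_1$; the hypothesis $n\ge 3$ is what makes the cascade long enough for this to be the binding constraint and keeps $T_1<1$. I expect the real work to lie in two places. First, showing rigorously that $ALG$ cannot beat one served request against \emph{any} motion --- including oscillating beside an endpoint or feinting and returning --- the point being that serving any dangled request requires a genuine excursion to its endpoint which, the diameter being exactly the window length, leaves the vehicle too late for every other live request, while the adversary's continuation at the far endpoint keeps it perpetually one step behind. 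Second, the bookkeeping that the worst reversal pattern of the vehicle is precisely the one producing the threshold $T_1=\tfrac1{2^{n-1}-2}$ --- the extra per-step $T$-spacing being exactly what separates $T_1$ from the larger threshold $T_0$ of Theorem~\ref{frac 1n} that already suffices for the performance bound alone.
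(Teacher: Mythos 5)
Your overall plan is the paper's: the lower bound is exactly Lemma \ref{GR_0 perf}, and the upper bound is an adaptive zig-zag adversary that dangles $T$-spaced, weight-$1$ requests at the endpoint opposite the vehicle's heading, with slacks that more than halve so that each new release invalidates the previous same-side offer, plus a final $n$-th request, while the offline sweep serving $\pm1$ at times $1,3,5,\dots$ catches everything. However, the step you yourself flag as ``the bookkeeping'' is precisely where the proof lives, and your sketch of it does not hold up. First, the schedule is not actually defined: ``the latest moment that is (i) at least $T$ after the previous release and (ii) late enough that turning there strands the previous same-side request'' imposes two \emph{lower} bounds on the release time, so ``latest'' is meaningless without an upper bound; and releasing as late as possible is exactly what you cannot do, because the adversary must \emph{reserve} slack for the dangles it may still need if the vehicle never turns. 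The paper resolves this by pre-planning the slacks as a doubling cascade, $\t_i-2^{n-2-i-k}(T+2\epsilon)+\epsilon$, calibrated to the number of requests remaining; some such explicit reservation rule is an essential missing ingredient, not a detail.

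Second, your explanation of where $T_1$ comes from points at the wrong constraint. Fitting $T$-spaced, halving slacks into the length-$1$ interval $[1,\t_1]$ before the vehicle's first arrival is the computation behind Theorem \ref{frac 1n}, and it yields $(2^{n-3}+1)T<1$, i.e.\ the threshold $T_0$, not $(2^{n-1}-2)T<1$. The binding constraint for the competitive ratio is different: it is offline feasibility \emph{after turns}. When the vehicle turns at a dangle with slack $s$, the next ETA satisfies $\t_{i+1}=\t_i+2-2s$, so every subsequent ETA --- and hence every subsequent release, which is anchored to the ETA --- moves earlier by $2s$; for the window of a later request to still contain the odd time $2i+1$ one needs (twice the sum of the slacks at which turns occurred) plus the current slack to stay below $1$. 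The worst case is the vehicle turning at the first (largest-slack) dangle of every cascade, i.e.\ $n-2$ turns, giving $2(2^{n-3}+2^{n-4}+\cdots+1)T=(2^{n-1}-2)T<1$, which is exactly the paper's computation in part c) and the only place the hypothesis $T<T_1$ is used. Without this doubling-per-turn accounting (and the explicit final request released at slack $\epsilon$ after the last turn, which is where the bound is tight), your argument establishes the construction only for $T$ below $T_0$-type thresholds and does not prove the stated theorem.
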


\begin{proof}
Let us consider an online algorithm $ALG$ for the vehicle. We will describe a strategy of the adversary which limits the competitive ratio of $ALG$ to $\frac 1n$. The adversary first releases a request $f_1=(-1,[1,3],1)$. If the vehicle does not go towards $-1$, the adversary will not release any more requests, and the performance will be $0$. Thus, we may assume that the vehicle will start moving towards $-1$ at time $1$. The strategy of the adversary is as follows: 
\begin{enumerate}
\item let $\t_1$ be the estimated time arrival (ETA) of the vehicle in $-1$, i.e. $\t_1=2$. 
\item While the vehicle maintains its course and fewer than $n-1$ requests have been released, the adversary releases requests $$f_{1,k}=(1,[\t_1-2^{n-3-k}(T+2\epsilon)+\epsilon,\t_1-2^{n-3-k}(T+2\epsilon)+\epsilon+2],1)$$ for $0\leq k\leq n-3$ and some $0<\epsilon<T$. 
\item If at time $\t_1-2^{n-3-k}(T+2\epsilon)+\epsilon$, the vehicle chooses to change directions to go towards $1$, the adversary sets $\t_2$ as the new ETA in 1. 
\item He then repeats the process, with $$f_{i,k}=((-1)^{i+1},[\t_i-2^{n-2-i-k}(T+2\epsilon)+\epsilon,\t_i-2^{n-2-i-k}(T+2\epsilon)+\epsilon+2],1)$$ where $\t_i$ is the ETA in $(-1)^i$ after the vehicle's $i$-th change of directions, until $n-1$ requests have been released.  
\end{enumerate}
In this manner, a total of exactly $n-1$ requests will be released.  If the vehicle changed directions $j-1$ times in total, the $n$-th request will be $$f_n=((-1)^{j+1},[\t_j-\epsilon,\t_j-\epsilon+2],1).$$

We will show that a)  this strategy satisfies the condition regarding the delay $T$, b) the vehicle can only serve a single request and c) for $\epsilon$ small enough the optimal offline algorithm can serve all the requests.

\begin{enumerate}[label=\alph*)]
    \item Note that the delay between $f_{i,k}$ and $f_{i,k+1}$ is greater than $T$ (even $T+2\epsilon$), and the delay between $f_n$ and $f_{j,k}$, for the last value of $k$, is at least $T$. 
    \item When request $f_{i,k}$ is released, if $k>0$, it is already too late for the vehicle to reach $f_{i,k-1}$. Similarly, when $f_n$ is released, it is too late to serve $f_{j,k}$. Thus the vehicle will serve only one request. 
    \item If the vehicle changes direction when $f_{i,0}$ is released, the new ETA in $(-1)^{i+1}$ is   $\t_i+2-2^{n-1-i}(T+2\epsilon)+2\epsilon$. Hence, $$\t_{i+1}\geq \t_i+2-2^{n-1-i}(T+2\epsilon)+2\epsilon=\t_i+2-2^{n-1-i}T+O(\epsilon).$$ Therefore, we have $\t_i\geq 2i-(2^{n-1}-2^{n-i})T +O(\epsilon)$. Thus, the time window of $f_{i,k}$ closes at 
    $$\t_i-2^{n-2-i-k}T+2+O(\epsilon)\geq 2i +2-(2^{n-1}-2^{n-i}-2^{n-2-i-k})T+O(\epsilon).$$
    Since $2^{n-i}+2^{n-2-i-k}>2$, we have $(2^{n-1}-2^{n-i}-2^{n-2-i-k})T<1$. Hence, for $\epsilon$ small enough, $2i+1$ is in the time window of $f_{i,k}$. Similarly, the time window of $f_n$ closes at $$\t_j-\epsilon+2+O(\epsilon)\geq 2j +2-(2^{n-1}-2^{n-j})T+O(\epsilon)\geq 2j +2-(2^{n-1}-2)T+O(\epsilon).$$ Hence, for $\epsilon$ small enough, $2j+1$ is in the time window of $f_n$. Therefore, the optimal offline algorithm will serve all the requests by starting at $t=0$ towards $-1$ and reaching $(-1)^{i+1}$ at time $2i+1$.
\end{enumerate}
Thus no algorithm can obtain a competitive ratio greater than $\frac 1n$.
\end{proof}


\section {\OPTiWinD$\;$with Medium Delays}

\subsection{Tightness of the bounds $T_0$ and $T_1$}

In this section, we will show that when the delay $T$ is at least $T_0$ (resp. $T_1$), the optimal performance (resp. competitive ratio) is greater than $\frac1n$, meaning that the boundaries $T_0$ and $T_1$ are tight.

In order to show that the boundary $T_0$ is tight, we will use the following lemma, inspired by the proof of Theorem \ref{frac 1n}.

\begin{lemma}\label{2 fires}
Let $T\geq \frac 1 {2^{n-3}+1}$. Starting at time 1, if the vehicle, initially at 0, always moves towards the request $f_1=(-1,[1,3],\delta_1)$, and if $n-1$ other requests are released before time 2, there will be a moment when the vehicle has the possibility of serving two requests.
\end{lemma}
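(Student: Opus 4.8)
The plan is to read everything off the vehicle's committed trajectory and to reduce the statement to a counting argument about how fast the residual travel time to $-1$ can shrink. Since the vehicle is at $0$ at time $1$ and never changes course, at every time $t\in[1,2]$ it sits at position $1-t$ and its ETA for $-1$ is the constant $\tau=2$; in particular, from this configuration it cannot be at either endpoint of $[-1,1]$ before time $2$. The first step is the structural observation that any route serving two requests must serve two requests located at the same endpoint: a route visiting a request at $-1$ and a request at $+1$ has to traverse the whole segment of length $2$ between those two visits, so — the first of the two visits being no earlier than time $2$ — the second visit happens at time $\ge 4$, which is too late since every request window here ends before time $4$ (that of $f_1$ at $3$, that of $f_i$, $i\ge 2$, at $t_i+2<4$).

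Then I would split into two cases according to how the requests are distributed between $-1$ and $+1$. Let $t_2<\dots<t_n$ be the release times of the other requests, all in $(1,2)$, with $t_2-t_1\ge T$ and $t_{i+1}-t_i\ge T$. If at least two requests lie at $-1$, say $f_i,f_j$ with $i<j$, then at time $t_j$ the vehicle is at $1-t_j$, and by simply keeping course it reaches $-1$ at time $2\in[t_i,t_i+2]\cap[t_j,t_j+2]$ (as $1\le t_i,t_j<2$); so at that moment it has the possibility of serving both, and we are done. Otherwise $f_1$ is the only request at $-1$ and $f_2,\dots,f_n$ all lie at $+1$; here I would apply Remark~\ref{ETA} with $f_k=f_1$ (so $\tau_k=2$): for consecutive requests $f_i,f_{i+1}$, turning around at time $t_{i+1}$ lets the vehicle serve both unless $2-t_{i+1}<\tfrac12(2-t_i)$.

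The heart of the argument is to show that the delay assumption prevents this inequality from holding for all $i$. Suppose it did; put $w_i:=2-t_i>0$, the residual travel time to $-1$ at the instant $f_i$ appears, so that $w_{i+1}<w_i/2$ for $i=2,\dots,n-1$ and hence $w_{n-2}<w_2/2^{\,n-4}$. On the other hand the delay gives $w_2=2-t_2\le 1-T$, and, following the two steps $t_{n-2}\to t_{n-1}\to t_n$, $w_{n-2}=w_n+(t_n-t_{n-2})\ge w_n+2T>2T$ since $t_n<2$. Combining, $2T<w_{n-2}<(1-T)/2^{\,n-4}$, that is $(2^{\,n-3}+1)\,T<1$, i.e.\ $T<\tfrac1{2^{n-3}+1}$, contradicting the hypothesis. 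Hence some consecutive pair $f_i,f_{i+1}$ at $+1$ can be served together, and at time $t_{i+1}$ the vehicle has the possibility of serving two requests.

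I expect the last step to be the real obstacle, and it is exactly where the constant $T_0=\frac1{2^{n-3}+1}$ is forced: the geometric relation $w_{i+1}<w_i/2$ says the residual time to $-1$ more than halves at each release — on its own consistent with arbitrarily small $T$ — so one must race this halving against the fact that each of the $n-1$ releases crammed into the unit interval $(1,2)$ costs at least $T$ of residual time, and it is the collision of these two constraints that pins the threshold down. (For $n\le 4$ the hypothesis is empty, since $n-1$ release times with consecutive gaps $\ge T_0$ cannot all precede time $2$, so one may assume $n\ge 5$, hence $n-4\ge 1$, throughout.)
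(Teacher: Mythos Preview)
Your proof is correct and follows essentially the same route as the paper's: reduce to the case where $f_2,\dots,f_n$ all sit at $+1$, observe via Remark~\ref{ETA} that failing to serve any consecutive pair forces the halving $2-t_{i+1}<\tfrac12(2-t_i)$, and then collide this geometric decay with the delay constraint to contradict $T\ge T_0$. The only cosmetic differences are that the paper dispatches the ``some other request lies at $-1$'' case in a single clause (``we may assume\dots''), and that it closes the contradiction by bounding $w_{n-1}<T$ and hence $t_n>2$, whereas you sandwich $w_{n-2}$ between $2T$ and $(1-T)/2^{n-4}$; these are equivalent bookkeepings of the same inequality $(2^{n-3}+1)T<1$. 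Your opening structural observation (that no route can serve one request at each endpoint) is correct but not needed for the argument, and your explicit handling of the vacuous range $n\le4$ is a nice addition the paper omits.
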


\begin{proof}
Since the vehicle is moving towards $-1$, we may assume that the other $n-1$ requests are all located in $1$; thus we will use the same notations as in the proof of Theorem \ref{frac 1n}. The earliest possible release time for $f_2$ is $1+T$. For $2\leq i\leq n-1$, 
\begin{itemize}
    \item if $2-t_{i+1}\geq\frac 12(2-t_i)$, it follows from Remark \ref{ETA} that it is possible for the vehicle to serve both $f_i$ and $f_{i+1}$. 
    \item if $2-t_{i+1}<\frac 12(2-t_i)$, then $t_{n-1}>2-T$. Thus, $t_n>2$; and it is possible to serve both $f_1$ and $f_n$.  
\end{itemize}
\end{proof}

\begin{theorem}\label{Al1}
For $T_0=\frac 1 {2^{n-3}+1}$, we have $\lambda^{\mathcal I_{n,T_0}}\geq\frac 1n +\epsilon$, where $\epsilon=\frac 1{n(n-1)(n+3)}$.
\end{theorem}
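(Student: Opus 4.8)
The goal is to exhibit an online algorithm whose performance on $\mathcal I_{n,T_0}$ is at least $\frac1n+\epsilon$ with $\epsilon=\frac1{n(n-1)(n+3)}$. The natural candidate is a slight modification of $GR_0$: follow the greedy rule (always head toward the request of greatest weight), but additionally exploit the fact, isolated in Lemma~\ref{2 fires}, that at delay $T_0$ the adversary can no longer force the vehicle to miss a second request whenever it commits to the first one. Concretely, I would start the vehicle toward whichever request currently has greatest weight; as long as the vehicle is heading toward some request $f_1=(-1,[1,3],\delta_1)$ and subsequent equal-or-smaller requests pile up at the opposite end, Lemma~\ref{2 fires} guarantees a moment where two requests are simultaneously reachable. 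The algorithm is designed to recognise that moment and grab both. So the vehicle serves at least the heaviest request, \emph{plus} at least one more in the worst case the adversary pursues — and I must quantify how much that extra request is worth.

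**Key steps, in order.** (1) Define the algorithm precisely: it runs $GR_0$, but whenever Lemma~\ref{2 fires}'s situation arises (a committed target plus $n-1$ further requests released before the ETA), it switches to the itinerary that collects two requests. (2) Reduce to the worst case for the adversary. By the symmetry/closest-request reduction used in Theorems~\ref{frac 1n} and~\ref{GREE0 superoptimal}, and because scaling lets us normalise, the adversary's best play is essentially to release $n$ unit-weight requests in the configuration of Theorem~\ref{frac 1n}; against $GR_0$ alone this gives performance exactly $\frac1n$, and any deviation (non-unit weights, splitting across both ends) only helps the vehicle via the greedy guarantee. So it suffices to analyse the modified algorithm against the Theorem~\ref{frac 1n} instance, where Lemma~\ref{2 fires} now applies because $T=T_0$. (3) Show the algorithm serves at least $2$ of the $n$ requests in that instance, giving raw performance $\frac2n$ — but that's far more than $\frac1n+\epsilon$, so the binding constraint must be a \emph{different} adversary configuration: one where the two served requests are light and the remaining $n-2$ heavy. (4) Therefore the real work is an optimisation: the adversary picks weights $\delta_1,\dots,\delta_n$ to minimise (weight of the two requests the algorithm is guaranteed) over (total weight), subject to the geometric constraints that make Lemma~\ref{2 fires} bite. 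I expect the extremal instance to be: first request weight $1$ (normalised), the vehicle commits, then a chain of requests of weights chosen so that the two forced-reachable ones are as light as possible relative to the rest — and the bound $\frac1n+\epsilon$ with that very specific $\epsilon=\frac1{n(n-1)(n+3)}$ should fall out of solving that linear-fractional program exactly.

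**Main obstacle.** The hard part is \textbf{pinning down exactly which pair of requests the algorithm is guaranteed to serve as a function of the adversary's weight choices, and then solving the resulting min–max**. Lemma~\ref{2 fires} only asserts \emph{existence} of a moment with two reachable requests; to get a quantitative bound I need to track \emph{which} two (either two consecutive late-released requests $f_i,f_{i+1}$ at the same end, or the pair $f_1,f_n$), because the adversary will assign tiny weights to whichever pair the algorithm ends up forced onto. Handling the "else" branch of Lemma~\ref{2 fires}'s dichotomy — where $t_n>2$ and the vehicle can serve $f_1$ and $f_n$ — against adversarial weighting, and checking the algorithm can actually \emph{reach} the better of the two options (the greedy rule must not have wandered off toward a heavier decoy in the meantime), is where the bookkeeping is genuinely delicate. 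Once the guaranteed pair is identified in each branch, the remaining fractional optimisation is routine Lagrange/boundary analysis, and I'd expect the worst case to distribute weight so that the $n-2$ unserved requests carry a $\frac{n-2}{n}+O(\epsilon)$ share, forcing the stated $\epsilon$; I would verify the constant $\frac1{n(n-1)(n+3)}$ by direct substitution at the claimed extremal point rather than re-deriving it from scratch.
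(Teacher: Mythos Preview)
Your proposal has a genuine gap: the algorithm you describe --- ``run $GR_0$, but grab two requests when Lemma~\ref{2 fires} fires'' --- does not work, because $GR_0$ can be made to zigzag and then Lemma~\ref{2 fires} never applies. Lemma~\ref{2 fires} has the hypothesis that the vehicle \emph{always moves toward $f_1$}. If the adversary releases $f_1=(-1,[1,3],1)$ and then $f_2=(1,[t_2,t_2+2],1+\eta)$ for tiny $\eta>0$, plain $GR_0$ turns around toward $f_2$; then $f_3=(-1,\ldots,1+2\eta)$ turns it back, and so on. The vehicle never commits to a single endpoint, so the geometric argument of Lemma~\ref{2 fires} is unavailable, and you are back to the $\frac1n$ performance of Theorem~\ref{GR_0 optimal}. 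Your step~(2), asserting that the adversary's best play is unit weights and that ``any deviation only helps the vehicle'', is precisely where this breaks: a deviation by $\eta$ in the weights \emph{hurts} the vehicle running $GR_0$, because it destroys the commitment that Lemma~\ref{2 fires} needs.

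The missing idea is a \emph{threshold} $\kappa>1$: the vehicle should keep heading toward $f_1$ and ignore subsequent requests \emph{unless} one of them has weight exceeding $\kappa$. This buys you a clean trichotomy. If no weight ever exceeds $\kappa$ and no two requests are ever simultaneously reachable, Lemma~\ref{2 fires} forces $f_n$ to arrive after $f_1$ is served, so at most $n-1$ requests compete and the performance is at least $\frac{1}{(n-1)\kappa}$. If some weight exceeds $\kappa$, then greedy on the heaviest request already gives $\max_i\delta_i/S_n>\frac1n+\epsilon$ by a pigeonhole argument (the spread between $1$ and $\kappa$ is too large for all $\delta_i/S_n$ to lie in $[\frac1n-(n-1)\epsilon,\frac1n+\epsilon]$). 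If two requests become simultaneously reachable while all weights are $\le\kappa$, you combine them and run $GR_0$; since $f_1$ is still reachable at that moment, the combined request (or $f_1$ itself) has weight $\ge 1$, and every lost request has weight $\le\kappa$, so again the performance is at least $\frac{1}{(n-1)\kappa}$. Choosing $\kappa=\frac{1+n\epsilon}{1-n(n-1)\epsilon}$ makes $\frac{1}{(n-1)\kappa}>\frac1n+\epsilon$ for the stated $\epsilon$. Your plan to track exactly \emph{which} pair gets served and then solve a fractional program is unnecessary once the threshold is in place; the three cases above are each bounded directly, with no optimisation over the adversary's weight vector beyond the simple inequalities $\delta_k\le\kappa$ and $\delta\ge 1$.
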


\begin{proof}
We define the algorithm $Al1$ as follows:
\begin{enumerate}[label=(\alph*)]
    \item When the first request $f_1$ is released, head towards it. We may assume that $f_1=(-1,[1,3],1)$.
    \item While no weight is greater than $\k=\frac{1+n\epsilon}{1-n(n-1)\epsilon}$ and the vehicle does not have the possibility of serving two requests, keep going towards $f_1$.
    \item If a request $f_i$ appears with weight  $\delta_i> \k$, head towards it and behave like $GR_0$ from this point onwards.
    \item If at some point the vehicle has the possibility of serving two requests, combine their weights and behave like $GR_0$.
\end{enumerate}

We will show that $\lambda_{Al1}^{\mathcal I_{n,T_0}}\geq \frac 1n+\epsilon$.

\vspace{.2cm}
\noindent - Case (b): If no weight is greater than $\k$ and the vehicle never has the possibility of serving two requests, it will reach $f_1$ at time $2$. It follows from Lemma \ref{2 fires} that at most $n-1$ requests have been released at this point. Therefore, the vehicle's performance is at least $\frac 1{\k(n-1)}$. Note that with $\epsilon=\frac 1{n(n-1)(n+3)}$, we have  $$\frac 1{\k(n-1)}=\frac{1-n(n-1)\epsilon}{1+n\epsilon}\frac 1{n-1}= \frac1n\frac{n^2+2n}{n^2+2n-2}>\frac 1n\frac{n^2+2n-2}{n^2+2n-3}=\frac 1n+\epsilon.$$

\vspace{.2cm}
\noindent- Case (c): If $\delta_1=1$ and $\delta_i> \frac{1+n\epsilon}{1-n(n-1)\epsilon}$, we cannot have both $\max (\delta_i)\leq (\frac 1n+\epsilon)S_n$ and $\min(\delta_i)\geq (\frac 1n-(n-1)\epsilon)S_n$. And since $\max (\delta_i)\leq (\frac 1n+\epsilon)S_n$ implies $\min(\delta_i)\geq (\frac 1n-(n-1)\epsilon)S_n$, we have: $\frac {\max(\delta_i)}{S_n}>\frac 1n+\epsilon$.

\vspace{.2cm}
\noindent - Case (d): Let us assume that, at time $t_j$, the vehicle becomes able to serve two requests of weights $\delta_i$ and $\delta_j$. Let $\Delta=\{ 1,\ldots, n+1\}-\{i,j\}$ and let $\delta_{n+1}=\delta_i+\delta_j$. Let $J=\{1\leq k\leq j|f_k ~\mathrm{is ~no ~longer ~reachable ~at ~time } ~t_j\}$ and let $\delta=\max_{k\in \Delta-J} \delta_k$. We define $S_\Delta=\sum_{k\in\Delta} \delta_k$ and $S_J$ similarly. Note that when the possibility of serving two requests appears at time $t_j$, the request $f_1$ is still reachable. Hence, $1\notin J$ and $\delta\geq 1$. Also, since we are in case (d), $\forall k\in J$, $\delta_k\leq \k$. By behaving like $GR_0$, the vehicle is sure to serve a request of weight at least $\delta$. As $\k >1$, we have: $$S_\Delta\leq(n-1-|J|)\delta\leq\k\delta(n-1-|J|) ~\text{and}~ S_J\leq \k|J|\leq \k\delta|J|.$$ 
Hence, the vehicle's optimal performance is at least equal to: $$\frac\delta{S_n}=\frac\delta{S_\Delta+S_J}\geq\frac 1{\k(n-1)}>\frac 1n+\epsilon.$$
\end{proof}

\begin{corollary}
We have $\lambda^{\mathcal I_{n,T}}=\frac 1n$ if and only if $T<\frac 1{2^{n-3}+1}$.
\end{corollary}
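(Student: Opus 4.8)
The plan is to derive both implications purely by assembling the three results already proved in this section, so no genuinely new argument is required. The statement splits as: (i) if $T<T_0=\frac 1{2^{n-3}+1}$ then $\lambda^{\mathcal I_{n,T}}=\frac 1n$, and (ii) if $T\geq T_0$ then $\lambda^{\mathcal I_{n,T}}\neq\frac 1n$.

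For (i) there is nothing to do beyond citing Theorem~\ref{frac 1n}, which states precisely that $\lambda^{\mathcal I_{n,T}}=\frac 1n$ for $T<T_0$. For (ii) I would argue by contraposition: assume $T\geq T_0$. By the Proposition asserting that $\lambda^{\mathcal I_{n,T}}$ is non-decreasing in $T$, we get $\lambda^{\mathcal I_{n,T}}\geq\lambda^{\mathcal I_{n,T_0}}$; and Theorem~\ref{Al1} gives $\lambda^{\mathcal I_{n,T_0}}\geq\frac 1n+\epsilon$ with $\epsilon=\frac 1{n(n-1)(n+3)}>0$. Hence $\lambda^{\mathcal I_{n,T}}\geq\frac 1n+\epsilon>\frac 1n$, and in particular $\lambda^{\mathcal I_{n,T}}\neq\frac 1n$. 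Combining (i) and (ii) yields the claimed equivalence. One may additionally invoke Lemma~\ref{GR_0 perf} to note that $\lambda^{\mathcal I_{n,T}}\geq\frac 1n$ always holds, so that the conclusion of (ii) can be sharpened to $\lambda^{\mathcal I_{n,T}}>\frac 1n$, but this refinement is not needed for the stated statement.

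The only point deserving a word of care is that the whole discussion is carried out under the standing assumption $n\geq 3$ (the same assumption used in Theorems~\ref{GREE0 superoptimal} and~\ref{Al1}), which guarantees that $T_0$ and $\epsilon$ are well defined and that $T_0\leq 1$, so that the ranges of the cited theorems actually overlap at $T=T_0$. There is essentially no mathematical obstacle in this corollary: it is a bookkeeping consequence of Theorems~\ref{frac 1n} and~\ref{Al1} together with the monotonicity of $\lambda^{\mathcal I_{n,T}}$ in $T$. The substantive work — and hence the ``hard part'' — lies entirely upstream, in the proof of Theorem~\ref{Al1}: the four-case analysis of algorithm $Al1$ and the calibration of the constants $\kappa$ and $\epsilon$ so that each case beats $\frac 1n$ by the prescribed margin.
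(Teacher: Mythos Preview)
Your proposal is correct and follows exactly the paper's own argument: the corollary is obtained by combining Theorem~\ref{frac 1n}, Theorem~\ref{Al1}, and the monotonicity of $\lambda^{\mathcal I_{n,T}}$ in $T$. Your remark about the implicit assumption $n\geq 3$ is a sensible caveat that the paper itself leaves unstated.
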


\begin{proof}
This follows directly from Theorem \ref{frac 1n}, Theorem \ref{Al1} and the fact that $\lambda^{\mathcal I_{n,T}}$ increases with $T$.
\end{proof}

In order to prove that the bound given in Theorem \ref{GREE0 superoptimal} is tight, we define the algorithm $Al2$ as follows:
\begin{enumerate}[label=(\alph*)]
    \item When $f_1$ is released at $t_1=1$ in $-1$, go towards $-1$.
    \item When $f_2$ is released at $t_2$ in $1$, if $t_2>2-2^{n-3}T_1$, keep going towards $-1$ either until two requests can be served by changing direction, or until $f_1$ is reached; else go towards $1$ and set $\t_2$ as the ETA in $1$.
    \item Similarly, while the vehicle has changed direction with each new request, when $f_i$ is released at $t_i$ in $(-1)^i$, if $t_i>\t_{i-1} - 2^{n-1-i}T_1$, keep going towards $(-1)^{i-1}$ either until two requests can be reached by changing direction, or until $f_{i-1}$ is reached; else change direction towards $(-1)^i$ and set $\t_i$ as the ETA in $(-1)^i$.
\end{enumerate}

\begin{lemma}\label{Al2}
For $n\geq 3$ and given a delay $T\geq T_1=\frac 1 {2^{n-1}-2}$, the algorithm $Al2$ guarantees that if $n$ requests are released, either the vehicle can serve at least two requests or no optimal offline algorithm can serve all requests. 
\end{lemma}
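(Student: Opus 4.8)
The plan is to argue by contradiction: assume $n$ requests are released, that the vehicle running $Al2$ never reaches a state from which two of the released requests can both be served, and that some offline algorithm serves all $n$ requests; I will derive a contradiction.

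\textbf{Reduction and bookkeeping.} First I would reduce to the adversary against which $Al2$ is designed. By symmetry $f_1=(-1,[1,3],\cdot)$, and, exactly as in the proof of Theorem~\ref{GREE0 superoptimal}, a request placed at an interior point or at the endpoint the vehicle is currently heading for can only help the vehicle — in the latter case, since every time window has length $2$ (the diameter), the vehicle reaches that endpoint before either window closes and hence can serve both requests. So we may assume $f_i$ is released at $(-1)^i$; the requests alternate between $-1$ and $+1$ with $1=t_1<t_2<\dots<t_n$ and $t_{i+1}-t_i\ge T\ge T_1$. I then record two elementary facts. (i) If the vehicle changes direction at the release of $f_i$ — which, by the definition of $Al2$, happens exactly when $t_i\le \tau_{i-1}-2^{n-1-i}T_1$ — its position at $t_i$ is at distance $\tau_{i-1}-t_i$ from $(-1)^{i-1}$, hence at distance $2-(\tau_{i-1}-t_i)$ from $(-1)^i$, so the new ETA is $\tau_i=2t_i+2-\tau_{i-1}$; moreover $\tau_{i-1}>t_i$, so $\tau_i<t_i+2$ and $\tau_i\le \tau_{i-1}+2-2^{n-i}T_1$. (ii) Iterating (i) from $\tau_1=2$, if the vehicle changes direction at each of $f_2,\dots,f_k$ then $\tau_k\le 2k-T_1(2^{n-1}-2^{n-k})$; in particular, since $T_1(2^{n-1}-2)=1$, turning at $f_2,\dots,f_{n-1}$ gives $\tau_{n-1}\le 2n-3$.

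\textbf{A necessary condition on the offline optimum.} Because the alternating requests sit at opposite endpoints, at distance $2$, any feasible trajectory serving all $n$ of them must serve them in increasing time order: an adjacent inversion $u_i>u_{i+1}$ of the service times would force $u_i\ge u_{i+1}+2\ge t_{i+1}+2>t_i+2$, contradicting $u_i\le t_i+2$. Together with $u_1\ge 1$ (the vehicle starts at $0$, $f_1$ is at $-1$) and $u_i\ge u_{i-1}+2$, this yields $u_n\ge 2n-1$, hence $t_n\ge u_n-2\ge 2n-3$. So \emph{if} the offline optimum serves all $n$ requests, then $t_n\ge 2n-3$.

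\textbf{Case analysis on $Al2$.} Since $Al2$ leaves its alternating regime as soon as it fails once to change direction, the run is described by the largest index $m$ with the vehicle changing direction at each of $f_2,\dots,f_m$ (set $m=1$ if it never does), and then — if $m<n$ — the vehicle keeps its course toward $f_m$ when $f_{m+1}$ is released. If $m=n$: the vehicle turned at $f_2,\dots,f_{n-1}$, so $\tau_{n-1}\le 2n-3$ by (ii), and turning at $f_n$ requires $t_n\le\tau_{n-1}-2^{-1}T_1<2n-3$, contradicting the necessary condition. If $1\le m\le n-2$: then $f_{m+2}$ is released at $(-1)^{m+2}=(-1)^m$, i.e.\ at the endpoint the vehicle is heading for; the vehicle reaches $f_m$ at $\tau_m<t_m+2<t_{m+2}+2$ by (i), so it can serve $f_m$ there and then stay at $(-1)^m$ to serve $f_{m+2}$ — two requests, contradicting the assumption. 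If $m=n-1$: the vehicle will reach $f_{n-1}$ at $\tau_{n-1}\le 2n-3$, and not being able to serve two requests it cannot serve $f_{n-1}$ and then travel distance $2$ to $f_n$, which forces $\tau_{n-1}+2>t_n+2$, i.e.\ $t_n<\tau_{n-1}\le 2n-3$, again contradicting the necessary condition. This exhausts the cases and completes the contradiction.

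\textbf{Where the difficulty lies.} The computations in the three cases are short; the delicate points, which I would treat carefully in a full write-up, are (a) the reduction to alternating endpoint-requests — making precise that interior or repeated-endpoint requests never benefit the adversary, given that the verbatim description of $Al2$ only covers the alternating pattern — and (b) spelling out exactly what ``the vehicle can serve at least two requests'' means, so that the case $1\le m\le n-2$ is airtight. The arithmetic pivot throughout is the identity $T_1(2^{n-1}-2)=1$, which is precisely what makes the online bound $\tau_{n-1}\le 2n-3$ collide with the offline requirement $t_n\ge 2n-3$.
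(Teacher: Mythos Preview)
Your reduction to alternating requests is the weak link, and it breaks your case $1\le m\le n-2$. Your own justification for the reduction is that a request placed at the endpoint the vehicle is \emph{currently} heading for lets the vehicle serve two requests. That is correct, but the conclusion it licenses is only that each $f_i$ sits at the endpoint opposite the vehicle's heading \emph{at time $t_i$} --- not that $f_i$ sits at $(-1)^i$. Those two things coincide only while the vehicle keeps alternating. The moment the vehicle commits to $f_m$ at the release of $f_{m+1}$, the adversary's reduced instance has $f_{m+2},\dots,f_n$ all at $(-1)^{m+1}$, the side \emph{opposite} the vehicle. Your argument that ``$f_{m+2}$ is at $(-1)^{m+2}=(-1)^m$, the endpoint the vehicle is heading for'' therefore does not apply, and the case $1\le m\le n-2$ is left unhandled. (Your offline bound $t_n\ge 2n-3$ also collapses here: once several consecutive requests share an endpoint, the offline optimum need not travel distance $2$ between them.)

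What the paper does for this case is exactly the missing ingredient: with the vehicle heading for $(-1)^m$ and the remaining requests at $(-1)^{m+1}$, Remark~\ref{ETA} gives that the vehicle can serve $f_k$ and $f_{k+1}$ whenever $\tau_m-t_{k+1}\ge\tfrac12(\tau_m-t_k)$. If this never happens for $m+1\le k\le n-1$, the gaps $\tau_m-t_k$ halve repeatedly; combined with $\tau_m-t_{m+1}<2^{n-2-m}T_1$ (the very threshold that made $Al2$ refuse to turn), one gets $\tau_m-t_{n-1}<T_1\le T$, hence $t_n>\tau_m$ and the vehicle serves $f_m$ and then $f_n$. Your cases $m=n-1$ and $m=n$ are fine and match the paper's second paragraph; only the middle case needs this halving argument in place of the parity trick.
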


\begin{proof}
We may assume that the first request released by the adversary is $f_1=(-1,[1,3],1)$ and the vehicle applies $Al2$. Assuming it exists, let $i$ denote the smallest index for which $t_i>\t_{i-1} - 2^{n-1-i}T$. The vehicle will keep going towards $(-1)^{i-1}$. For $i \leq k\leq n-1$, if $\t_{i-1}-t_{k+1}\geq \frac 12(\t_{i-1}-t_k)$, then it follows from Remark \ref{ETA} that the vehicle can serve both $f_k$ and $f_{k+1}$. However, if $\t_{i-1}-t_{k+1}< \frac 12(\t_{i-1}-t_k)$, for all $i\leq k\leq n-1$, then $\t_{i-1}-t_{n-1}<T$ so the $n$-th request cannot be released before $f_{i-1}$ is reached.

Since $\forall i\geq 2$, $\t_i=2t_i-\t_{i-1}+2$, if $\forall 2\leq i\leq n-1$, $t_i\leq \t_{i-1} - 2^{n-1-i}T$, then $$\t_{n-1}\leq \t_1+2(n-2)-\sum_{i=1}^{n-2} 2^iT\leq \t_1+2(n-2)-\sum_{i=1}^{n-2} 2^iT_1= 2n-3.$$ So if $t_n< \t_{n-1}$, then $2n-1\notin [t_n,t_n+2]$. Hence, either the vehicle can serve two requests, or the optimal offline algorithm cannot serve all the requests.
\end{proof}

\begin{theorem}\label{T1}
For $T_1=\frac 1 {2^{n-1}-2}$, we have $\gamma^{\mathcal I_{n,T_1}}>\frac 1n$.
\end{theorem}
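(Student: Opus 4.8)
The plan is to parallel the proof of Theorem~\ref{Al1}, replacing the role of ``head towards $f_1$'' and Lemma~\ref{2 fires} by the algorithm $Al2$ and Lemma~\ref{Al2}. I would define an online algorithm $Al3$ that runs $Al2$ together with the two safeguards already used for $Al1$: (i) if a request of weight exceeding $\k=\frac{1+n\e}{1-n(n-1)\e}$ is released, the vehicle abandons $Al2$ and behaves like $GR_0$ aimed at it; (ii) as soon as the vehicle can serve two requests at once, it merges their weights into a single pseudo-request and behaves like $GR_0$ afterwards. I would additionally impose a weight-monotone tie-break on $Al2$: the vehicle never abandons its current target for a newly released request of strictly smaller weight. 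Normalising $\d_1=1$, this last rule guarantees that the single request $Al3$ ultimately serves has weight at least $1$ --- the property the competitive analysis needs and the one that $Al2$ by itself does not provide (an early, light request can otherwise drag the $Al2$ vehicle onto it while $OPT$ keeps the heavy one). As in Theorem~\ref{Al1}, take $\e=\frac1{n(n-1)(n+3)}$, which is small enough that $\frac1{\k(n-1)}>\frac1n+\e$ and $(n-1)\k<n$.

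I would then split an arbitrary instance $I$ into three regimes, mirroring respectively cases~(c),~(d) and~(b) of the proof of Theorem~\ref{Al1}. \emph{Regime 1: some released request has weight exceeding $\k$.} Exactly as in case~(c) of Theorem~\ref{Al1}, using $\d_1=1$, one cannot have $\max_i\d_i\le(\frac1n+\e)S_n$, so $\frac{\max_i\d_i}{S_n}>\frac1n+\e$; the $GR_0$ phase secures the heaviest request, hence $\gamma_{Al3}^I\ge\lambda_{Al3}^I>\frac1n$. \emph{Regime 2: all weights are at most $\k$ but the vehicle gains, at some instant, the ability to serve two requests.} Since the vehicle's current target has weight $\ge 1$ by the tie-break and is still reachable at that instant, the counting argument of case~(d) of Theorem~\ref{Al1} goes through with the secured weight $\delta$ satisfying $\frac{\delta}{S_n}\ge\frac1{\k(n-1)}>\frac1n+\e$, so again $\gamma_{Al3}^I>\frac1n$.

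\emph{Regime 3: all weights are at most $\k$ and no two-fire opportunity ever occurs.} This is the core of the argument and the main obstacle. Here neither safeguard fires, so the vehicle follows the $Al2$ rules (with the tie-break). I would prove the analogue of Lemma~\ref{Al2} for $Al3$: if $n$ requests are released, no optimal offline algorithm serves all of them. Granting this, $OPT$ misses at least one request, so its served weight is at most $S_n-\min_i\d_i\le(n-1)\k$, whereas $Al3$ serves weight at least $\d_1=1$; hence $\gamma_{Al3}^I\ge\frac1{(n-1)\k}>\frac1n$. If fewer than $n$ requests are released the same bound is immediate, since then $S_n\le(n-1)\k<n$ while $Al3$ still serves weight $\ge 1$. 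Together the three regimes give $\gamma^{\mathcal I_{n,T_1}}\ge\gamma_{Al3}^{\mathcal I_{n,T_1}}>\frac1n$.

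The step I expect to require the most care is the adapted Lemma~\ref{Al2}, because its proof is written in terms of the $Al2$ ETAs $\t_i$ and the tie-break may make the vehicle skip a direction reversal that $Al2$ would perform. I would argue that such a skip happens only when the newly released request is lighter than the current target, so the vehicle simply keeps heading towards that (unchanged, heavier) target; re-running the ETA bookkeeping of Lemma~\ref{Al2} with $\t_{i-1}$ in place of the missing $\t_i$ then still yields either a two-fire opportunity --- impossible in Regime~3 --- or the inequality $\t_{n-1}\le 2n-3$, which as in Lemma~\ref{Al2} forbids the $n$-th request from fitting into any offline schedule serving all the others. Checking carefully that skipping a reversal cannot at once suppress a two-fire opportunity and relax the offline schedule is the computation I would need to complete in full detail.
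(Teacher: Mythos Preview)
Your plan has the right architecture --- run $Al2$ with safeguards and fall back to $GR_0$ --- and your Regimes~1 and~2 match the paper's cases. The substantive divergence is how you handle requests of \emph{small} weight. The paper does not use a tie-break; it introduces a decreasing sequence $(\o_i)$ with $\o_i=1+O(\e)$ and adds a third trigger: the moment some $\d_i<\o_i$, abandon $Al2$ and switch to $GR_0$ (this is the paper's case~(b.2)). The payoff is that whenever no safeguard fires, every weight lies in $[\o_i,\k]$ and the vehicle is running $Al2$ \emph{completely unmodified}, so Lemma~\ref{Al2} applies verbatim and the ratio is at least $\o_n/((n-1)\k)=\frac1{n-1}+O(\e)$. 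Your tie-break, by contrast, alters $Al2$'s trajectory, and the adapted Lemma~\ref{Al2} you need does not hold.

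Here is a concrete failure for $n=4$, $T_1=1/6$, $\e=1/84$, $\k=11/9$. Take $f_1=(-1,[1,3],1)$, $f_2=(1,[7/6,19/6],0.99)$, $f_3=(1,[7/4,15/4],1.2)$, $f_4=(1,[23/12,47/12],1.2)$. All delays are $\ge T_1$ and all weights are below $\k$, so Regime~1 never triggers. At $t_2$ the $Al2$ timing says turn, but your tie-break blocks it since $0.99<1$; by $Al2$'s own rule the vehicle is now in ``keep going until two-fire or target reached'' mode. One checks directly that at each of $t_2,t_3,t_4$ the vehicle, from its current position on the way to $-1$, can reach at most one active request --- so Regime~2 never triggers either. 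The vehicle serves only $f_1$. But $OPT$ visits $-1$ at time~$1$ and $1$ at time~$3$, serving all four, for total weight $4.39$ and ratio $\approx 0.228<1/4$. The same construction works for every $n\ge 4$ (place $f_2,\dots,f_n$ at $1$ with the timings of Theorem~\ref{frac 1n}, available since $T_1<T_0$): once the tie-break suppresses the first reversal, the adversary need never alternate sides again, which simultaneously eliminates every two-fire opportunity \emph{and} makes all $n$ requests trivially offline-serveable from $-1$ then $1$. This is exactly the scenario you worried about in your last paragraph, and it is fatal. The paper's $\o_i$ device avoids it precisely because it never overrides an $Al2$ decision --- it either lets $Al2$ run pure or bails out to $GR_0$ the instant a light request appears.
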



\begin{proof}
Let $0<\e<\frac1{n^2}$, $\k=\frac{1+n\epsilon}{1-n(n-1)\epsilon}$ and let the sequence  $(\o_i)$ be defined by $\o_1=1$ and $\o_{i+1}=\o_i(1-n^2\e)-(i-2)(\k-\o_i)$.
We define the algorithm $Al3$ as follows:
\begin{enumerate}[label=(\alph*)]
    \item When $f_1=(-1,[1,3],1)$ is released and while each request $f_i$ has a weight satisfying $\o_i\leq \delta_i\leq\k$ and no two requests can be served, apply $Al2$.
    \item If the request $f_i$ has a weight $\delta_i>\k$ or $\delta_i<\o_i$, apply $GR_0$.
    \item If two requests can be served, combine them and apply $GR_0$.
    
\end{enumerate}

We will show that for $\e$ small enough, $Al3$ guarantees a competitive-ratio of $\frac 1n+\e$.

We can show by induction that $\o_i\leq1$ and $\o_i=1+O(\e)$. Since $\k>1\geq \o_i$, we have $(\o_i)$ is decreasing. Note also that $\k=1+O(\e)$, so $\frac{\o_i}\k=1+O(\e)$.

- Case (a): Assume that for all $i$, $\o_i\leq \d_i\leq \k$ and no two requests can ever be served. Then, the vehicle applies $Al2$ until the end. It follows from Lemma \ref{Al2} that the optimal offline algorithm will not be able to serve $n$ requests. Thus the competitive ratio in this case is at least $\frac{\o_n}{(n-1)\k}=\frac 1{n-1}+O(\e)$. Hence, for $\e$ small enough, this competitive ratio is greater than $\frac 1n+\e$.

- Case (b.1): Assume now that the vehicle switches from $Al2$ to $GR_0$ when a request $f_i$ is released with $\d_i>\k$. This case is identical to Case (c) in the proof of Theorem \ref{Al1}. Thus, the performance is at least $\frac1n+\e$.

- Case (b.2): Assume now that the vehicle switches from $Al2$ to $GR_0$ when a request $f_i$ has a weight $\d_i<\o_i$. Let $j$ denote the index of the request towards which the vehicle was heading when $f_i$ was released. Let $\d=\max \{\d_k,k>i\}\cup\{\d_j\}$. The vehicle will at least serve a request of weight $\d$, with $\d\geq\o_{i-1}$ and $\d_i<\o_i$. The sum of the weights is at most $(n-i+1)\d+\d_i+(i-2)\k$. So the competitive ratio is at least 
\begin{eqnarray*}
\lefteqn{
\frac \d{(n-i+1)\d+\d_i+(i-2)\k} \geq \frac{\o_{i-1}}{(n-i+1)\o_{i-1}+\o_i+(i-2)\k}}\hspace{3cm}\\
&\geq& \frac{\o_{i-1}}{(n-i+1)\o_{i-1}+\o_{i-1}(1-n^2\e)+(i-2)\o_{i-1}}\\
&\geq&\frac 1{n-n^2\e}\\
&\geq& \frac1n+\e.
\end{eqnarray*}

- Case (c): Assume now that the vehicle switches from $Al2$ to $GR_0$ when two requests can be served. This case is similar to Case (d) of the proof of Theorem \ref{Al1}. The only difference is that we no longer have $\d\geq 1$. Using the same notations, we have instead $\d\geq \o_{j-1}$. Thus the performance is at least $\frac {\o_{j-1}}{\k(n-1)}$. For $\e$ small enough, this is greater than $\frac1n+\e$.

Hence, for $\e$ small enough, $Al3$ guarantees a competitive ratio of $\frac1n+\e$.
\end{proof}

We close this section by showing that the optimal performance for $\frac 12\leq T<1$ can be expressed using an induction formula. In order to solve the general case for $\frac 12\leq T<1$, we introduce a variant problem in which the initial state is modified. Thus we define the \emph{optimal} weighted performance $\lambda^{\mathcal I_{n,T},\delta_0}$ as the optimal performance obtained by an algorithm in the case where the vehicle's starting position at $t=0$ is somewhere in $]T-1,1-T[$, requests with a total weight of $\delta_0\geq 0$ have already been missed and the next request to appear is of weight $1$. We will show that this is well defined as the optimal weighted performance does not depend on where in $]T-1,1-T[$ the vehicle starts.

\begin{theorem}\label{weighted performance}
For $\frac 12\leq T<1$, the optimal weighted performance $\lambda^{\mathcal I_{n,T},\delta_0}$ satisfies the following:
\begin{itemize}
    \item $\lambda^{\mathcal I_{1,T},\delta_0}=\frac 1{1+\delta_0}\;\;$ and $\;\;\lambda^{\mathcal I_{2,T},\delta_0}=\frac 1{2+\delta_0}$,
    \item $\forall n\geq 3$, 
\begin{eqnarray*}
    \lambda^{\mathcal I_{n,T},\delta_0}&=&\inf\limits_{\delta_2\geq 0} \max \{\frac 1{\delta_0+1+\delta_2},\min\{\lambda^{\mathcal I_{n-2,T}, \frac{\delta_0+1+\delta_2}{\delta_2}}, \\& &\lambda^{\mathcal I_{n-1,T}, \frac{\delta_0+1}{\delta_2}}\} \}.
    \end{eqnarray*}
\end{itemize}    
\end{theorem}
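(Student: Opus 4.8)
The plan is to prove the base cases and the recursion together by induction on $n$, in each case bounding $\lambda^{\mathcal I_{n,T},\delta_0}$ from above by an explicit adversary strategy and from below by an online algorithm, in the usual min--max fashion. Before doing so I would settle the well-definedness claim. The interval $]T-1,1-T[$ is exactly the set of abscissas from which the vehicle can reach either endpoint with strictly more than $T$ of slack before the corresponding window closes; since successive requests are released at least $T$ apart, and a request served at an endpoint leaves more than $T$ units of residual window time in which to steer back, the vehicle can always re-enter this ``safe interval'' before the next request is revealed. Consequently every sub-configuration produced during play is again of the standard form, and the starting abscissa $x_0\in\,]T-1,1-T[$ only affects the timing of intermediate events, not the attainable ratio.

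I normalise so that $f_1$ is released at $t_1=1$, has weight $1$, and — using the worst-case principle for the segment together with the position-independence just established — sits at $-1$. The base cases are direct: for $n=1$ the lone request is always servable, giving $\tfrac1{1+\delta_0}$; for $n=2$ the adversary releases $f_2$ of weight $\delta_2$ at $+1$ at a time $t_2\in[\,1+T,\,2+x_0)$, which because $T<1$ makes it impossible to serve both $f_1$ and $f_2$, so the vehicle secures $\max(1,\delta_2)$, and a short case split on $\delta_2\lessgtr 1$ shows the value is $\tfrac1{2+\delta_0}$, attained at $\delta_2=1$. For $n\ge3$ the adversary opens with $f_1$ and this forced $f_2$ of weight $\delta_2$, and the three expressions under the $\max$ are what the vehicle can secure from the fork. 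If it keeps heading to $f_1$, it banks $1$ and loses $f_2$; the adversary then makes $f_3,\dots,f_n$ negligible ($O(\e)$), so whatever the vehicle collects afterwards the ratio tends to $\tfrac1{\delta_0+1+\delta_2}$ — the term $A$ — and one checks a vanishing tail cannot help by more than $O(\e)$. If it abandons $f_1$ (losing $1$) for $f_2$, the residual game is a weighted instance with $f_2$ in the role of the first request; after renormalising its weight to $1$ the missed weight is $\tfrac{\delta_0+1}{\delta_2}$ with $n-1$ requests left, so the adversary's ``mild'' continuation yields $\lambda^{\mathcal I_{n-1,T},(\delta_0+1)/\delta_2}$, while its ``aggressive'' continuation — releasing $f_3$ at $-1$ tightly enough to force the loss of $f_2$ as well — leaves $n-2$ requests, missed weight $\delta_0+1+\delta_2$, and value $\lambda^{\mathcal I_{n-2,T},(\delta_0+1+\delta_2)/\delta_2}$; the adversary realises the smaller, $\min\{B,C\}$, on this branch. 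The vehicle, choosing the larger of ``commit'' and ``abandon'', realises $\max\{A,\min\{B,C\}\}$, and optimising the adversary's free weight gives the $\inf_{\delta_2\ge0}$; the induction hypothesis supplies the two smaller instances, all $O(\e)$ corrections wash out in the infimum/supremum, and on the algorithm side the vehicle simply compares these inductively known quantities online and plays accordingly.

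I expect two things to be the main obstacle. The first is the well-definedness and safe-interval bookkeeping: one must verify carefully that after every branch — a commitment, an abandonment, a service at an endpoint — the vehicle is back in $]T-1,1-T[$ before the next release, and this is precisely where the hypothesis $\tfrac12\le T<1$ is essential (below $\tfrac12$ the rapid-alternation attack of Section~\ref{T=0} destroys the recursive structure, and at $T\ge1$ one is already in the regime of Theorem~\ref{large delay}). The second is pinning down the three terms: showing the adversary cannot beat $A$ on the ``commit'' branch, that its optimal play on the ``abandon'' branch really is $\min\{B,C\}$, and dually that the vehicle's online algorithm cannot be driven below $\max\{A,\min\{B,C\}\}$ — all of which hinges on the forced choices being genuinely irrevocable under the speed-$1$, constant-direction convention and on the renormalisations matching the residual games exactly to the inductive statement.
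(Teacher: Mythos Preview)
Your overall architecture matches the paper's: induction on $n$, base cases, and for $n\ge 3$ a commit/abandon fork producing the three terms $A$, $B$, $C$. But your reading of $\min\{B,C\}$ on the abandon branch is not the paper's, and the difference is where the real content lies.

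You write that after abandoning $f_1$ the residual game \emph{is} the $(n-1)$-instance (your ``mild'' continuation), with the ``aggressive'' continuation as a separate adversary option that forces the loss of $f_2$. Neither half is quite right. First, the adversary cannot force the loss of $f_2$: once the vehicle has turned toward $f_2$ it can always keep going and serve it, so your aggressive branch as stated is not a legal reduction. Second, and more importantly, the residual game after abandoning $f_1$ is \emph{not} the $(n-1)$-problem, because of timing. When the vehicle re-enters $]T-1,1-T[$ it does so at time $t_2+T$, so the delay since $f_2$ has already elapsed and the adversary may release $f_3$ immediately, while the vehicle is still anywhere in the interval. In the genuine $(n-1)$-problem the second request can appear only $T$ units after the first, by which time the vehicle has travelled distance $T\ge\tfrac12$ toward its target and is well past the midpoint. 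The adversary therefore has strictly more freedom in the residual game than in the $(n-1)$-problem; that extra freedom is precisely what the $B$ term captures.

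The paper's dichotomy is accordingly: either the adversary releases $f_3$ \emph{while the vehicle is still in the interval} --- then, by the position-independence you correctly flag, the situation is symmetric in $f_2$ and $f_3$, so the optimal adversary weight is $\delta_3=\delta_2$; whichever of the two the vehicle pursues, the other (weight $\delta_2$) joins the missed pile, and after normalising the surviving one to weight $1$ we are in the $(n-2)$-problem with missed weight $(\delta_0+1+\delta_2)/\delta_2$ --- or the adversary \emph{waits}, letting the vehicle exit the interval, and only then does the configuration coincide with the $(n-1)$-problem with missed weight $(\delta_0+1)/\delta_2$. The $\min$ is over this timing choice. Your proposal is missing both the symmetry argument fixing $\delta_3=\delta_2$ (without it the normalisation by $\delta_2$ in $B$ is unjustified) and this identification of what actually separates $B$ from $C$.

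On the commit branch your $O(\e)$ tail works, but the paper's route is cleaner and is where $T\ge\tfrac12$ enters directly: with the adversary's worst-case timing for $f_2$ (just before the vehicle reaches $-1$), the vehicle serves $f_1$ before any $f_3$ can legally appear, so the adversary simply stops and the ratio is exactly $\frac{1}{\delta_0+1+\delta_2}$ with no limiting argument needed.
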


\begin{proof}
If $n=1$, the vehicle will reach the request of weight $1$, so its performance will be $\frac 1 {1+\delta_0}$. If $n=2$, the vehicle will always be able to reach one of the two requests, and the worst case is when both requests are released with delay $T$ at opposite ends and with equal weights. In that case, the vehicle's performance is $\frac 1{2+\delta_0}$. Let us now consider $n\geq 3$. We may assume that the first request is $f_1=(-1,[0,2],1)$ and that the vehicle will move towards it. After a delay of at least $T$ and before $f_1$ is served, the adversary will release a request $f_2=(1,[t_2,t_2+2],\delta_2)$. The vehicle will thus have two options. If it continues towards the first request, it will serve $f_1$ before a third request can be released and its performance will be $\frac 1{\delta_0+1+\delta_2}$. If it changes direction towards $f_2$, it will have returned to the interval $]T-1,1-T[$ at time $t_2+T$. While the vehicle remains in this interval, the adversary may release a request $f_3=(-1,[t_3,t_3+2],\delta_3)$, and the vehicle will have to choose between $f_2$ and $f_3$. In this case, it follows from the symmetry of that situation that it is optimal for the adversary to select $\delta_3=\delta_2$. If we divide all the weights by $\delta_2$, the situation is identical to that of the problem defined above, with $n$ replaced by $n-2$ and $\delta_0$ replaced by $\frac{\delta_0+1+\delta_2}{\delta_2}$. While the vehicle is in the interval $]T-1,1-T[$, the adversary may choose not to release a request, and similarly, in this case, $n$ is replaced by $n-1$ and $\d_0$ is replaced by $\frac{\delta_0+1}{\delta_2}$. Hence,    
\begin{eqnarray*}
    \lambda^{\mathcal I_{n,T},\delta_0}&=&\inf\limits_{\delta_2\geq 0} \max \{\frac 1{\delta_0+1+\delta_2},\min\{\lambda^{\mathcal I_{n-2,T}, \frac{\delta_0+1+\delta_2}{\delta_2}}, \\& &\lambda^{\mathcal I_{n-1,T}, \frac{\delta_0+1}{\delta_2}}\} \}.
\end{eqnarray*}
\end{proof}

\begin{corollary}\label{weightless induction}
For $\frac 12\leq T<1$, we have $\lambda^{\mathcal I_{n,T}}=\beta_n$, where $\beta_1=1$, $\beta_2=\frac12 $ and 
$ \forall n\geq 3$,$$\beta_n=\inf\limits_{\delta_2\geq 0} \max \{\frac 1{1+\delta_2},\min\{\lambda^{\mathcal I_{n-2,T}, \frac{1+\delta_2}{\delta_2}},\lambda^{\mathcal I_{n-1,T}, \frac{1}{\delta_2}}\} \}.$$
\end{corollary}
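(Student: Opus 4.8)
The plan is to obtain this as an immediate corollary of Theorem~\ref{weighted performance}, specialised to $\delta_0 = 0$. The first thing I would do is justify the identity $\lambda^{\mathcal I_{n,T}} = \lambda^{\mathcal I_{n,T},0}$. In the original game the vehicle starts at $O = 0$; since $\frac12 \le T < 1$, the interval $]T-1,1-T[$ is non-empty and contains $0$, so $0$ is one of the admissible starting positions for the weighted problem. At time $0$ nothing has been released, so the total weight of already-missed requests is $0$, and by the normalisation remark the first request may be taken to have weight $1$. Hence the initial configuration of the original game is exactly the $\delta_0 = 0$ configuration of the weighted problem; invoking that the weighted performance is independent of the precise start within $]T-1,1-T[$ (as noted around Theorem~\ref{weighted performance}), we conclude $\lambda^{\mathcal I_{n,T}} = \lambda^{\mathcal I_{n,T},0}$. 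I would also point out that the release time of $f_1$ is irrelevant here — the vehicle can simply wait at $0$ until $f_1$ appears, since time windows have length $2$ so reachability is unaffected — and that by the symmetry of $[-1,1]$ about $0$ the adversary's choice of the location of $f_1$ may be fixed without loss of generality, matching the set-up used inside the proof of Theorem~\ref{weighted performance}.

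Next I would read off the consequences. Theorem~\ref{weighted performance} gives $\lambda^{\mathcal I_{1,T},0} = \frac{1}{1+0} = 1 = \beta_1$ and $\lambda^{\mathcal I_{2,T},0} = \frac{1}{2+0} = \frac12 = \beta_2$. For $n \ge 3$, substituting $\delta_0 = 0$ into the recursion of Theorem~\ref{weighted performance} replaces $\frac{1}{\delta_0+1+\delta_2}$ by $\frac{1}{1+\delta_2}$, the argument $\frac{\delta_0+1+\delta_2}{\delta_2}$ by $\frac{1+\delta_2}{\delta_2}$, and $\frac{\delta_0+1}{\delta_2}$ by $\frac{1}{\delta_2}$, which is precisely the displayed formula defining $\beta_n$. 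Combined with the identity from the first step, this yields $\lambda^{\mathcal I_{n,T}} = \beta_n$ for every $n \ge 1$.

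Since the argument is essentially bookkeeping on top of Theorem~\ref{weighted performance}, I do not expect a substantive obstacle. The one point that genuinely requires care — and the only place the hypothesis $T < 1$ enters — is the identification $\lambda^{\mathcal I_{n,T}} = \lambda^{\mathcal I_{n,T},0}$: one must check that $0$ actually lies in the open interval $]T-1,1-T[$, which fails once $T \ge 1$, and this is exactly why the statement is confined to the regime $\frac12 \le T < 1$.
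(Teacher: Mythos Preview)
Your proposal is correct and follows exactly the paper's approach: the paper's proof is the single observation that $\lambda^{\mathcal I_{n,T}} = \lambda^{\mathcal I_{n,T},0}$ by definition, after which the corollary is Theorem~\ref{weighted performance} with $\delta_0=0$. You have simply spelled out in more detail why the original game's initial configuration matches the $\delta_0=0$ weighted problem (in particular that $0\in{]T-1,1-T[}$ when $T<1$), which is a welcome elaboration but not a different argument.
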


\begin{proof}
It follows from the definitions that the optimal performance $\lambda^{\mathcal I_{n,T}}$ is equal to the optimal weighted performance $\lambda^{\mathcal I_{n,T},0}$ with zero requests missed.
\end{proof}

\begin{conj}
The formulas in Theorem \ref{weighted performance} and Corollary \ref{weightless induction} can be simplified as follows: for all $\frac 12\leq T<1$ and $n\geq 3$,
$$ \lambda^{\mathcal I_{n,T},\delta_0}=\inf\limits_{\delta_2\geq 0} \max \{\frac 1{\delta_0+1+\delta_2},\lambda^{\mathcal I_{n-2,T}, \frac{\delta_0+1+\delta_2}{\delta_2}}\} $$ 
$$\text{and}~\;\;\; \lambda^{\mathcal I_{n,T}}=\inf\limits_{\delta_2\geq 0} \max \{\frac 1{1+\delta_2},\lambda^{\mathcal I_{n-2,T},\frac{1+\delta_2}{\delta_2}}\}.$$
\end{conj}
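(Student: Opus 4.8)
The plan is to first reduce the Conjecture to a single monotonicity property of the weighted performance, and then to prove that property by induction. Write $F_k(w):=\lambda^{\mathcal I_{k,T},w}$. The only discrepancy between Theorem~\ref{weighted performance} and the conjectured formula is that the term $F_{n-1}\!\left(\frac{\delta_0+1}{\delta_2}\right)$ is deleted from the inner minimum; since $\min\{a,b\}\le a$, the conjectured expression is automatically $\ge$ the expression of Theorem~\ref{weighted performance}, so equality holds as soon as, for every $\delta_2>0$, $F_{n-2}\!\left(\frac{\delta_0+1+\delta_2}{\delta_2}\right)\le F_{n-1}\!\left(\frac{\delta_0+1}{\delta_2}\right)$. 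With the substitution $w=\frac{\delta_0+1}{\delta_2}$ (which runs over all of $(0,\infty)$ as $\delta_2$ does) this is precisely
\[
(G_k):\qquad F_k(w)\ \ge\ F_{k-1}(w+1)\qquad\text{for all }w>0,
\]
taken at $k=n-1$. Hence it suffices to prove $(G_k)$ for all $k\ge 2$; this yields the weighted statement, and the unweighted one follows by putting $\delta_0=0$ as in Corollary~\ref{weightless induction}.

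Next I would prove $(G_k)$ by strong induction on $k$. The case $(G_2)$ is the identity $F_2(w)=\frac1{2+w}=F_1(w+1)$. For $(G_3)$, note that $(G_2)$ already validates the simplified recursion at index~$3$, i.e. $F_3(w)=\inf_{s>0}\max\bigl\{\tfrac1{w+1+s},\tfrac{s}{2s+w+1}\bigr\}$, and for each $s$ either $s\le2$, whence $\tfrac1{w+1+s}\ge\tfrac1{w+3}$, or $s\ge2$, whence (the second term being increasing in $s$) $\tfrac{s}{2s+w+1}\ge\tfrac2{w+5}\ge\tfrac1{w+3}$; taking the infimum gives $F_3(w)\ge\tfrac1{w+3}=F_2(w+1)$. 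In general, $(G_2),\dots,(G_{k-1})$ justify the simplified recursion at indices $3,\dots,k$, which after the change of variable $\xi=1+\frac{w+1}{s}$ reads $F_k(w)=\inf_{\xi>1}\max\bigl\{\tfrac{\xi-1}{(w+1)\xi},F_{k-2}(\xi)\bigr\}$, and likewise $F_{k-1}(w+1)=\inf_{\eta>1}\max\bigl\{\tfrac{\eta-1}{(w+2)\eta},F_{k-3}(\eta)\bigr\}$.

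For the inductive step ($k\ge4$) it then suffices, by the preceding identity for $F_k(w)$, to check that $\max\bigl\{\tfrac{\xi-1}{(w+1)\xi},F_{k-2}(\xi)\bigr\}\ge F_{k-1}(w+1)$ for every $\xi>1$. If $\xi\ge w+2$, the first term (increasing in $\xi$) is $\ge\tfrac1{w+2}=F_1(w+1)\ge F_{k-1}(w+1)$, using that $F_\ell(\delta)$ is non-increasing in $\ell$ and in $\delta$. If $\xi\le w+1$, then $F_{k-2}(\xi)\ge F_{k-2}(w+1)\ge F_{k-1}(w+1)$. For $w+1<\xi<w+2$ I would bound $F_{k-1}(w+1)$ using its simplified recursion with the test value $\eta=\xi+1$: by $(G_{k-2})$ one has $F_{k-3}(\xi+1)\le F_{k-2}(\xi)$, so it remains only to verify $\tfrac{\xi}{(w+2)(\xi+1)}\le\tfrac{\xi-1}{(w+1)\xi}$, which is equivalent to $\xi^2\ge w+2$. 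This settles all $\xi\ge\sqrt{w+2}$.

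The outstanding case is the window $w+1<\xi<\sqrt{w+2}$, which is non-empty exactly when $w<1/\varphi$ — in particular for the small $w$ that govern $\beta_n$. I expect this to be the crux of the proof. The naive monotonicity arguments fail here (they would need $F_{k-2}(\sqrt{w+2})\ge F_{k-1}(w+1)$, which does not follow from $(G_{k-1})$ and monotonicity in $\ell,\delta$ alone, since $\sqrt{w+2}>w+1$ on this range), and no slack is available, since $(G_k)$ is sometimes tight (for example $F_4(0)=F_3(1)=\tfrac1{2+\sqrt2}$). My proposed remedy is to strengthen the induction hypothesis with a regularity estimate for the functions $\psi_k:=1/F_k$ — numerically these appear to be convex and to have increments at most $1$, matching $\psi_1,\psi_2$ — and to feed such an estimate into the comparison of the two infima over this narrow interval; alternatively one can try to analyse the window directly, exploiting that it corresponds to the ``change-direction'' branch of the recursion with $\delta_2$ (hence $s$) large, where $F_k(w)$ is close to $F_{k-2}(1)$.
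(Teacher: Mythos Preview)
The statement you are trying to prove is labelled as a \emph{Conjecture} in the paper: it is not proved there. The paper only remarks that the key inequality $\lambda^{\mathcal I_{n-2,T},\frac{\delta_0+1+\delta_2}{\delta_2}}\le \lambda^{\mathcal I_{n-1,T},\frac{\delta_0+1}{\delta_2}}$ can be checked by hand for $n=3,4$ (in the proof of Proposition~\ref{small n medium T}) and for $n=5$, and it gives the game-theoretic interpretation (``it is inefficient for the adversary to pass on the possibility of releasing a request when the vehicle comes back near~$0$''). So there is no paper proof to compare to; your proposal goes strictly further than the paper does.

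Your reduction is correct and is exactly the inequality the paper verifies for small~$n$: with $w=\tfrac{\delta_0+1}{\delta_2}$ the conjecture is equivalent to the family $(G_k)$, $F_k(w)\ge F_{k-1}(w+1)$. Your base cases $(G_2)$ and $(G_3)$ are fine, the change of variable $\xi=1+\tfrac{w+1}{s}$ is clean, and the three-case analysis covers all $\xi\notin(w+1,\sqrt{w+2})$ correctly (the monotonicity of $F_\ell(\delta)$ in both $\ell$ and $\delta$ that you invoke is immediate from the definition of the weighted performance).

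However, you explicitly leave the window $w+1<\xi<\sqrt{w+2}$ (equivalently $w<1/\varphi$) unresolved, and you say so yourself: ``I expect this to be the crux of the proof\ldots\ My proposed remedy is to strengthen the induction hypothesis\ldots''. This is a genuine gap, not a routine check. Your observation that $(G_k)$ can be tight (e.g.\ $F_4(0)=F_3(1)$) is apposite: any slack-based argument will fail, and the comparison $F_{k-2}(\sqrt{w+2})\ge F_{k-1}(w+1)$ that would be needed on this window does not follow from $(G_{k-1})$ and monotonicity alone. Your suggested strengthening (convexity of $\psi_k=1/F_k$ with increments $\le 1$) is plausible---it holds for $\psi_1,\psi_2,\psi_3,\psi_4$ from the explicit formulas in Proposition~\ref{small n medium T}---but you have not shown that it propagates through the recursion, nor that it closes the gap. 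As it stands the proposal is an honest partial attack on an open conjecture, not a proof.
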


In the proof of Proposition \ref{small n medium T} below, we verify that for $n=3$ or $4$, we have $\lambda^{\mathcal I_{n-2,T}, \frac{\delta_0+1+\delta_2}{\delta_2}}\leq \lambda^{\mathcal I_{n-1,T}, \frac{\delta_0+1}{\delta_2}}$. Using Proposition \ref{small n medium T}, we can easily verify that it is also true for $n=5$. So the conjecture is true for small values of $n$. The conjecture states that it is inefficient for the adversary to pass on the possibility of releasing a request when the vehicle comes back near $0$.

\bigskip
\noindent Table \ref{abc} summarises the results obtained so far:

\begin{table}[ht]
    \centering
    \setcellgapes{3pt}
    \makegapedcells
    \begin{tabular}{|c|c|c|}
        
         \hline
         Delay &  Performance & Competitive Ratio\\
         \hline
         $T<T_1$ & $\frac 1n$ & $\frac 1n$\\
         \hline
         $ T_1\leq T<T_0$ & $\frac 1n$ & $>\frac1n$\\
         \hline
         $T_0\leq T<\frac 12$ & $\geq\frac 1n+\e$ & $\geq\frac 1n+\e$\\
         \hline
         $\frac 12\leq T< 1$ & $\beta_n$ & $\geq \beta_n$  \\
         \hline
         $1\leq T< 2-\frac 1{n-1}$ & $\alpha_{n-\lfloor \frac{1}{2-T}\rfloor}$ & $\alpha_{n-\lfloor \frac{1}{2-T}\rfloor}$\\
         \hline
         $2-\frac 1{n-1}\leq T$ & 1 & 1\\
         \hline
    \end{tabular}
    \caption{Performances and Competitive Ratios with
     $T_0=\frac 1{2^{n-3}+1}, T_1=\frac 1{2^{n-1}-2}$, $\e~=~\frac 1{n(n-1)(n+3)}$.}
    \label{abc}
\end{table}


The performance remains unknown for $T_0\leq T<\frac 12$, and the competitive ratio for $T_1\leq T< 1$.

\section{\OPTiWinD$\;$with a total number of requests at most $4$}\label{sec:small n}

In this section, we will give a complete description of what happens when the number of requests is at most $4$.

When there is only one request, the vehicle can always reach it, so the performance and competitive ratio are equal to 1.

When the number of requests is at most two,  the performance and competitive ratio are equal to $\frac 12$ if $T<1$ and $1$ otherwise.

For $n=3$, the cases $T< \frac 12$ and $T\geq 1$ are covered by Theorem \ref{GREE0 superoptimal} and Theorem \ref{large delay}, respectively. The case $\frac 12\leq T<1$ is given as an induction formula in Theorem \ref{weighted performance}. In the following proposition, we calculate the induction formula for $n=3$ and $n=4$.


\begin{prop}\label{small n medium T}
For $\frac 12\leq T<1$, we have $\lambda^{\mathcal I_{3,T},\delta_0}=\frac 2{3+\delta_0+\sqrt{\delta_0^2+2\delta_0+5}}$ and $\lambda^{\mathcal I_{4,T},\delta_0}=\frac 2{4+\delta_0+\sqrt{\delta_0^2+8}}$. 
\end{prop}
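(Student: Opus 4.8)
The plan is to apply Theorem~\ref{weighted performance} directly, starting from the base cases and using the induction formula once for $n=3$ and twice for $n=4$, while along the way verifying the extra claim referenced after Corollary~\ref{weightless induction}, namely that the $\lambda^{\mathcal I_{n-2,T},\cdot}$ term dominates the $\lambda^{\mathcal I_{n-1,T},\cdot}$ term, so that the $\min$ collapses to a single term and the optimisation becomes a genuine one-variable $\inf$--$\max$ problem.

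\medskip
\noindent\textbf{Case $n=3$.} By Theorem~\ref{weighted performance}, $\lambda^{\mathcal I_{3,T},\delta_0}=\inf_{\delta_2\geq 0}\max\{\frac{1}{\delta_0+1+\delta_2},\min\{\lambda^{\mathcal I_{1,T},(\delta_0+1+\delta_2)/\delta_2},\lambda^{\mathcal I_{2,T},(\delta_0+1)/\delta_2}\}\}$. Substituting the base-case values $\lambda^{\mathcal I_{1,T},x}=\frac1{1+x}$ and $\lambda^{\mathcal I_{2,T},x}=\frac1{2+x}$, the first inner term equals $\frac{\delta_2}{\delta_2+\delta_0+1+\delta_2}=\frac{\delta_2}{2\delta_2+\delta_0+1}$ and the second equals $\frac{\delta_2}{2\delta_2+\delta_0+1}$ as well — they coincide, so the $\min$ is immediate and equals $g(\delta_2):=\frac{\delta_2}{2\delta_2+\delta_0+1}$. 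Now $h(\delta_2):=\frac1{\delta_0+1+\delta_2}$ is decreasing in $\delta_2$ while $g$ is increasing, so the optimal $\delta_2$ for the adversary is where $g(\delta_2)=h(\delta_2)$, i.e.\ $\delta_2(\delta_0+1+\delta_2)=2\delta_2+\delta_0+1$, equivalently $\delta_2^2+(\delta_0-1)\delta_2-(\delta_0+1)=0$, giving $\delta_2=\frac{1-\delta_0+\sqrt{(\delta_0-1)^2+4(\delta_0+1)}}{2}=\frac{1-\delta_0+\sqrt{\delta_0^2+2\delta_0+5}}{2}$. Plugging this back into $h$ and simplifying yields $\lambda^{\mathcal I_{3,T},\delta_0}=\frac{2}{3+\delta_0+\sqrt{\delta_0^2+2\delta_0+5}}$. (One checks the crossing point indeed lies in $\delta_2\geq 0$ for all $\delta_0\geq 0$, and that since $g$ increases and $h$ decreases, $\max\{g,h\}$ is minimised exactly there.)

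\medskip
\noindent\textbf{Case $n=4$.} Again by Theorem~\ref{weighted performance}, $\lambda^{\mathcal I_{4,T},\delta_0}=\inf_{\delta_2\geq 0}\max\{\frac1{\delta_0+1+\delta_2},\min\{\lambda^{\mathcal I_{2,T},(\delta_0+1+\delta_2)/\delta_2},\lambda^{\mathcal I_{3,T},(\delta_0+1)/\delta_2}\}\}$. Here the first inner term is $\lambda^{\mathcal I_{2,T},x}=\frac1{2+x}$ with $x=(\delta_0+1+\delta_2)/\delta_2$, giving $\frac{\delta_2}{3\delta_2+\delta_0+1}$; the second is obtained by substituting $\delta_0\mapsto(\delta_0+1)/\delta_2$ into the formula just proved for $n=3$. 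The key verification — and the step I expect to be the main obstacle — is to show that the $n=2$ term (with shifted parameter) is $\leq$ the $n=3$ term (with shifted parameter), so the $\min$ is the $n=2$ value; this is the instance of the claim ``$\lambda^{\mathcal I_{n-2,T},\frac{\delta_0+1+\delta_2}{\delta_2}}\leq\lambda^{\mathcal I_{n-1,T},\frac{\delta_0+1}{\delta_2}}$'' asserted after Corollary~\ref{weightless induction}, and it reduces to a polynomial inequality in $\delta_0,\delta_2$ that I would check by clearing denominators and comparing (both sides are positive, so squaring to remove the square root is legitimate). Granting this, the $\min$ equals $g(\delta_2)=\frac{\delta_2}{3\delta_2+\delta_0+1}$, which is increasing, while $h(\delta_2)=\frac1{\delta_0+1+\delta_2}$ is decreasing; equating them gives $\delta_2(\delta_0+1+\delta_2)=3\delta_2+\delta_0+1$, i.e.\ $\delta_2^2+(\delta_0-2)\delta_2-(\delta_0+1)=0$, so $\delta_2=\frac{2-\delta_0+\sqrt{(\delta_0-2)^2+4(\delta_0+1)}}{2}=\frac{2-\delta_0+\sqrt{\delta_0^2+8}}{2}$. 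Substituting into $h$ and simplifying gives $\lambda^{\mathcal I_{4,T},\delta_0}=\frac{2}{4+\delta_0+\sqrt{\delta_0^2+8}}$.

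\medskip
\noindent In both cases the routine parts are: (i) confirming monotonicity of $g$ (increasing) and $h$ (decreasing) in $\delta_2$, which makes ``$\inf\max$'' equal the common value at the crossing; (ii) checking the crossing $\delta_2$ is nonnegative; and (iii) the algebraic simplification from $h$ evaluated at the crossing to the closed form, which uses $\delta_2=h^{-1}$-type manipulation: since at the optimum $\delta_0+1+\delta_2$ is the relevant denominator and $\delta_2$ solves the quadratic, one gets $\delta_0+1+\delta_2 = \tfrac12\bigl(\delta_0+2+\sqrt{\delta_0^2+2\delta_0+5}\bigr)+ \tfrac12(\delta_0+1)\cdot(\text{something})$ — more cleanly, from $\delta_2^2 = (2-\delta_0)\delta_2+\delta_0+1$ (for $n=4$) one derives $\delta_0+1+\delta_2 = \delta_2^2/\delta_2 \cdot(\cdots)$; in practice it is fastest to write $2\lambda^{-1} = \delta_0+1+\delta_2$ doubled and substitute the surd expression for $2\delta_2$ directly. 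The only genuinely non-mechanical point is the domination inequality in the $n=4$ case, and it is precisely there that the reduction to the conjectured simplified recursion is being justified for these small $n$.
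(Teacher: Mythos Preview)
Your approach is essentially the same as the paper's: apply Theorem~\ref{weighted performance}, collapse the inner $\min$, and solve the resulting one-variable $\inf$--$\max$ by equating the decreasing and increasing branches, yielding the same quadratics and closed forms. Your observation that for $n=3$ the two inner terms actually coincide is a nice sharpening; for $n=4$ the domination inequality you leave as an outline is exactly the short check the paper performs, namely $\sqrt{{\delta'_0}^2+2\delta'_0+5}\le 3+\delta'_0$ for $\delta'_0=(\delta_0+1)/\delta_2\ge 0$.
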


\begin{proof}
Using the formulas given in Theorem \ref{weighted performance}, we obtain:

$$\lambda^{\mathcal I_{3,T},\delta_0}=\inf\limits_{\delta_2\geq 0} \max\big\{\frac1{\delta_0+1+\delta_2},\frac 1{1+\frac{\delta_0+1+\delta_2}{\delta_2}}\big\}.$$

In the $\inf\max$ above, the first term is decreasing while the second is increasing in $\delta_2$. Hence, the $\inf\max$ is reached when there is equality.

\begin{eqnarray*}
\lefteqn{\delta_0+1+\delta_2=1+\frac{\delta_0+1+\delta_2}{\delta_2}}~~~~~~~~~~~~~~~~~~~~~~~~~~~~\\
&\Leftrightarrow& \delta_2^2+(\delta_0-1)\delta_2-(\delta_0+1)=0\\
&\Leftrightarrow& \delta_2=\frac{1-\delta_0+\sqrt{\delta_0^2+4\delta_0+5}}2
\end{eqnarray*}
Hence, $\lambda^{\mathcal I_{3,T},\delta_0}=\frac 2{3+\delta_0+\sqrt{\delta_0^2+4\delta_0+5}}$.

\vspace{.2cm}
It follows that for $\d'_0=\frac{\delta_0+1}{\delta_2}$, we have
\begin{eqnarray*}
\lambda^{\mathcal I_{3,T},\delta'_0}&=&\frac2{3+\d'_0+\sqrt{{\d'_0}^2+4\d'_0+5}}\\
&<&\frac2{3+\d'_0+\sqrt{(\d'_0+3)^2}}=\frac1{3+\d'_0}=\lambda^{\mathcal I_{2,T},1+\delta'_0}.
\end{eqnarray*}
Hence, $\lambda^{\mathcal I_{4,T},\delta_0}=\inf\limits_{\delta_2\geq 0} \max\big\{\frac1{\delta_0+1+\delta_2},\frac 1{2+\frac{\delta_0+1+\delta_2}{\delta_2}}\big\}.$

Thus, 
\begin{eqnarray*}
\lefteqn{\delta_0+1+\delta_2=2+\frac{\delta_0+1+\delta_2}{\delta_2}} ~~~~~~~~~~~~~~~~~~~~~~~~\\
&\Leftrightarrow& \delta_2^2+(\delta_0-2)\delta_2-(\delta_0+1)=0\\
&\Leftrightarrow& \delta_2=\frac{2-\delta_0+\sqrt{\delta_0^2+8}}2.
\end{eqnarray*}
Hence, $\lambda^{\mathcal I_{4,T},\delta_0}=\frac 2{4+\delta_0+\sqrt{\delta_0^2+8}}$.
\end{proof}

\begin{corollary}\label{3&4}
For $\frac 12\leq T<1$, we have $\lambda^{\mathcal I_{3,T}}=\frac 1{\varphi^2}$, where $\varphi$ is the golden ratio, and $\lambda^{\mathcal I_{4,T}}=1-\frac{\sqrt 2}2$.
\end{corollary}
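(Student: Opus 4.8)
The plan is to obtain both values as the special case $\delta_0 = 0$ of Proposition~\ref{small n medium T}. First I would invoke the observation already made in the proof of Corollary~\ref{weightless induction}: the optimal performance coincides with the optimal weighted performance when no requests have yet been missed, i.e. $\lambda^{\mathcal I_{n,T}} = \lambda^{\mathcal I_{n,T},0}$. Since the vehicle's starting position at $t=0$ is $0\in\,]T-1,1-T[$ for $\frac12\le T<1$, the weighted-performance setup with $\delta_0=0$ is exactly the original game, so this identification is legitimate. It then suffices to substitute $\delta_0 = 0$ into the two closed-form expressions of Proposition~\ref{small n medium T}, which are valid for all $\delta_0\ge 0$.

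For $n = 3$ this gives $\lambda^{\mathcal I_{3,T}} = \lambda^{\mathcal I_{3,T},0} = \frac{2}{3 + \sqrt{5}}$, and it remains only to recognise this as $1/\varphi^2$. Using $\varphi = \tfrac{1+\sqrt 5}{2}$ together with the defining relation $\varphi^2 = \varphi + 1$ one gets $\varphi^2 = \tfrac{3+\sqrt 5}{2}$, hence $1/\varphi^2 = \tfrac{2}{3+\sqrt 5}$, as claimed.

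For $n = 4$, substituting $\delta_0 = 0$ yields $\lambda^{\mathcal I_{4,T}} = \lambda^{\mathcal I_{4,T},0} = \frac{2}{4 + \sqrt 8} = \frac{1}{2 + \sqrt 2}$, and rationalising the denominator (multiply by $\tfrac{2-\sqrt 2}{2-\sqrt 2}$, using $(2+\sqrt2)(2-\sqrt2)=2$) gives $\tfrac{2-\sqrt 2}{2} = 1 - \tfrac{\sqrt 2}{2}$. There is essentially no obstacle: the corollary is a direct specialisation of an already-proved proposition plus two one-line algebraic identities. The only point to state carefully is the reduction $\lambda^{\mathcal I_{n,T}}=\lambda^{\mathcal I_{n,T},0}$, which I would cite from Corollary~\ref{weightless induction}; everything after that is bookkeeping.
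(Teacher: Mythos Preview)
Your proposal is correct and is exactly the intended argument: the paper states Corollary~\ref{3&4} without proof because it is an immediate specialisation of Proposition~\ref{small n medium T} at $\delta_0=0$, using the identification $\lambda^{\mathcal I_{n,T}}=\lambda^{\mathcal I_{n,T},0}$ from Corollary~\ref{weightless induction}, followed by the two algebraic simplifications you carry out.
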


\begin{prop}\label{golden}
For $\frac 12\leq T<1$, we have $\gamma^{\mathcal I_{3,T}}=\frac 1{\varphi^2}$, where $\varphi$ is the golden ratio.
\end{prop}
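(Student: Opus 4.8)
The statement splits into $\gamma^{\mathcal I_{3,T}}\ge\frac{1}{\varphi^2}$ and $\gamma^{\mathcal I_{3,T}}\le\frac{1}{\varphi^2}$. The first inequality is free: combining the general bound $\lambda^{\mathcal I_{n,T}}\le\gamma^{\mathcal I_{n,T}}$ recorded earlier with Corollary~\ref{3&4} gives $\gamma^{\mathcal I_{3,T}}\ge\lambda^{\mathcal I_{3,T}}=\frac{1}{\varphi^2}$. So the content is the upper bound: for an arbitrary online algorithm $ALG$ one must produce an instance $I\in\mathcal I_{3,T}$ with $\lambda_{ALG}^{I}/\lambda_{OPT}^{I}\le\frac{1}{\varphi^2}$.

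The plan is to reuse the adversary strategy underlying the $\inf\max$ of Theorem~\ref{weighted performance} (equivalently Proposition~\ref{small n medium T}) for $n=3$, $\delta_0=0$, in which $\delta_2=\varphi$ is the weight equalising the two options, but to pin down the release times so that, in addition, the offline optimum serves every released request. Release $f_1=(-1,[1,3],1)$; as in the arguments of Section~\ref{T=0} we may assume the vehicle heads toward $-1$ (the opposite behaviour is punished by a symmetric strategy). Fix $t_2\in(\frac{3+T}{2},2)$ — nonempty precisely because $T<1$, and contained in $(1+T,2)$ since $\frac{3+T}{2}\ge 1+T$, so the delay between $f_1$ and $f_2$ is respected — and release $f_2=(1,[t_2,t_2+2],\varphi)$.

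If the vehicle keeps heading toward $f_1$, it reaches $-1$ at time $2$ and can no longer be at $1$ before $t_2+2<4$, so it serves only $f_1$; the adversary stops, and with only $f_1,f_2$ released the optimum serves both (at $-1$ at $t=1$, then at $1$ at $t=3\in[t_2,t_2+2]$), so the ratio is $\frac{1}{1+\varphi}=\frac{1}{\varphi^2}$. If instead the vehicle turns toward $f_2$ at $t_2$, it then moves right until the next release and so can no longer reach $-1$ inside $[1,3]$, i.e.\ $f_1$ is lost; the adversary releases $f_3=(-1,[t_3,t_3+2],\varphi)$ with $t_3\in(3,2t_2-T)$, again nonempty because $t_2>\frac{3+T}{2}$, and with $3>t_2+T$ so the delay holds. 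When $f_3$ appears the vehicle sits at $x=t_3-2t_2+1\in(0,1-T)$, with ETA $2t_2$ to $f_2$ and $2t_3-2t_2+2$ to $f_3$; a direct application of Remark~\ref{ETA} shows that serving both would require either $t_3\ge 2t_2$ (going to $f_2$ first) or $t_3\le\frac{3t_2-2}{2}$ (going to $f_3$ first), and both are excluded by $\frac{3t_2-2}{2}<2<3<t_3<2t_2$. Hence $ALG$ collects at most weight $\varphi$ out of $1+2\varphi$, while the optimum zig-zags $-1\to 1\to -1$ and serves all three ($f_1$ at $t=1$, $f_2$ at $t=3$, $f_3$ at $t=5\in[t_3,t_3+2]$ since $t_3<2t_2-T<4$), so the ratio is $\frac{\varphi}{1+2\varphi}=\frac{1}{\varphi^2}$.

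In every branch the competitive ratio of $ALG$ on $I$ equals $\frac{1}{\varphi^2}$, so $\gamma^{\mathcal I_{3,T}}\le\frac{1}{\varphi^2}$, which completes the proof, all the arithmetic reducing to $1+\varphi=\varphi^2$ and $\varphi/(1+2\varphi)=1/\varphi^2$. The one delicate point — and the feature that distinguishes this from the performance computation of Corollary~\ref{3&4} — is the joint bookkeeping on $t_2$ and $t_3$ in the second branch: we simultaneously need the delay bound, the impossibility for the online vehicle of grabbing both $f_2$ and $f_3$, and the ability of the offline optimum to grab all three; the last forces $t_3>3$ and hence $t_2>3/2$, and these requirements are jointly satisfiable exactly on the range $\frac12\le T<1$. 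At the endpoint $T=\frac12$ one must take $t_2$ strictly above its minimum $1+T=\frac32$, for instance $t_2=\frac{7+T}{4}$ and $t_3=\frac{13-T}{4}$, and these explicit values should be verified against all the constraints.
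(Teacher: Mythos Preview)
Your proof is correct and follows essentially the same adversary strategy as the paper: release $f_1=(-1,[1,3],1)$, then $f_2=(1,[t_2,t_2+2],\varphi)$ with $t_2$ just below $2$, and if the vehicle turns toward $f_2$, release $f_3=(-1,[t_3,t_3+2],\varphi)$; the paper takes the specific values $t_2=2-\e$, $t_3=4-3\e$ with $\e<\min\{\tfrac13,1-T\}$, while you parameterise over ranges and verify the constraints more explicitly, and you also spell out the lower bound via $\gamma^{\mathcal I_{3,T}}\ge\lambda^{\mathcal I_{3,T}}=1/\varphi^2$ from Corollary~\ref{3&4}, which the paper leaves implicit.
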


\begin{proof}
For $\frac 12\leq T<1$ and $n=3$, the adversary can limit the competitive ratio by using the following strategy: 
\begin{enumerate}[label=(\alph*)]
    \item First, release a request $f_1=(-1,[1,3],1)$.
    \item Assuming the vehicle moves towards $f_1$, release  $f_2=(1,[2-\e,4-\e],\varphi)$, with $\e<\min\{\frac13,1-T\}$.
    \item If the vehicle keeps going towards $f_1$, do not release any more requests.
    \item Else, if the vehicle goes towards $f_2$, release  $f_3=(-1,[4-3\e,6-3\e],\varphi)$. (see Figure \ref{Fig2})
\end{enumerate}


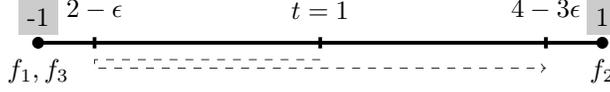
\begin{figure}
\centering
\begin{tikzpicture}[black,scale=.75]
\draw [solid,line width=.5mm](0,0)--(10,0);
\draw [solid,line width=.5mm, color=black](5,-.1)--(5,.1) node[above=.1cm] {\color{black}$t=1$};
\fill(0,0) circle(3pt) node[above=.1cm,fill=black!20] {-1} node[below=.1cm] {\color{black}$f_1,f_3$};
\fill(10,0) circle(3pt) node[above=.1cm,fill=black!20] {1} node[below=.1cm] {\color{black}$f_2$};
\draw [solid,line width=.5mm, color=black](1,-.1)--(1,.1) node[above=.1cm] {\color{black}$2-\e$};
\draw [solid,line width=.5mm, color=black](9,-.1)--(9,.1) node[above=.1cm] {\color{black}$4-3\e$};
\draw [color=black!80,->,dashed](5,-.3)--(1,-.3)--(1,-.45)--(9,-.45);
\end{tikzpicture}
\caption{vehicle's itinerary for $n=3$ and $\frac 12\leq T<1$ in Proposition \ref{golden}.}\label{Fig2}
\end{figure}


Since $2i-1\in [t_i,t_i+2]$, the offline algorithm can serve all the requests. If the vehicle chooses to keep going towards $f_1$ when $f_2$ is released, the competitive ratio will be $\frac 1{1+\varphi}$. Else, if it changes directions, the competitive ratio will be $\frac{\varphi}{1+2\varphi}=\frac 1{1+\varphi}=\frac1{\varphi^2}$. 
\end{proof}

In Table \ref{fig n=3}, we summarise the results obtained for $n=3$.

\begin{table}[ht]
    \centering
    \setcellgapes{3pt}
    \makegapedcells
    
    \begin{tabular}{|c|c|c|}
        
         \hline
         Delay &  Performance & Competitive Ratio\\
         \hline
         $T<1/2$ & $1/3$ & $1/3$\\
         \hline
         $1/2\leq T< 1$ & $1/\varphi^2\approx 0.3820$ & $1/\varphi^2\approx 0.3820$\\
         \hline
         $1\leq T<1.5$ & $1/2$ & $ 1/2$\\
         \hline
         $1.5\leq T$ & 1 & 1\\
         \hline
    \end{tabular}
    \caption{Performances and Competitive Ratios for $n=3$}
    \label{fig n=3}
\end{table}

For $n=4$, the cases $T< \frac 16$ and $T\geq 1$ are covered by Theorem \ref{GREE0 superoptimal} and Theorem \ref{large delay}, respectively. In the case where $\frac 12\leq T<1$, the performance is given in Corollary \ref{3&4}.

Note that for $n=4$ and $T<\frac 16$, the game  can be represented as a decision tree shown in Figure \ref{decision tree}. Assuming $\d_1=1$, $\d_4=\d_1$, $\d'_4=\d_3$, $\d''_4=\d'_3$ and $\d'''_4=\d_2$, the optimal performance is $$\lambda^{\mathcal I_{4,T}}\hspace{-.15cm}=\hspace{-.05cm}\min\limits_{\d_2}\max\{
\min\limits_{\d_3}\max\{\frac{\d_1}{S_4},\frac{\d_3}{S'_4}\}, \min\limits_{\d'_3}\max\{\frac{\d'_3}{S''_4},\frac{\d_2}{S'''_4}\}\}.$$

\begin{figure}
\centering
\begin{tikzpicture}[black,scale=.6]
\node{$f_1$}[level distance=3cm,grow=right]
    child{node{$f_2$}[sibling distance=2.5cm] 
        child{node{$f_3$}[sibling distance=1.2cm]
            child{node{$f_4$} node[right=.3cm,fill=black!20]{~1}
            node[right=1cm]{$\frac{\d_1}{S_4},\d_1=\d_4$}
            edge from parent node[below,sloped]{\small \textit{toward -1}}}
            child{node{$f'_4$}node[right=.3cm,fill=black!20]{-1}
            node[right=1cm]{$\frac{\d_3}{S'_4},\d_3=\d'_4$}
            edge from parent node[above,sloped]{\small \textit{toward 1}}}
            node[below=.3cm,fill=black!20]{~1}
            edge from parent node[below,sloped]{\small \textit{toward -1}}
        } 
        child{node{$f'_3$}[sibling distance=1.2cm]
            child{node{$f''_4$}node[right=.3cm,fill=black!20]{~1}
            node[right=1cm]{$\frac{\d'_3}{S''_4},\d'_3=\d''_4$}
            edge from parent node[below,sloped]{\small \textit{toward -1}}}
            child{node{$f'''_4$}node[right=.3cm,fill=black!20]{-1}
            node[right=1cm]{$\frac{\d_2}{S'''_4},\d_2=\d'''_4$}
            edge from parent node[above,sloped]{\small \textit{toward 1}}}
            node[below=.3cm,fill=black!20]{-1}
            edge from parent node[above,sloped]{\small \textit{toward 1}}
        } node[below=.3cm,fill=black!20]{~1} 
        edge from parent node[above,sloped]{\small \textit{toward -1}}
        } node[below=.3cm,fill=black!20] {-1};
\end{tikzpicture}
\caption{Decision tree for the vehicle's movements, $T<1/6$ and at most $4$ requests. The locations of the requests are given in the grey rectangles. The value of the competitive ratio and the condition on the weight of the fourth request relative to each branch are indicated on the right.}\label{decision tree}
\end{figure}
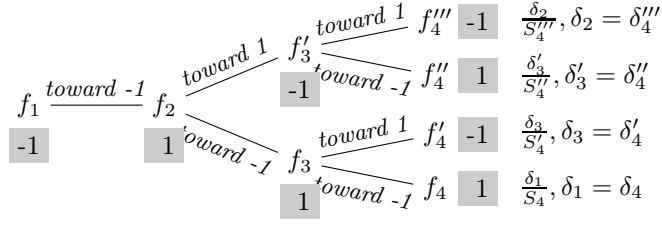

\begin{prop}
For $\frac 13\leq T<\frac12$, we have $\lambda^{\mathcal I_{4,T}}=2-\sqrt 3$.
\end{prop}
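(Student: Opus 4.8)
The plan is to pin down the first genuine decision the vehicle faces and to reduce $\lambda^{\mathcal I_{4,T}}$ to a one–parameter $\inf\max$. By rescaling weights, translating time and using the left–right symmetry, we may assume the first request is $f_1=(-1,[1,3],1)$ and that the vehicle sits at $0$ until it is released; if the vehicle does not then move towards $-1$, the adversary releases nothing further and the performance is $0$, so assume it does. The adversary next releases $f_2=(1,[1+T,3+T],\delta_2)$ at $t_2=1+T$; since $T<\tfrac12$ the vehicle can serve at most one of $f_1,f_2$, so the game splits into branch $L$ (keep heading to $f_1$) and branch $R$ (turn towards $f_2$). I claim
\[
\lambda^{\mathcal I_{4,T}}=\inf_{\delta_2\ge 0}\max\bigl\{V_L(\delta_2),\,V_R(\delta_2)\bigr\},
\]
where $V_L(\delta_2)=\tfrac{2}{3+\delta_2+\sqrt{\delta_2^{2}+2\delta_2+5}}$ and $V_R(\delta_2)=\tfrac{\delta_2}{1+3\delta_2}$, and that this infimum is $2-\sqrt3$. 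Indeed $V_L$ is decreasing and $V_R$ increasing, so the $\inf\max$ is attained where $V_L(\delta_2)=V_R(\delta_2)$; clearing denominators and squaring turns this into $2\delta_2^{3}-3\delta_2-1=(\delta_2+1)(2\delta_2^{2}-2\delta_2-1)=0$, whence $\delta_2=\tfrac{1+\sqrt3}{2}$ and the common value is $\tfrac{\delta_2}{1+3\delta_2}=2-\sqrt3$.

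To obtain $V_L$: in branch $L$ the adversary must release $f_3$ at $+1$, since at $-1$ the vehicle would collect it together with $f_1$, and $t_3$ must lie in $[1+2T,2)$, as $t_3\ge 2$ again lets the vehicle collect $f_1$ and $f_3$ together. Since $T<\tfrac12$, a vehicle reaching $-1$ at time $2$ can no longer reach $f_3$ (window ending at $3+2T<4$); and since $T\ge\tfrac13$ (taking $t_3$ slightly above $1+2T$ in the limiting case $T=\tfrac13$), Remark~\ref{ETA} shows a vehicle that diverts to $f_3$ reaches $+1$ only after $f_2$'s window has closed. Hence in branch $L$ the vehicle serves exactly one request: either $f_1$ (served weight $1$, total $1+\delta_2+\delta_3$, the adversary releasing nothing more as a fourth request could only be picked up together with $f_1$), or it diverts to $f_3$ and is then forced into the same kind of binary choice against a fourth request of weight $\delta_3$ (served weight $\delta_3$, total $1+\delta_2+2\delta_3$). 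The vehicle takes the better of $\tfrac{1}{1+\delta_2+\delta_3}$ and $\tfrac{\delta_3}{1+\delta_2+2\delta_3}$ and the adversary minimises over $\delta_3$; these being respectively decreasing and increasing in $\delta_3$, the value is attained at equality, and the resulting quantity is exactly $V_L(\delta_2)$ — the same elementary optimisation as in the proof of Proposition~\ref{small n medium T}.

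To obtain $V_R$: in branch $R$ the adversary releases $f_3$ at $-1$, delayed to some $t_3$ in the interval $(\tfrac32+T,\,2+2T)$, which is nonempty and lies above $1+2T$ precisely because $T<\tfrac12$. The lower endpoint prevents the vehicle from turning back far enough to still reach $f_1$, the upper endpoint prevents it from reaching $f_3$ after serving $f_2$, so it must choose one of $f_2,f_3$; the fourth request is then placed at the endpoint opposite the vehicle's current target and timed (using $\tfrac13\le T<\tfrac12$) so as to force another binary choice. Thus the vehicle serves exactly one of $f_2,f_3,f_4$; since $x\mapsto x+x^2$ is injective, equalising the two ratios at each node forces $\delta_3=\delta_4=\delta_2$ at the adversary's optimum, and the vehicle collects weight $\delta_2$ out of $1+3\delta_2$, i.e.\ $V_R(\delta_2)$. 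Taking $\delta_2=\tfrac{1+\sqrt3}{2}$ then gives $\lambda^{\mathcal I_{4,T}}\le 2-\sqrt3$.

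For the matching lower bound I would use the obvious threshold algorithm: head to $f_1$ until $f_2$ appears; if $\delta_2\le\tfrac{1+\sqrt3}{2}$ keep heading to $f_1$ and play the branch-$L$ subgame by the analogous threshold rule, and if $\delta_2>\tfrac{1+\sqrt3}{2}$ turn towards $f_2$ and, in the branch-$R$ subgame, always keep within reach the heaviest endpoint request seen so far (or collect $f_1$ together with an early request at $-1$ whenever that is possible). The performance is then at least $\max\{V_L(\delta_2),V_R(\delta_2)\}\ge 2-\sqrt3$, the two regimes being covered by $\tfrac{1}{1+\delta_2+\delta_3}\ge 2-\sqrt3$ while the non-$f_1$ weights stay below the threshold, and by $\tfrac{M}{1+3M}\ge 2-\sqrt3$ for $M\ge\tfrac{1+\sqrt3}{2}$, where $M$ is the heaviest non-$f_1$ weight and the total of all weights is at most $1+3M$ since there are at most three such requests. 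The main obstacle is exactly this last verification: one has to carry out the reachability bookkeeping under $\tfrac13\le T<\tfrac12$ in every branch — including instances with fewer than four requests and with requests at interior points rather than at $\pm1$, which can only help the vehicle — and check that the algorithm can always hold a request of the claimed weight; this is where $T\ge\tfrac13$ (via Lemma~\ref{2 fires}, which kills the adversary's "all requests at one end" strategy) and $T<\tfrac12$ are both essential.
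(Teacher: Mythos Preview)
Your proof is correct and follows essentially the same approach as the paper's. Both arguments set up the four–leaf decision tree of Figure~\ref{decision tree}, observe that for $T\ge\tfrac13$ the leaf $f_4$ in the ``keep going to $f_1$'' branch disappears (so $\d_4=0$), equalise the leaf ratios to find the adversary's optimal weights, and arrive at the cubic $2\d_2^3-3\d_2-1=0$ with positive root $\d_2=\tfrac{1+\sqrt3}{2}$ and value $2-\sqrt3$. Your $V_L(\d_2)$ is exactly the paper's $\lambda^{\mathcal I_{3,T},\d_2}$ from Proposition~\ref{small n medium T}, and your $V_R(\d_2)=\tfrac{\d_2}{1+3\d_2}$ is what the paper obtains by setting $\d'_3=\d''_4=\d'''_4=\d_2$; so the decomposition into $\inf_{\d_2}\max\{V_L,V_R\}$ is simply a repackaging of the paper's min--max over the tree.

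The only notable difference is in the lower bound: the paper leaves the algorithm implicit in the min--max value (and then exhibits the matching adversary with explicit release times), whereas you spell out the natural threshold algorithm and honestly flag the remaining reachability bookkeeping as the ``main obstacle''. In fact the paper is no more rigorous on this point --- it too relies on the decision tree being the correct game form without walking through every reachability check --- so your treatment is at least as complete. Your choice $t_2=1+T$ differs from the paper's $t_2=\tfrac32$, but this is immaterial: both are admissible and the computed performance depends only on the weights, not on the specific release times within the allowed intervals.
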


\begin{proof}
For $\frac 13\leq T<\frac12$, when the vehicle keeps moving towards $-1$, the adversary cannot release the fourth request before $f_1$ is served. This corresponds to replacing $\d_4=\d_1$ in the case shown in Figure \ref{decision tree} with $\d_4=0$. 
By setting $\frac{\d'_3}{S''_4}=\frac{\d_2}{S'''_4}$, we obtain $\d'_3=\d_2$. Setting $\frac{\d_1}{S_4}=\frac{\d_3}{S'_4}$ with $\d_4=0$ gives $\d_3=\frac{1-\d_2}2+\frac12\sqrt{\d_2^2+2\d_2+5}$. Then, setting $\frac{\d_1}{S_4}=\frac{\d_2}{S'''_4}$, we deduce that $\d_2$ is a root of $2X^3-3X-1$. Hence, $\d_2=\frac{1+\sqrt 3}2$, and $\lambda^{\mathcal I_{4,T}}=2-\sqrt 3$.

The adversary can limit the performance to $2-\sqrt 3$ by using the following strategy with the optimal weights found above and $\e=\frac12 -T$:
\begin{enumerate}[label=(\alph*)]
    \item First, release a request $f_1=(-1,[1,3],1)$.
    \item  Assuming the vehicle moves towards $f_1$, release $f_2=(1,[2-T-\e,4-T-\e],\d_2)$.
    \item If the vehicle keeps going towards $-1$, release $f_3=(1,[2-\e,4-\e],\d_3)$. If the vehicle keeps going towards $-1$, do not release a fourth request; else, if it changes directions, release a request $f'_4=(-1,[4-3\e,6-3\e],\d'_4)$.
    \item Else, if the vehicle changes directions, release a request $f'_3=(-1,[4-3T-3\e,6-3T-3\e],\d'_3)$. If it changes directions again, release a request $f''_4=(1,[6-4T-5\e,8-4T-5\e],\d''_4)$; else, release $f'''_4=(-1,[4-2T-3\e,6-2T-3\e],\d'''_4)$.
\end{enumerate}

This strategy satisfies the condition relative to the delays and prevents the vehicle from serving two requests. Hence, $\lambda^{\mathcal I_{4,T}}=2-\sqrt 3$.
\end{proof}

In order to avoid a lengthy case study, the remaining competitive ratios for $n=4$ will be listed without detailed proof; but we will give a general idea of how they can be obtained.

\begin{claims}
For $n=4$ and $\frac16\leq T<1$, the competitive ratios are as follows:
\begin{enumerate}
    \item for $\frac16\leq T<\frac15$, $\gamma^{\mathcal I_{4,T}}\approx 0.2578$,
    \item for $\frac15\leq T<\frac14$, $\gamma^{\mathcal I_{4,T}}=2-\sqrt  3\approx 0.2679$,
    \item for $\frac14\leq T<\frac13$, $\gamma^{\mathcal I_{4,T}}\approx 0.2803$,
    \item for $\frac13\leq T<\frac12$, $\gamma^{\mathcal I_{4,T}}=1-\frac{\sqrt 2}2\approx 0.2929$,
    \item for $\frac12\leq T<1$, $\gamma^{\mathcal I_{4,T}}\approx 0.3177$.
\end{enumerate}
\end{claims}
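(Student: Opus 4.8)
The plan is to handle each of the five delay ranges separately, reusing the two building blocks already established: the decision-tree description of the $n=4$ game for small delays (Figure~\ref{decision tree}) together with its $\inf\max$ formula, and the weighted-performance recursion of Theorem~\ref{weighted performance} for $\frac12\le T<1$. For each range, both an adversary strategy (upper bound on $\gamma$) and an online algorithm (lower bound) must be produced; as in the proofs of Proposition~\ref{golden} and the preceding proposition, the adversary will release requests at the two endpoints $\pm1$ with carefully chosen release times so that (i)~the delay constraint $t_{i+1}-t_i\ge T$ holds, (ii)~the offline optimum serves all four requests by being at $(-1)^i$ at time $2i-1$, and (iii)~the online vehicle can serve at most one request. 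Under those three conditions the competitive ratio equals $\max(\text{weight served})/S_4$, so everything reduces to an optimisation over the weights $\delta_2,\delta_3,\delta_3',\delta_4,\dots$ subject to the decision-tree branching.

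For the range $\frac13\le T<\frac12$ (item 4), I would argue as in the computation of $\lambda^{\mathcal I_{4,T}}$: the vehicle moving steadily toward $-1$ cannot be hit by a fourth request before reaching $f_1$, so the relevant tree has $\delta_4=0$, but now we equalise the \emph{competitive-ratio} expressions $\frac{\delta_i}{S_4}$ along every root-to-leaf branch rather than the performance expressions. Setting $\frac{\delta_3'}{S_4''}=\frac{\delta_2}{S_4'''}$ forces $\delta_3'=\delta_2$; setting $\frac{\delta_1}{S_4}=\frac{\delta_3}{S_4'}$ with $\delta_4=0$ pins $\delta_3$; and the final equalisation $\frac{\delta_1}{S_4}=\frac{\delta_2}{S_4'''}$ yields a cubic in $\delta_2$ whose relevant root gives $\gamma^{\mathcal I_{4,T}}=1-\frac{\sqrt2}{2}$. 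For $\frac16\le T<\frac13$ (items 1--3), the fourth request \emph{can} be released even while the vehicle heads toward $-1$, provided $2-t_4\ge\frac12(2-t_3)$, i.e. depending on how small $2-T$-type quantities are; the three sub-ranges $[\frac16,\frac15),[\frac15,\frac14),[\frac14,\frac13)$ correspond to which subset of the tree leaves remain reachable, and in each sub-range one solves the resulting (quartic or cubic) equalisation system for the value of $\gamma$. For $\frac12\le T<1$ (item 5), I would instead invoke Theorem~\ref{weighted performance} in its competitive-ratio form: mimic its proof but with the objective being the ratio $\frac{(\text{served weight})}{(\text{total released weight, which equals }S_n)}$, giving a recursion $\gamma^{\mathcal I_{n,T},\delta_0}=\inf_{\delta_2}\max\{\frac{1}{\delta_0+1+\delta_2},\min\{\gamma^{\mathcal I_{n-2,T},\dots},\gamma^{\mathcal I_{n-1,T},\dots}\}\}$; unrolling this for $n=4$ with the base cases $\gamma^{\mathcal I_{1,T},\delta_0}=\frac1{1+\delta_0}$, $\gamma^{\mathcal I_{2,T},\delta_0}=\frac1{2+\delta_0}$ produces a nested optimisation whose numerical solution is $\approx 0.3177$.

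The matching algorithms are the natural adaptations of $Al2$: whenever equalisation has been used to build the adversary strategy, the vehicle's optimal response at each branch point is to compare the ratio it would secure by committing to the current target against the ratio guaranteed by switching, and to pick the larger — exactly the $GR_1$-style threshold rule, here with the threshold set to the candidate value of $\gamma$ for the relevant range. One then checks that against \emph{any} adversary (not just the equalising one) this rule never does worse than that threshold, by the same phase-decomposition argument used in the proof of Theorem~\ref{large delay}: either the adversary eventually ``wastes'' a request (so few enough requests are live that the served one already dominates), or the vehicle keeps switching and the final served weight over $S_4$ is bounded below by the threshold by construction.

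I expect the main obstacle to be the bookkeeping in the range $\frac16\le T<\frac13$: there the reachability condition $2-t_{k+1}\ge\frac12(2-t_k)$ interacts with the delay constraint in a way that changes which leaves of the decision tree survive, so the $\inf\max$ is taken over a shrinking feasible region and the optimal weight vector — hence the algebraic equation defining $\gamma$ — is genuinely different in each of the three sub-intervals. Verifying that the three claimed values $0.2578$, $2-\sqrt3$, $0.2803$ are the correct roots (and that they join continuously at $T=\frac15,\frac14$ and with $1-\frac{\sqrt2}{2}$ at $T=\frac13$) is the delicate part; the remaining ranges are routine given Theorem~\ref{weighted performance} and the decision-tree formula. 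Since the statement is a \texttt{claims} environment presented without detailed proof, I would present these derivations at the level of the equalisation equations and their roots, referring to the general principles above for the algorithm-side bounds.
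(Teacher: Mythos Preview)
Your overall framework---decision tree, equalisation of the branch ratios, threshold algorithm on the vehicle's side---matches the paper's. The gap is in \emph{what drives the pruning} of the tree, and consequently in \emph{which} leaves are dropped in each range.

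The distinction between performance and competitive ratio is not that the branch expressions $\delta_i/S$ change; they are identical once the offline serves everything. What changes is the adversary's feasible set: on top of the delay constraint and the ``cannot serve two'' constraint (these together give upper bounds like $t''_4<6-6T$), the competitive-ratio adversary must also satisfy the offline constraint $2i-1\in[t_i,t_i+2]$ along the offline's alternating walk (this gives lower bounds like $t''_4\ge 5$). The paper's thresholds $\tfrac16,\tfrac15,\tfrac14,\tfrac13,\tfrac12$ come from asking, for each subset of $\{f_4,f'_4,f''_4,f'''_4\}$ that the adversary drops, when the resulting interval is nonempty. The reachability-type inequality $2-t_{k+1}\ge\tfrac12(2-t_k)$ you invoke is the \emph{upper-bound} half of this; by itself it cannot produce the sub-interval breakpoints.

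Item~4 shows the error concretely. You set only $\delta_4=0$ and then carry out exactly the equalisation that the preceding proposition performed for the performance: $\tfrac{\delta'_3}{S''_4}=\tfrac{\delta_2}{S'''_4}$ with $\delta'''_4=\delta_2$, then $\tfrac{1}{S_3}=\tfrac{\delta_3}{S'_4}$, then equalise the two sub-trees. That computation is precisely the one that yielded $2-\sqrt3$, not $1-\sqrt2/2$. In fact the adversary cannot keep $f'''_4$ once $T\ge\tfrac14$, because the chained upper bound $t'''_4<4-4T$ is then incompatible with the offline lower bound. The paper's item~4 drops \emph{both} $f_4$ and $f'''_4$; the lower-right branch then has only three requests and ratio $\delta_2/(1+\delta_2+\delta'_3)$, and the resulting equalisation (which is solved by $\delta_2=1$) gives $1-\sqrt2/2$. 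The same mechanism---successively larger subsets $\{f''_4\}$, $\{f_4\}$, $\{f_4,f''_4\}$, $\{f_4,f'''_4\}$ being dropped---is what produces items 1 through 4, and each subset must be named explicitly to obtain the correct polynomial system.
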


\textit{General Idea of the proof:} If all the requests in figure \ref{decision tree} are maintained, from the delay condition and from the fact that the adversary prevents the vehicle from serving two requests, we obtain the following conditions on the release times: 
\begin{itemize}
    \item $t_4< \t_1=2$,
    \item $t_3\leq t_4-T<2-T$,
    \item $t_2\leq t_3-T<2-2T$,
    \item $t'''_4<\t_2=2t_2<4-4T$,
    \item $t'_3<t'''_4-T<4-5T$,
    \item $t''_4<\t_3\leq t'_3+2-T<6-6T$.
\end{itemize}

In order for the optimal offline algorithm to serve all the requests, we require that $t''_4\geq 5$. Hence, this is possible only when $T<\frac 16$.

If the adversary chooses not to release $f''_4$, the conditions become:
\begin{itemize}
    \item $t_4< 2$,
    \item $t_3\leq t_4-T<2-T$,
    \item $t_2\leq t_3-T<2-2T$,
    \item $t'''_4<\t_2=2t_2<4-4T$,
    \item $t'_3<t'''_4-T<4-5T$,
\end{itemize}

The optimal offline condition now gives $T<\frac 15$.

If the adversary chooses not to release $f_4$, the conditions become:
\begin{itemize}
    \item $t_3<\t_1=2$,
    \item $t_2\leq t_3-T<2-T$,
    \item $t'''_4<\t_2=2t_2<4-2T$,
    \item $t'_3<t'''_4-T<4-3T$,
    \item $t''_4<\t_3\leq t'_3+2-T<6-4T$.
\end{itemize}

The optimal offline condition now gives $T<\frac 14$.

If the adversary chooses not to release $f_4$ and $f''_4$, the conditions become:
\begin{itemize}
    \item $t_3<\t_1=2$,
    \item $t_2\leq t_3-T<2-T$,
    \item $t'''_4<\t_2=2t_2<4-2T$,
    \item $t'_3<t'''_4-T<4-3T$,
\end{itemize}

The optimal offline condition now gives $T<\frac 13$.

If the adversary chooses not to release $f_4$ and $f'''_4$, the conditions become:
\begin{itemize}
    \item $t_3<\t_1=2$,
    \item $t_2\leq t_3-T<2-T$,
    \item $t'_3<\t_2=2t_2<4-2T$,
    \item $t''_4<\t_3\leq t'_3+2<6-2T$,
\end{itemize}

The optimal offline condition now gives $T<\frac 12$.

Solving the corresponding minmax problem in each of these cases yields the above competitive ratios. Note that choosing not to release $f'_4$ does nothing to loosen the conditions on the times. Also, choosing not to release $f'''_4$ gives the condition $T<\frac14$ and yields the same competitive ratio as not releasing $f_4$. Finally, not releasing $f''_4$ and $f'''_4$ gives $T<\frac14$, same as not releasing $f'''_4$, so it is not relevant. \qed

In Table \ref{fig n=4}, we summarise the results obtained for $n=4$.

\begin{table}[ht]
    \centering
        \setcellgapes{3pt}
    \makegapedcells

    \begin{tabular}{|c|c|c|}
        
         \hline
         Delay &  Performance & Competitive Ratio\\
         \hline
         $T<1/6$ & \multirow{4}{*}{$1/4$} & $1/4$\\
         \cline{1-1} \cline{3-3} 
         $1/6\leq T<1/5 $ &  & $0.2578$\\
         \cline{1-1} \cline{3-3} 
         $1/5\leq T<1/4$ &  & $2-\sqrt{3}\approx 0.2679$\\
         \cline{1-1} \cline{3-3} 
         $1/4\leq T<1/3$ &  & $0.2803$\\
         \hline
         $1/3\leq T<1/2$ & $2-\sqrt{3}\approx 0.2679$ & $1-\sqrt 2/2\approx 0.2929$\\
         \hline
         $1/2\leq T< 1$ & $1-\sqrt{2}/2\approx 0.2929$ & $0.3177$\\
         \hline
         $1\leq T< 1.5$ & $1/\varphi^2\approx 0.3820$ & $1/\varphi^2\approx 0.3820$\\
         \hline
         $1.5\leq T< 5/3$ & $1/2$ & $1/2$\\
         \hline
         $5/3\leq T$ & 1 & 1\\
         \hline
    \end{tabular}
    \caption{Performances and Competitive Ratios for $n=4$}
    \label{fig n=4}
\end{table}

\noindent We conjecture that the behaviour of the performance and competitive ratio is the same for $n>4$ as that exhibited for $n=4$:

\begin{conj}
For a fixed $n\geq 4$, the functions $\lambda^{\mathcal I_{n,T}}$ and $\gamma^{\mathcal I_{n,T}}$ are step functions in $T$, constant on the interval $[\frac1k,\frac1{k-1}[$, for $2\leq k\leq 2^{n-1}-2$.
\end{conj}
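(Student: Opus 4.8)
\medskip
\noindent\textbf{A possible route to this conjecture.}
The plan is to show that, for fixed $n$, both $\lambda^{\mathcal I_{n,T}}$ and $\gamma^{\mathcal I_{n,T}}$ are governed by a single finite $\min$--$\max$ over request weights, indexed by a binary decision tree generalising Figure~\ref{decision tree}, and that the dependence of this $\min$--$\max$ on $T$ enters \emph{only} through feasibility of the offline optimum, each feasibility constraint being of the form $T<1/k$ with $k\in\N$.

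First I would fix an online algorithm and put the adversary into a normal form, as is done implicitly in the proofs of Theorems~\ref{GR_0 optimal}, \ref{frac 1n}, \ref{GREE0 superoptimal} and \ref{large delay}: an optimal adversary may be assumed to (i) place every request at $-1$ or $+1$; (ii) while the vehicle heads towards a still-reachable request $f_j$ with ETA $\tau_j$, either release nothing more, or release the next request at the same end as the request $f_i$ just left, at the latest time $t_{i+1}\geq t_i+T$ still satisfying the ``cannot serve both'' inequality $\tau_j-t_{i+1}<\tfrac12(\tau_j-t_i)$ of Remark~\ref{ETA}; and (iii) assign to the last request released on each branch a weight realising the relevant extremal ratio, as the sequences $(\alpha_n)$, $(\beta_n)$ and the weighted performances of Theorem~\ref{weighted performance} do. Since between release times the vehicle moves at unit speed in a constant direction and at each release has only the binary choice ``continue'' or ``reverse'', the game collapses to a decision tree of depth at most $n$; each root-to-leaf path yields a release-time vector $(t_i)$ and a weight vector $(\delta_i)$, the vehicle serves exactly one request on that path, and the value on that branch is a ratio $\delta_m/S_\ell$. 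Thus $\lambda^{\mathcal I_{n,T}}$ (and likewise $\gamma^{\mathcal I_{n,T}}$) equals the minimum over adversary choices of the maximum over vehicle choices of such ratios --- a $\min$--$\max$ over finitely many continuous weight variables subject to the linear release-time constraints, exactly as displayed for $n=4$ before the statement of this conjecture.

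Second, I would isolate the $T$-dependence. The weights enter only through ratios $\delta_m/S_\ell$, whose extremal values --- the numbers $\alpha_{n'}$, $\beta_{n'}$ and the weighted performances --- do not depend on $T$ at all. The sole effect of $T$ is to decide which branches the adversary may use: a branch is available iff its release times obey the spacing $t_{i+1}-t_i\geq T$ and the ``cannot serve both'' inequalities, \emph{and} the offline optimum can serve every released request, i.e. $2i-1\in[t_i,t_i+2]$ for each released $f_i$. Pushing the release times as late as the constraints allow (the worst case for the offline optimum) turns the offline condition for the deepest released request into an inequality $2j-c\,T> 2j-1$, i.e. $T<1/c$, where $c\in\N$ is the ``delay budget'' accumulated along the branch. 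Hence, as $T$ increases past $1/k$, precisely the branches of budget $k$ drop out of the adversary's repertoire, while between two consecutive thresholds $1/k$ the set of available branches --- and therefore the value of the $\min$--$\max$ --- is unchanged. Combined with the already-proven monotonicity of $\lambda^{\mathcal I_{n,T}}$ and $\gamma^{\mathcal I_{n,T}}$ in $T$, this yields a non-decreasing step function whose breakpoints lie in $\{1/k\}$; the range $2\leq k\leq 2^{n-1}-2$ comes out because the deepest branch has budget exactly $2^{n-1}-2$ (the threshold $T_1$ of Theorem~\ref{GREE0 superoptimal}), while $k\geq 2$ reflects that for $T\geq 1$ one is already in the large-delay regime of Theorem~\ref{large delay}.

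The main obstacle is the claim in the second step that the accumulated budget $c$ is always an integer, and --- for the exact interval boundaries --- that every $k$ in the stated range is genuinely realised. Integrality is delicate: a ``wait'' step adds $T$ to a release time, but a ``cannot serve both'' step adds $\tfrac12(\tau_j-t_i)$, which is not visibly a half-integer multiple of $T$; one must show that along the adversary's optimal branches the ETA recursion $\tau_i=2t_i-\tau_{i-1}+2$ (used for $Al2$) makes these halvings compound back into integer coefficients --- the phenomenon by which, for $n=4$, the chain $t_2<2-2T$, $\tau_2<4-4T$, $t'_3<4-5T$, $\tau_3<6-6T$ has integer coefficients $2,4,5,6$. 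I would prove this by induction on depth, carrying for each $t_i$ and $\tau_i$ the pair (coefficient of $T$, constant term) and showing the coefficient stays a non-negative integer. The converse --- that every $k\in\{2,\dots,2^{n-1}-2\}$ occurs, so the list of intervals neither subdivides nor merges --- would follow by exhibiting for each $k$ one explicit branch of budget exactly $k$, generalising the six constraint systems displayed for $n=4$; this enumeration of which mixtures of ``waits'' and ``reversals'' the adversary should use is the part I expect to demand the most care.
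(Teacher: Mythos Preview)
This statement is left as a \emph{conjecture} in the paper; no proof is given there, so there is nothing to compare your attempt against. What you have written is a credible research outline rather than a proof, and you are candid about where the gap lies.

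Two smaller points first. The conjecture only asserts constancy on each interval $[1/k,1/(k-1))$; it does \emph{not} claim a jump at every $1/k$. So your worry that ``every $k$ in the stated range is genuinely realised'' is stronger than needed and can be dropped. Second, the adversary normal form you invoke in step~(i)--(iii) --- requests only at $\pm 1$, ``just-too-late'' timing --- is used implicitly in the paper's specific constructions but is nowhere proved in the generality you need; for a proof of the conjecture it would have to be established. Relatedly, your isolation of the $T$-dependence leans on the offline-feasibility condition, which is the right mechanism for $\gamma^{\mathcal I_{n,T}}$ but is absent for $\lambda^{\mathcal I_{n,T}}$: for the performance, the $1/k$ thresholds must come solely from the adversary-feasibility constraints (delay $\geq T$ together with the ``cannot serve both'' inequalities), and your sketch does not yet cover that case.

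The genuine obstacle is exactly the one you name: integrality of the accumulated budget $c$. Your observation that the ETA recursion $\tau_i=2t_i-\tau_{i-1}+2$ can restore integer coefficients after a halving step is the right clue, and it explains why the $n=4$ chains in the paper produce integers $2,4,5,6$. But along a general branch the adversary interleaves ``wait'' moves (adding $T$) with ``halve the gap'' moves, and whether the binding constraint at each node is the delay or the halving depends on $T$ itself; tracking the coefficient pair through this interaction is precisely what makes the conjecture non-trivial. Until that induction is actually executed --- and it is not clear it goes through without some further structural lemma --- the conjecture remains open, as the paper itself leaves it.
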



\section{Generalisation to Geodesic Metric Spaces}\label{sec:geodesic}

In this section, we investigate which of the previous results remain valid when the game is played on a geodesic space instead of a segment.



Recall that the \emph{diameter} of $E=(X,d)$ is defined as $\sup_{(x,y)\in X^2} d(x,y)$.

\begin{prop}\label{negative}
Given a geodesic metric space $E$ of diameter $2$ with two points at distance 2 and given a delay $T$, the performance and competitive ratio on $E$ are at most equal to that on $[-1,1]$. In other words, any negative result concerning $\OPTiWinD$ on a segment can be extended to geodesic metric spaces for which the diameter is reached.
\end{prop}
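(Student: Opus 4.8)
The plan is to show that any instance on the segment $[-1,1]$ can be "embedded" into the geodesic space $E$ in a way that the player on $E$ is at no advantage compared to the player on $[-1,1]$. Let $P,Q\in X$ be two points with $d(P,Q)=2$. Fix a geodesic $\gamma:[0,2]\to X$ from $P$ to $Q$; by definition of a geodesic metric space the image of $\gamma$ is isometric to $[0,2]$, so we may identify the point $\gamma(1+x)$ with $x\in[-1,1]$, and in particular $P$ with $-1$, $Q$ with $+1$, and $\gamma(1)$ with the origin $O$.

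The key step is the following simulation argument. Suppose the adversary has a strategy $\sigma$ on $[-1,1]$ that limits the performance (resp. competitive ratio) of every online algorithm to some value $v$; we build an adversary strategy $\sigma_E$ on $E$ achieving the same bound. The strategy $\sigma_E$ releases exactly the requests $\sigma$ would release, but placed on $E$ via the identification above: a request at $-1$ becomes a request at $P$, one at $+1$ becomes one at $Q$, with identical release times, time windows and weights. Note that since $E$ has diameter $2$, the assumption that each time window has length equal to the diameter is preserved, and every request remains reachable from anywhere in $X$ within its window, so $\sigma_E$ produces a legitimate instance in the appropriate class. Now given any online algorithm $A_E$ for $E$, define an online algorithm $A$ for $[-1,1]$ as follows: $A$ internally runs $A_E$ against the segment instance reinterpreted on $E$, and at each time $t$ sets the vehicle's position on $[-1,1]$ to be $d\bigl(A_E\text{'s position at }t,\;P\bigr)-1$, i.e. the signed distance along a shortest path from $P$. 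Since $A_E$ moves at speed at most $1$ in $X$, the triangle inequality gives that this projected position also moves at speed at most $1$ on $[-1,1]$, so $A$ is a valid algorithm. Moreover, because all requests of $\sigma_E$ lie on $\{P,Q\}$, the vehicle controlled by $A_E$ serves a request at $P$ (resp. $Q$) at time $t$ exactly when its distance to $P$ is $0$ (resp. $2$), which is precisely when $A$'s projected vehicle is at $-1$ (resp. $+1$). Hence $A$ serves on $[-1,1]$ exactly the set of requests that $A_E$ serves on $E$, so the two algorithms collect the same total weight on corresponding instances.

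It follows that for the pair of instances $(I,I_E)$ generated by $(\sigma,\sigma_E)$ against $(A,A_E)$ we have $\lambda_{A_E}^{I_E}=\lambda_A^{I}\le v$ (and likewise for the ratio $\lambda_{A_E}^{I_E}/\lambda_{OPT}^{I_E}$, once one observes that the offline optimum on $I_E$ can do no better than the offline optimum on $I$ — indeed the same projection argument applied to an optimal offline itinerary on $E$ yields a feasible offline itinerary on $[-1,1]$ serving the same requests). Since this holds for every online algorithm $A_E$ on $E$, we conclude $\lambda^{\mathcal I_{n,T}}(E)\le \lambda^{\mathcal I_{n,T}}([-1,1])$ and $\gamma^{\mathcal I_{n,T}}(E)\le\gamma^{\mathcal I_{n,T}}([-1,1])$, which is the claim.

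The main obstacle I anticipate is making the projection map genuinely well-defined and $1$-Lipschitz when $X$ is not a tree: the quantity $d(\cdot,P)$ is always $1$-Lipschitz, so this is actually harmless, but one must be careful that the preimage of $\{P,Q\}$ under the projection is exactly $\{-1,+1\}$, which uses $d(P,Q)=2$ together with the diameter bound (a point $y$ with $d(y,P)=2$ must have $d(y,Q)=0$, else $d(P,Q)$ could be exceeded via $y$... more precisely $d(y,Q)\ge d(P,Q)-d(P,y)$ fails to force $y=Q$ directly, so one argues instead that serving a $Q$-request requires being at $Q$, and being at $Q$ forces projected position $+1$ only if no other point projects to $+1$). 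The cleanest fix is to not insist the projection is injective at the endpoints but simply track, for the purposes of counting served requests, whether $A_E$'s vehicle is literally at $P$ or at $Q$; the projected position is used only to certify that $A$ respects the speed limit. With that framing the argument goes through cleanly.
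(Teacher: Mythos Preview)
Your approach is essentially the paper's: place the adversary's segment instance along a diametral geodesic $\gamma$ and argue the vehicle gains nothing by leaving $\gamma$; your $1$-Lipschitz projection $x\mapsto d(x,P)-1$ simply makes rigorous what the paper asserts in one line (``since $\gamma$ is a minimum distance path, the vehicle cannot improve its performance by moving outside of $\gamma$''). One small slip in the competitive-ratio step: to bound $\lambda_{A_E}^{I_E}/\lambda_{OPT}^{I_E}$ from above you need $\lambda_{OPT}^{I_E}\ge\lambda_{OPT}^{I}$ (a large denominator), which is the \emph{opposite} of the inequality your projection argument yields---the needed direction follows immediately by lifting the optimal segment itinerary onto $\gamma$ rather than by projecting.
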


\begin{proof}
Let $E=(X,d)$ be a geodesic metric space with diameter 2 and $A,B\in X$ such that $d(A,B)=2$. Any strategy of the adversary applicable to $[-1,1]$ can be used on a minimum distance path $\gamma$ from $A$ to $B$. If all requests are located on $\gamma$, since $\gamma$ is a minimum distance path, the vehicle cannot improve its performance by moving outside of $\gamma$. Thus, for all delay $T$, the performance and competitive ratio on $E$ are at most equal to those on $[-1,1]$.
\end{proof}


\begin{theorem}\label{extension}
Theorem \ref{GR_0 optimal}, Theorem \ref{frac 1n} and Theorem \ref{GREE0 superoptimal} can be extended to all geodesic metric spaces of diameter 2.
\end{theorem}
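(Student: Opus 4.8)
Each of Theorem \ref{GR_0 optimal}, Theorem \ref{frac 1n} and Theorem \ref{GREE0 superoptimal} asserts that, under its hypothesis on $T$, the relevant quantity (competitive ratio for the first and third, performance for the second) equals $\tfrac1n$ on the segment. To extend them to a bounded geodesic metric space $E=(X,d)$ of diameter $2$ with starting point $O$, the plan is to prove separately the lower bound ``$\geq\tfrac1n$'' (an algorithm) and the upper bound ``$\leq\tfrac1n$'' (an adversary).

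For the lower bounds, the only property used on $[-1,1]$ is that every distance is at most the common window length $2$, which still holds on $E$. Consequently $GR_0$ still reaches the globally heaviest request: it heads for that request as soon as it is released and, being the overall maximum, is never deflected afterwards, so it arrives within the window. This reproduces the proof of Lemma \ref{GR_0 perf} verbatim on $E$ and gives $\lambda_{GR_0}\geq\tfrac1n$, hence $\gamma_{GR_0}\geq\tfrac1n$, for every $T\geq0$; so the ``$\geq\tfrac1n$'' half of all three statements is immediate.

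For the upper bounds I would use Proposition \ref{negative}. Pick two points $A,B$ realising the diameter (or, if the diameter is only a supremum, with $d(A,B)=2-\eta$), a minimum-length path $\gamma$ joining them, and replay along $\gamma$ the adversary strategy of the corresponding segment theorem, every request being placed on $\gamma$. The projection $p\mapsto d(p,A)-1$ sends a trajectory of $X$ to a speed-$\leq1$ trajectory on $[-1,1]$ that, by the reverse triangle inequality, meets the image of a $\gamma$-request exactly when the original trajectory meets that request; so the game on $E$ restricted to $\gamma$-requests is dominated by the one-dimensional game played from abscissa $d(O,A)-1$ against the projected adversary, which the segment theorems bound by $\tfrac1n$. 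Two points then remain. First, those theorems are stated for a vehicle starting at $0$, whereas the shadow starts at $d(O,A)-1$; here one re-reads the three proofs and checks that their adversary strategies, being phrased adaptively through the current ETAs $\tau_i$, survive a shift of the initial abscissa, and --- crucially for the first and third theorems --- that the offline optimum still collects every request. Choosing $A$ to be the point of $\{A,B\}$ farther from $O$ (so $d(O,A)\geq1$ when the diameter is attained), the offline optimum goes straight from $O$ to $A$, serves the first request, then shuttles back and forth between $A$ and $B$, passing each endpoint at a time lying in the common intersection of that endpoint's batch of near-coincident windows $f_{i,k}$; this reproduces the segment bound exactly. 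Second, when the diameter is not attained, the same construction with $d(A_\eta,B_\eta)=2-\eta$ degrades every tight inequality of the three proofs by only $O(\eta)$, so it yields $\tfrac1n+O(\eta)$, and one lets $\eta\to0^+$, the strict thresholds $T_0=\tfrac1{2^{n-3}+1}$ and $T_1=\tfrac1{2^{n-1}-2}$ leaving room for the $O(\eta)$ error. Combined with the lower bounds, this establishes the three statements on $E$.

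The step I expect to be the main obstacle is the first of the two just mentioned: verifying, for each of the three adversary constructions, that moving the vehicle's start off the centre of $[-1,1]$ (equivalently, allowing $O$ to be an arbitrary point of $E$) still lets the offline optimum serve all requests, since this is exactly where the equality ``window length $=$ diameter'' is used with no slack to spare and one must keep track of the exponentially tight release times of the $f_{i,k}$.
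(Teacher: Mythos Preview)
Your proposal follows the same architecture as the paper: the lower bound via $GR_0$ (Lemma~\ref{GR_0 perf} transfers verbatim), and the upper bound by replaying the three adversary constructions along a near-diametral geodesic $\gamma$, with an $O(\eta)$ slack when the diameter is only a supremum. The paper is in fact terser than you are: it invokes Proposition~\ref{negative} when the diameter is attained, and for the non-attained case it simply rewrites the timing constraints of Theorem~\ref{frac 1n} with the length $2-\e$ in place of $2$, checks they are satisfiable whenever $T<T_0$ by taking $\e<1-T/T_0$, and then notes that for Theorem~\ref{GREE0 superoptimal} only the recurrence $\tau_{i+1}=\tau_i+2-2^{n-1-i}T+O(\e)$ matters for offline feasibility. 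So the obstacle you single out --- the vehicle's start need not project to the centre of $\gamma$ --- is not addressed explicitly in the paper; your projection argument and the observation that the two competitive-ratio strategies are phrased through the adaptive ETAs $\tau_i$ are a genuine addition, and your suggestion to place $f_1$ at the endpoint farther from $O$ is the right fix for the offline tour.

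Two small corrections. First, your claim that the shadow ``meets the image of a $\gamma$-request exactly when the original trajectory meets that request'' is only one implication: if the vehicle is at a $\gamma$-point $q$ then the shadow is at $d(q,A)-1$, but the converse fails in general (the level set $\{p:d(p,A)=d(q,A)\}$ is larger than $\{q\}$). Fortunately only that implication is needed for an upper bound on the number of requests served. Second, the adversary of Theorem~\ref{frac 1n} is \emph{not} phrased through ETAs --- the $t_i$ there are fixed --- so your ``being phrased adaptively'' remark does not literally apply to it; but since Theorem~\ref{frac 1n} concerns performance rather than competitive ratio, offline feasibility is irrelevant and the only thing to check is that the halving schedule can be shifted to start from the actual $\tau_1$. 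You already note this distinction (``crucially for the first and third theorems''), so this is a matter of wording rather than a gap.
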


\begin{proof}
Let $E=(X,d)$ be a geodesic metric space with diameter 2. The positive part of these results is given in Lemma \ref{GR_0 perf}, which states that the algorithm $GR_0$ guarantees a performance of $\frac1n$. This is still true when the game is played on $E$.
If the diameter of $E$ is reached, then the result follows from Proposition \ref{negative}. Otherwise, for $\e>0$, we can find $A,B\in X$ such that $d(A,B)=2-\e$ and a path $\gamma$ from $A$ to $B$ of length $2-\e$. Using the same notations as in the proof of Theorem \ref{frac 1n}, we obtain the following conditions:

\begin{itemize}
    \item $t_n<2-\e$,
    \item $t_{n-1}<2-\e-T$,
    \item $t_i<2-\e-2^{n-1-i}T$,
    \item $1+T\leq t_2<2-\e-2^{n-3}T$.    
\end{itemize}
Hence, we obtain $T< (1-\e)T_0$. Thus, choosing $\e<1-\frac T{T_0}$ is sufficient for the strategy described in the proof of Theorem \ref{frac 1n} to work on $\gamma$. Theorem \ref{GR_0 optimal} also works on $\gamma$ as it is a special case of Theorem \ref{frac 1n}.

In the proof of Theorem \ref{GREE0 superoptimal}, all that changes when playing on $\gamma$ is the values of the ETAs, which only affect the proof of point c). However, the relation $\t_{i+1}=\t_i+2-2^{n-1-i}T+O(\epsilon)$ is still true. Thus, for $\e$ small enough, the optimal offline algorithm will serve all requests, and Theorem \ref{GREE0 superoptimal} is still valid in this case.
\end{proof}

\begin{definition}
We say that a geodesic metric space $E=(X,d)$ with origin $O$ and diameter $2$ is \emph{centred} if $X\subset \overline B(O,1)$, where $\overline B(O,1)$ is the closed ball of radius $1$ centred in $O$.
\end{definition}

\begin{theorem}\label{centred extension}
 Theorem \ref{large delay} and Theorem \ref{weighted performance} can be extended to centred geodesic metric spaces of diameter 2.
\end{theorem}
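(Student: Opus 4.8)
The plan is to check that the two positive/extremal arguments underlying Theorem \ref{large delay} and Theorem \ref{weighted performance} only used the segment through two features: first, that the two ``ends'' $-1$ and $1$ are at distance $2$, so that once the vehicle commits to one end it takes time $2$ to reach the other; second, that the vehicle's starting point $O$ lies at distance $1$ from each end (and, in the weighted-performance argument, that after chasing a request to an end and coming back for time $T$ the vehicle is again within distance $1-T$ of both ends). In a centred geodesic space $E=(X,d)$ of diameter $2$, every point of $X$ is within distance $1$ of $O$, so in particular any two ``antipodal-type'' requests $A,B$ the adversary might use satisfy $d(A,O)\le 1$ and $d(B,O)\le 1$; and the crucial upper bound $d(A,B)\le 2$ still holds. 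This is exactly the geometric data the proofs used, so the strategy is to re-run them verbatim on $E$, replacing ``$-1$'' and ``$1$'' by two fixed points $A,B\in X$ with $d(A,B)=2$ when such points exist, and handling the case $\operatorname{diam}(E)$ not attained by the same $\e$-approximation trick as in Theorem \ref{extension}.

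Concretely, for the negative (adversary) directions I would argue as in Proposition \ref{negative}: pick a minimum-length geodesic $\gamma$ from $A$ to $B$ (length $2$), place all adversary requests on $\gamma$, and observe that a minimum-length geodesic behaves exactly like the segment $[-1,1]$ for a vehicle that never profits from leaving it — the distances between any two requests on $\gamma$, and from $O$-equivalent intermediate points, are the same as on $[-1,1]$. Hence every release-time inequality and every ETA computation in the proofs of Theorem \ref{large delay} (part 1, the adversary strategy) and Theorem \ref{weighted performance} (the $\inf_{\delta_2}\max$ recursion, including the fact that after time $t_2+T$ the vehicle is back in the ``central'' region $]T-1,1-T[$ measured along $\gamma$) carries over unchanged. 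When the diameter is not attained, replace $\gamma$ by a geodesic of length $2-\e$ and note, as in Theorem \ref{extension}, that for $\e$ small enough the strict inequalities the adversary needs (the optimal offline algorithm still serving all requests, and the delay constraints) are preserved; here $1\le T<2$ so the slack is bounded below and a fixed small $\e$ suffices.

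For the positive (algorithmic) directions I would note that the greedy algorithm $GR_1$ of Theorem \ref{large delay} and the optimal weighted-performance algorithm implicit in Theorem \ref{weighted performance} are defined purely in terms of ``head towards request $f_i$'' and comparisons of partial weight sums $\delta_i/S_{i+1}$ — no segment structure is used. The only place the geometry enters is the bound ``at most one new request can be released between the time $f_i$ becomes the target and the time $f_i$ is served'': this followed from $T\ge 1$ together with time-window length $2$ and the fact that the distance the vehicle ever has to travel to a current target is at most $2$ — which remains true on $E$ because $\operatorname{diam}(E)=2$. So $GR_1$ still guarantees $\alpha_{n-\lfloor 1/(2-T)\rfloor}$, and the recursive algorithm still guarantees $\lambda^{\mathcal I_{n,T},\delta_0}$, on any centred $E$. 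Combining the matching upper and lower bounds gives the claimed equalities.

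The main obstacle is the weighted-performance recursion of Theorem \ref{weighted performance}, because it relies on a \emph{symmetry} argument (``by symmetry of the situation it is optimal for the adversary to set $\delta_3=\delta_2$'') and on the vehicle returning to a well-defined central region where the residual game is an honest copy of the smaller instance. On a general centred space there is no reflection swapping $A$ and $B$, so I would instead argue that the adversary needs only \emph{one} optimal response and can always choose to place $f_3$ at $A$ (resp. use $B$) with $\delta_3=\delta_2$: whatever the vehicle's exact location in $X$, after heading toward $f_2$ for time $T$ it is within distance $1-T$ of \emph{every} point of $X$ (centredness again), in particular of both $A$ and $B$, so the sub-instance faced is dominated by the $]T-1,1-T[$-started segment instance, which is precisely $\lambda^{\mathcal I_{n-2,T},(\delta_0+1+\delta_2)/\delta_2}$ (and independence from the exact restart point is then automatic). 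Making this domination argument airtight — checking that no asymmetry of $E$ ever \emph{helps} the vehicle in the sub-instance — is the delicate step; everything else is a transcription of the segment proofs.
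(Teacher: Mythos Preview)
Your overall architecture is right and is essentially the paper's: transplant the adversary strategies onto a near-diametral geodesic (Proposition~\ref{negative} plus an $\e$-approximation when the diameter is not attained), and check that the vehicle's algorithms use only diameter~$2$ and centredness. Two specific points need correction.

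First, your account of why $GR_1$ still achieves $\alpha_{n-\lfloor 1/(2-T)\rfloor}$ is incomplete. You assert that the only geometric input is the ``at most one new request between targeting $f_i$ and serving it'' bound, but the index $i_0=\lfloor 1/(2-T)\rfloor$ is itself geometric: it counts how many requests can be served \emph{sequentially} in the (a)/(b) phase before case~(c) is first triggered, and that count relies on the vehicle being within distance~$1$ of the first request of the phase. That initial bound comes from case~(a) of $GR_1$, which on the segment reads ``head towards $0$''; the paper explicitly rewrites this as ``head towards $O$'' and then invokes centredness to get $d(O,f_1)\le 1$, whence the chain $t_1+1+2(i-1)\le t_{i+1}$ gives $i\le 1/(2-T)$. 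Your description of $GR_1$ as ``purely head-towards-$f_i$'' omits case~(a), and with it the place where centredness enters this half of the proof.

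Second, in the Theorem~\ref{weighted performance} extension your claim that ``after heading toward $f_2$ for time $T$ the vehicle is within distance $1-T$ of every point of $X$ (centredness again)'' is false --- indeed impossible: for $\tfrac12\le T<1$ any two points each within $1-T$ of the vehicle are at mutual distance at most $2-2T\le 1$, yet $d(A,B)>2T\ge 1$. The condition you actually need is the weaker one the paper uses: the vehicle lies in $X\setminus(\overline B(A,T)\cup\overline B(B,T))$, which does hold after retracing for time $T$. Your symmetry worry is legitimate, but the paper's resolution is simpler than your proposed domination argument: it fixes a single pair $A,B$ with $d(A,B)>2T$ at the outset, takes $X\setminus(\overline B(A,T)\cup\overline B(B,T))$ as the central region, and observes that the recursion only ever swaps the roles of $A$ and $B$. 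No geometric reflection is invoked --- only that the definition of $\lambda^{\mathcal I_{n,T},\delta_0}$ is indifferent to which of the two fixed endpoints hosts the next request.
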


\begin{proof}
Let $E=(X,d)$ be a geodesic metric space with diameter 2 and origin $O$. In the proof of Theorem \ref{large delay}, the positive part of the result is shown by considering algorithm $GR_1$. When playing on $E$, we adapt $GR_1$ by heading towards $O$ instead of $0$ in case (a). The proof that $GR_1$ work on $E$ is identical to that of Theorem \ref{large delay}. The key point is that we still have $i_0=\lfloor \frac{1}{2-T}\big\rfloor$.

In order to prove the negative result, we choose $A,B\in X$ such that $d(A,B)=2-\e'$ and a minimum distance path $\gamma$ from $A$ to $B$. The adversary can apply on $\gamma$ the strategy described in the proof of Theorem \ref{large delay} with one modification: for $i\geq i_0+2$, choose $\eta_i=\frac{i-i_0}n-\frac3{2n}-(i-\frac32)\e'$.  This allows the relation  $t_i=\t_{i-1}-\frac1{2n}$ for $i\geq i_0+2$ to be valid on a path of length $2-\e'$. It follows that for $\e'<\frac5{n^2}$, we still have $\eta_i<1$, $\forall i\geq i_0+2$. Therefore, this strategy of the adversary limits the competitive ratio of the vehicle to $\alpha_{n-\lfloor \frac{1}{2-T}\rfloor}$.

For Theorem \ref{weighted performance}, we choose $A$ and $B$ with $d(A,B)> 2T$. In the proof of Theorem \ref{weighted performance}, $-1$ and $1$ are replaced with $A$ and $B$, respectively, and the interval $]T-1,1-T[$ becomes $X-(\overline B(A,T)\cup \overline B(B,T))$. The rest of the proof remains identical. 
\end{proof}

\begin{remark}
If $E$ is a circle, $E$ is not centred, and both Theorem \ref{large delay} and Theorem \ref{weighted performance} do not apply whenever $n\geq 3$. 
\end{remark}

\begin{prop}
Depending on the geodesic metric space, the smallest delays for which the optimal performance and optimal competitive ratio are greater than $\frac1n$ may be greater than $T_0=\frac1{2^{n-3}+1}$ and $T_1=\frac1{2^{n-1}-2}$ respectively.
\end{prop}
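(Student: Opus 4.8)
The plan is to produce one geodesic metric space on which the adversary is strictly more powerful than on $[-1,1]$; as the remark just above suggests, the natural candidate is the circle $\mathcal{C}$ of circumference $4$, which has diameter $2$ and is not centred. The feature to exploit is that on $\mathcal{C}$ every point has an antipode at distance exactly $2$, whereas on $[-1,1]$ no point lies at distance more than $1$ from the origin; in particular the ``retreat to a central point'' that shields the vehicle on the segment (and on any centred space) is of no use on $\mathcal{C}$. Concretely, the adversary waits until time $1$, records the vehicle's position $P$, and releases $f_1=(P',[1,3],1)$ where $P'$ is the antipode of $P$. Since $d(P,P')=2$ equals the length of the time window, the vehicle has zero slack: to serve $f_1$ it must travel at unit speed towards $P'$ throughout $[1,3]$, which forces it at time $1$ to commit to one of the two arcs joining $P$ to $P'$, and its position at any later time $s\le 3$ is then determined and at distance $3-s$ from $f_1$.

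With this commitment fixed, I would run an adaptation of the adversary of Theorem~\ref{frac 1n}: the remaining $n-1$ requests are released near $P$ (the vehicle's position when $f_1$ appeared), with windows $[t_j,t_j+2]$ and release times $t_j\in(2,3)$ obeying the delay $T$ and the same ``doubling'' schedule as in that theorem. The point is that the committed vehicle, when such an $f_j$ is released, is at distance $t_j-1$ from $P$ and $3-t_j$ from $f_1$, so $f_1$ is strictly the closest pending request precisely when $t_j>2$ — and then the closest-request principle used in the proof of Theorem~\ref{frac 1n} keeps the vehicle on $f_1$ right up to time $3$, after which a return to $P$ takes until time $5$, so every request placed at $P$ before time $3$ is already lost to the vehicle. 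Thus the ``misleading'' interval available to the adversary is $(2,3)$, of length $1$, rather than the interval of length $1-T$ available on the segment; carrying the estimate of Theorem~\ref{frac 1n} through with $3-t_j$ in place of $2-t_j$ should yield that the vehicle is forced to serve exactly one request whenever $T<\tfrac1{2^{n-3}}$, which is strictly larger than $T_0=\tfrac1{2^{n-3}+1}$ (and a fortiori larger than $T_1$). Since the offline optimum serves all $n$ requests — travel to $P'$, arriving in time to serve $f_1$, then proceed to $P$, where the remaining $n-1$ requests can all be served within their windows — the optimal performance and the optimal competitive ratio on $\mathcal{C}$ both equal $\tfrac1n$ for $T_0\le T<\tfrac1{2^{n-3}}$, establishing the claim for $\mathcal{C}$.

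The step requiring the most care is, as for every such lower bound, to make the adversary robust against all vehicle behaviours — not just the ``stay on the forced route'' one — in particular against a vehicle that renounces $f_1$ at the outset or deviates partway; on $\mathcal{C}$ this is delicate because every deviation is again a choice between two arcs and the vehicle might try to use the cyclic structure to collect two requests (for instance by circling round to pass through one request on the way to another). I would handle this as in Theorem~\ref{frac 1n}, combining (i) the argument that at each release time $f_1$ is strictly the closest pending request with (ii) the doubling argument bounding at one the number of requests near $P$ that any deviating vehicle can pick up, after checking that these remain compatible on $\mathcal{C}$ — which they do, because any departure from the unit-speed route to $P'$ forfeits $f_1$ and, on the circle, a trip to $P$ and back to $P'$ again lands one just outside the relevant windows. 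The offline feasibility is immediate, and pinning down the exact threshold (with the lower-order corrections present in the schedule of Theorem~\ref{frac 1n}) is a routine if slightly tedious computation.
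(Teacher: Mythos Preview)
Your approach differs from the paper's: you work on the circle $\mathcal{C}$ of circumference~$4$, exploiting that the first request can be placed at the vehicle's antipode (ETA $=3$, misleading interval of length~$1$), whereas the paper uses the three-branch star and exploits the \emph{third branch} to separate $f_2$ (placed at~$B$) from $f_3,\dots,f_n$ (placed at~$C$), using a linear schedule $t_i=2-(n-i)T-\varepsilon$ with no doubling at all. The paper's point is that on the segment $f_2$ and $f_3$ would sit at the same endpoint and the vehicle could reach both, while the extra branch makes that impossible; your mechanism is a genuinely different one.

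There is, however, a real gap in your sketch. You put all of $f_2,\dots,f_n$ at the single point~$P$ and invoke the closest-request principle to keep the vehicle on course to~$f_1$. But suppose the vehicle turns back at time~$t_2$: it reaches~$P$ at time $2t_2-1\in(3,5)$, and since every window $[t_j,t_j+2]$ with $t_j\in(2,3)$ contains this arrival time (or opens later, while the vehicle is already sitting at~$P$), it then collects \emph{all} of $f_2,\dots,f_n$, for performance $(n-1)/n$. Your doubling argument only shows that $f_{j-1}$ is unreachable \emph{from the position of a vehicle still en route to~$P'$} when $f_j$ appears; it does not bound what a vehicle already at or near~$P$ can collect. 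To close this you need a fully adaptive adversary that, upon any turn-back, relocates subsequent requests away from~$P$ --- and you have not specified that branch of the strategy. The symmetry justification for ``closest is optimal'' is also shakier here than on the segment: the reflection of~$\mathcal{C}$ swapping $P$ and $P'$ does not fix the vehicle's position, and the time windows of $f_1$ (closing at~$3$) and $f_2$ (closing at $t_2+2>4$) are not interchangeable under any isometry, so the two options are not symmetric in the sense needed. The paper's star argument carries the same informality about the closest-request rule, but its explicit use of the third branch buys the separation of $f_2$ from the later requests without relying on doubling.
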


\begin{proof}
Let $E$ be a star with central point $O$ and $3$ branches of length $1$. Lemma \ref{2 fires} and Lemma \ref{Al2} do not apply to $E$ as the adversary can now release requests at the end of the third branch. It follows that Theorem \ref{Al1} and Theorem \ref{T1} do no apply when the game is played on $E$.
Let $A,B$ and $C$ denote the endpoints of the branches of the star. For $n=5$ and $T<\frac14$, let $\e<1-4T$. The adversary will release a request $f_1=(A,[1,3],1)$. Recall that when the vehicle may choose between two symmetric requests of equal weights, it is optimal to head towards the closest one. So, while the vehicle keeps going towards $A$, the adversary will release the following requests: $f_2=(B,[2-3T-\e, 4-3T-\e],1)$, $f_3=(C,[2-2T-\e,4-2T-\e],1)$, $f_4=(C,[2-T-\e,4-T-\e],1)$ and $f_5=(C,[2-\e,4-\e],1)$. With this strategy, the adversary limits the optimal performance to $\frac15$, even if $T_0<T<\frac14$. Note that this strategy was not possible on the segment because it requires that $f_2$ and $f_3$ be located in different branches of the star, otherwise the vehicle could reach both. A similar strategy can be used by the adversary to limit the optimal competitive ratio to $\frac15$ when $T$ is slightly greater than $T_1$.
\end{proof}

\section{Conclusion}
 
 In this paper, we introduced the delay between requests as a new parameter in the standard online orienteering problem. While this new parameter seems very natural, to our knowledge, it had not been studied previously. We analysed the performances and competitive ratios in the case where the length of the time windows is equal to the diameter of the space. We obtained a complete resolution when the number of requests is at most 4. In the case of $n$ requests, we solved the problem when $T\geq 1$ or $T< \frac1{2^{n-1}-2}$. Our results for small numbers of requests give us an accurate idea of what to expect in the intermediate case. Other choices regarding the length of the time windows, which may be relevant for different applications, remain to be investigated.
 
\section*{Acknowledgements}
\noindent We acknowledge the support of GEO-SAFE,  H2020-MSCA-RISE-2015 project \# 691161.

\section*{References}
\bibliographystyle{apalike}
\bibliography{Finalversion_TCS}

\newpage
\section*{Appendix}

\vspace{.3cm}

In this section, we study the sequence $(\alpha_n)$ defined in Section \ref{T geq 1}. We give an explicit formula for $\delta_i$ and calculate the limit of $(\alpha_n)$.

We have the following definition of $\alpha_n$  for $n\geq 1$:
\begin{equation*}\label{alpha n}
    \alpha_n=\inf\limits_{\delta \in (\R_+^*)^{n}} \max \{\frac{\delta_1}{S_2}, \cdots , \frac{\delta_{i}}{S_{i+1}},\cdots, \frac{\delta_{n-1}}{S_{n}},\frac{\delta_{n}}{S_{n}}\},
\end{equation*}
where $S_i=\sum_{k=1}^i \d_k$.
Since multiplying all the $\d_i$ by a positive constant makes no difference to the ratios, we may choose $\d_1=1$. To lighten the notation, we will omit the index $n$ and will denote $\alpha_n$ by $\alpha$. 

\begin{prop}\label{delta i}
The value of $\alpha$ is realised for a unique vector $\d\in(\R_+^*)^{n}$ with $\d_1=1$ such that for $1\leq i\leq n$:
$$\delta_i=\Big(\frac 12+\frac{1-2\alpha}{2\sqrt{1-4\alpha}}\Big)\Big(\frac {1+ \sqrt{1-4\alpha}}{2\alpha}\Big)^{i-1}~~~~~~~~~~~~~~~~$$
$$~~~~~~~~~~~~+\Big(\frac 12-\frac{1-2\alpha}{2\sqrt{1-4\alpha}}\Big)\Big(\frac {1- \sqrt{1-4\alpha}}{2\alpha}\Big)^{i-1}.$$
\end{prop}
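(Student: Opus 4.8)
The plan is to treat this as a standard bottleneck (min--max--of--ratios) problem. After normalising $\delta_1=1$, set $f_i(\delta)=\delta_i/S_{i+1}$ for $1\le i\le n-1$ and $f_n(\delta)=\delta_n/S_n$, so that $\alpha_n=\inf_\delta\max_i f_i(\delta)$, each $f_i$ being continuous with values in $[0,1)$. The three steps are: (1) the infimum is attained at some $\delta$ with all coordinates $>0$; (2) at such a minimiser all the $f_i$ take the common value $\alpha:=\alpha_n$; (3) the equations $f_i(\delta)=\alpha$ force a second--order linear recurrence whose solution, with the two explicit initial values, is the stated vector. Step (1) is a routine compactness argument: a short descending estimate ($S_n\le S_{n-1}/(1-g(\delta))$, $\delta_{n-1}\le g(\delta)S_n$, and so on, using $\alpha_n<\tfrac12$ for $n\ge3$, the cases $n\le2$ being immediate) confines a minimising sequence to a bounded region, and no minimiser can lie on a face $\delta_j=0$ because deleting that coordinate turns the instance into one with fewer requests and strictly smaller denominators, of value $\ge\alpha_{n-1}>\alpha_n$ (cf.\ Proposition~\ref{alpha decreases}); hence the minimiser is interior.

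Step (2) is the crux. Suppose at a minimiser $\delta$ not all $f_i$ equal $\alpha:=\max_i f_i(\delta)$, and let $p$ be the least index with $f_p(\delta)<\alpha$; then $f_1(\delta)=\dots=f_{p-1}(\delta)=\alpha$. If $p=1$, replace $(\delta_2,\dots,\delta_n)$ by $(1-\e)(\delta_2,\dots,\delta_n)$: since $S_{j+1}>\delta_1$ for every $j\ge2$, each $f_j$ with $j\ge2$ strictly decreases, while $f_1$ only grows toward $1$ and stays below $\alpha$ for $\e$ small, contradicting minimality. If $p\ge2$, increase $\delta_2,\dots,\delta_p$ by amounts $c_2,\dots,c_p>0$ while leaving $\delta_{p+1},\dots,\delta_n$ fixed, choosing the $c_i$ recursively so that $f_i$ strictly decreases for each $1\le i\le p-1$: this is possible in a ``lower--triangular'' way because $f_i$ depends only on $\delta_2,\dots,\delta_{i+1}$ with $\partial f_i/\partial\delta_{i+1}<0$, so given $c_2,\dots,c_i$ one may take $c_{i+1}$ large enough, and scaling all $c_i$ down together keeps the perturbation arbitrarily small. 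Then $f_1,\dots,f_{p-1}$ strictly decrease, $f_p$ stays below $\alpha$, and each $f_j$ with $j>p$ strictly decreases because its numerator is unchanged while $S_{j+1}$ grew, so the new maximum is $<\alpha$, again a contradiction. Hence $f_1(\delta)=\dots=f_n(\delta)=\alpha$. The delicate point is precisely this balancing: building a single perturbation that simultaneously pushes down every ratio currently attaining the maximum.

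For step (3), $f_i(\delta)=\alpha$ reads $\delta_i=\alpha S_{i+1}$ for $1\le i\le n-1$ and $\delta_n=\alpha S_n$; subtracting consecutive relations gives $\alpha\delta_{i+1}=\delta_i-\delta_{i-1}$ for $2\le i\le n-1$, the case $i=1$ with $\delta_1=1$ gives $\delta_2=(1-\alpha)/\alpha$, and the remaining relation is the consistency condition $\delta_n=\delta_{n-1}$ (which is what singles out $\alpha=\alpha_n$). The recurrence has characteristic equation $\alpha x^2-x+1=0$, with roots $r_\pm=\frac{1\pm\sqrt{1-4\alpha}}{2\alpha}$ satisfying $r_++r_-=r_+r_-=\tfrac1\alpha$, so $\delta_i=A\,r_{+}^{\,i-1}+B\,r_{-}^{\,i-1}$; imposing $A+B=1$ and $A r_++B r_-=\frac{1-\alpha}\alpha$ and using these symmetric functions gives $A-B=\frac{1-2\alpha}{\sqrt{1-4\alpha}}$, hence $A=\tfrac12+\frac{1-2\alpha}{2\sqrt{1-4\alpha}}$ and $B=\tfrac12-\frac{1-2\alpha}{2\sqrt{1-4\alpha}}$ --- exactly the claimed formula, to be read as the real sum of two complex conjugate terms when $1-4\alpha<0$ (which is the case for $n\ge3$, since then $\alpha_n>\tfrac14$). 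Uniqueness is then automatic: any minimiser with $\delta_1=1$ satisfies the same recurrence and the same two initial values, so it coincides with this vector, whose coordinates are positive by step (1).
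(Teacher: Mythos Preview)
Your proof is correct and reaches the same system of equations $\delta_i=\alpha S_{i+1}$, $\delta_n=\delta_{n-1}$, and the same linear recurrence, but you arrive there by a different route than the paper does. The paper exploits the special ``telescoping'' structure of the objective: it rewrites
\[
\inf_{\delta}\max_i f_i(\delta)=\inf_{\delta_2}\max\Bigl\{\tfrac{1}{S_2},\;\inf_{\delta_3}\max\{\tfrac{\delta_2}{S_3},\ldots\}\Bigr\},
\]
using that $\inf_y\max\{A,\,M(y)\}=\max\{A,\,\inf_y M(y)\}$ whenever $A$ does not depend on $y$, and then observes that at each level the outer term is decreasing and the inner term increasing in the current variable; the infimum therefore sits at the crossing point, which immediately gives all the equalities $f_i=\alpha$. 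This is short and avoids any compactness or perturbation argument, but it leans on a monotonicity claim for the nested infimum that is stated rather than proved. Your approach---attain the infimum by compactness, rule out the boundary, then perturb to force all ratios to coincide---is the standard bottleneck/KKT argument; it is longer but more robust, and it makes the existence and uniqueness of the minimiser fully explicit, which the paper's proof does only implicitly.

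One small point to watch: in step~(1) you invoke $\alpha_{n-1}>\alpha_n$ to exclude the boundary, but in the paper this strict inequality is proved \emph{after} Proposition~\ref{delta i} (using it). To keep your argument self-contained, note that you only need the weak inequality $\alpha_{n-1}\ge\alpha_n$ together with the observation that a boundary point with $\delta_j=0$ actually yields a value at least $\alpha_{n-1}$ via an $(n-1)$-term problem with one ratio \emph{strictly larger} than in the $\alpha_{n-1}$ objective (since $\delta_{j-1}/S_j=\delta_{j-1}/S_{j-1}>\delta_{j-1}/S_j'$ after relabelling); that already forces the minimiser to be interior without appealing to the strict monotonicity lemma. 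Alternatively, just prove $\alpha_{n}\le\alpha_{n-1}$ directly by padding, and then note that if a boundary point achieved $\alpha_n$ you would get $\alpha_n\ge\alpha_{n-1}$, hence equality, contradicting the later uniqueness you establish.
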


\begin{proof}

Note that:
\begin{eqnarray*}
\lefteqn{\inf\limits_{\delta_{n-1},\delta_{n}} \max \{\frac{\delta_{n-2}}{S_{n-1}}, \frac{\delta_{n-1}}{S_{n}},\frac{\delta_{n}}{S_{n}}\}}~~~~~~~~~~~~~~~~~~~~\\
&=& \inf\limits_{\delta_{n-1}}\inf\limits_{\delta_{n}} \max \{\frac{\delta_{n-2}}{S_{n-1}},\max \{ \frac{\delta_{n-1}}{S_{n}},\frac{\delta_{n}}{S_{n}}\}\}\\ &=& \inf\limits_{\delta_{n-1}} \max \{\frac{\delta_{n-2}}{S_{n-1}},\inf\limits_{\delta_{n}} \max \{ \frac{\delta_{n-1}}{S_{n}},\frac{\delta_{n}}{S_{n}}\}\}.
\end{eqnarray*}

By repeating this inversion, we obtain that:
\begin{eqnarray*}
\lefteqn{\alpha= \inf\limits_{\delta_2} \max \{\frac{1}{S_2},\inf\limits_{\delta_3} \max \{\frac{\delta_2}{S_3},\inf\limits_{\delta_4}\max \{ }~~~~~~~~~~~~~~~~~~~~~~~~~~~~~\\
& &\cdots, \inf\limits_{\delta_{n}} \max \{\frac{\delta_{n-1}}{S_{n}},\frac{\delta_{n}}{S_{n}}\}\cdots\}\}\}  
\end{eqnarray*}

In the above equation, the operator $\inf_{\delta_i}$ is applied to the $\max$ of two terms, the first being decreasing in $\delta_i$ and the second increasing in $\delta_i$. It follows that the optimal value is reached when:
\begin{equation*}\label{alpha}
\forall 1\leq i \leq n-1, ~ \frac{\delta_i}{S_{i+1}}= \frac{\delta_{n}}{S_{n}}=\alpha
\end{equation*}
Thus, we obtain the following equations $(E_i)$, for all $2 \leq i \leq n$:
$$(E_i)~~ \alpha=\frac{\delta_i}{S_{i+1}}=\frac{\delta_{i-1}}{S_{i}} ~~\mathrm{and~~} (E_n)~~\d_{n-1}=\d_n=\alpha S_n.$$

\noindent Thus, $\delta_i=\alpha S_{i+1}$ and $\delta_{i-1}=\alpha S_i$. Hence, $\delta_i -\delta_{i-1}=\alpha\delta_{i+1}$. The sequence $(\delta_i)$ satisfies a second order linear recurrence relation. Its characteristic equation is $X^2-\frac1{\alpha}X+\frac 1\alpha=0$ and the discriminant is $\Delta=\frac1{\alpha^2}(1-4\alpha)$.

If $\Delta=0$, we have $\alpha=\frac14$. Hence, $\d_i=(i+1)2^{i-2}$, for $1\leq i\leq n$. This contradicts $(E_n)$. So $\Delta\neq 0$.

Thus the characteristic equation has two roots $r_{\pm}=\frac1{2\alpha} (1\pm \sqrt{1-4\alpha})$, where $\sqrt{1-4\alpha}$ may be an imaginary number. So there exists a $\lambda$ and a $\mu$ such that $\forall 1 \leq i \leq n,  \delta_i=\lambda r_+^{i-1}+\mu r_-^{i-1}$.


 Since $\d_1=1$, we have $\lambda+\mu=1$. Similarly, using $\d_2$, we obtain that $\lambda-\mu= \frac{2}{\sqrt{\Delta}}(\d_2-\frac1{2\alpha})$. The result follows.
\end{proof}

\begin{lemma}\label{Delta<0}
The roots of the characteristic equation $X^2-\frac1\alpha X +\frac1\alpha=0$ are complex conjugates.
\end{lemma}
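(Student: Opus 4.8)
The roots of $X^2-\frac1\alpha X+\frac1\alpha=0$ are non-real complex conjugates exactly when the discriminant $\Delta=\frac1{\alpha^2}(1-4\alpha)$ is negative, that is, when $\alpha>\frac14$; so the plan is to prove $\alpha>\frac14$. Since Proposition~\ref{delta i} already rules out $\Delta=0$ (the value $\alpha=\frac14$ was excluded there), it is enough to show that $\Delta>0$ cannot occur. Throughout I would take $n\geq 3$, since for $n\le 2$ there is no recurrence at all and $\alpha_n\in\{1,\tfrac12\}>\tfrac14$.

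So I would argue by contradiction: suppose $\Delta>0$, i.e. $\alpha\in(0,\frac14)$. Then the characteristic equation has two distinct real roots $r_{\pm}=\frac{1\pm\sqrt{1-4\alpha}}{2\alpha}$ with $r_+r_-=r_++r_-=\frac1\alpha$. I would first record their sizes: rewriting $r_-=\frac{2}{1+\sqrt{1-4\alpha}}$ gives $1<r_-<2$, and then $r_+=\frac1{\alpha r_-}>2$, so $r_+>2>r_->1$. Proposition~\ref{delta i} supplies $\delta_i=\lambda r_+^{\,i-1}+\mu r_-^{\,i-1}$ with $\lambda+\mu=1$ and $\lambda=\frac12+\frac{1-2\alpha}{2\sqrt{1-4\alpha}}$; since $(1-2\alpha)^2-(1-4\alpha)=4\alpha^2>0$ and $1-2\alpha>0$, one gets $1-2\alpha>\sqrt{1-4\alpha}$, hence $\lambda>1$ and therefore $\mu=1-\lambda<0$.

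The one remaining relation is $(E_n)$, namely $\delta_{n-1}=\delta_n$. Substituting the closed form and rearranging, this becomes
\[
\lambda\, r_+^{\,n-2}(r_+-1)\;=\;(\lambda-1)\, r_-^{\,n-2}(r_--1).
\]
But on the left-hand side every factor is positive and strictly larger than the matching factor on the right: $\lambda>\lambda-1>0$, $r_+^{\,n-2}>r_-^{\,n-2}>0$ (here we use $n\geq 3$), and $r_+-1>r_--1>0$. So the left side strictly exceeds the right side, contradicting the equality. Hence $\Delta>0$ is impossible, and together with $\Delta\neq 0$ this forces $\Delta<0$, so the two roots are complex conjugates.

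I expect the delicate step to be fixing the sign pattern $\mu<0<\lambda$ in the hypothetical real-root case: positivity of the finitely many $\delta_1,\dots,\delta_n$ does not by itself pin down the sign of $\lambda$, so one genuinely has to use the explicit coefficients from Proposition~\ref{delta i} (equivalently, equation $(E_2)$, which gives $\delta_2=\frac1\alpha-1$ and hence $\lambda=\frac{r_+-1}{r_+-r_-}>1$). Once that is settled, the contradiction coming from $(E_n)$ is only a term-by-term comparison of positive quantities. If instead the bound $\alpha_n>\frac14$ is already available from the separate study of $(\alpha_n)$ elsewhere in the appendix, the lemma is immediate, but the argument sketched here keeps it self-contained.
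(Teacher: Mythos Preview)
Your proof is correct and follows essentially the same route as the paper: assume $\Delta>0$, use the explicit closed form from Proposition~\ref{delta i} to control the signs of $\lambda,\mu$ and the sizes of $r_\pm$, and derive a contradiction from the relation $\delta_{n-1}=\delta_n$. Your version is in fact slightly tidier than the paper's, since by establishing $\lambda>1$ (hence $\mu<0$) and $r_->1$ up front you avoid the paper's case split on the sign of $\mu$; the final contradiction then reduces to a single term-by-term comparison rather than two separate cases.
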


\begin{proof}

Using the same notations as above, let us assume that $\Delta> 0$ (i.e. $\alpha < \frac 14$). Then, $r_+$, $r_-$, $\lambda$ and $\mu$ are real numbers. It follows from $(E_1)$ that $\d_2=\frac1\alpha -1$. Hence, $\lambda -\mu>0$. Since $\lambda+\mu=1$, we have $\lambda>0$.
Since  $\sqrt{1-4\alpha}<1$, we have $r_+\geq r_->0$.

Since $\delta_{n-1}=\delta_{n}$, we have:
    \begin{equation}\label{lmr+-}
      \lambda r_+^{n-2}(r_+-1)=-\mu r_-^{n-2}(r_--1)  
    \end{equation}
Since $\alpha < \frac 14$, we have $r_+\geq 2$. It follows that in Equation \ref{lmr+-}, the left hand side is positive and $\mu\neq 0$. Let us now consider two cases:
\begin{enumerate}
    \item  If $\mu<0$, it follows from Equation \ref{lmr+-} that $r_--1>0$, since all the other terms are positive. Yet, $\lambda>-\mu$ and $r_+\geq r_-$. Hence, Equation \ref{lmr+-} is impossible.

    \item If $\mu>0$, the right hand side of Equation \ref{lmr+-} being positive requires $r_-<1$. Since $r_+\geq 2$ and $r_->0$, we have $r_+-1>1-r_-$. Yet, since $\lambda \geq \mu$, Equation \ref{lmr+-} is impossible.
\end{enumerate}

Therefore, since $\Delta\neq 0$, we have $\Delta<0$ and $\alpha>\frac 14$. Hence, $r_+$ and $r_-$ are complex conjugates, as are $\lambda$ and $\mu$.

\end{proof}

\begin{lemma}
The sequence $(\alpha_n)$ is decreasing.
\end{lemma}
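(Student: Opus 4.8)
The plan is to compare, for a fixed $n$, the optimal weight vectors realising $\alpha_n$ and $\alpha_{n+1}$, using the characterisation from Proposition \ref{delta i} (or rather the system of equations $(E_i)$ that underlies it). Recall that $\alpha_n$ is the common value of the ratios $\frac{\delta_i}{S_{i+1}}$ ($1\le i\le n-1$) and $\frac{\delta_n}{S_n}$ at the optimum, and the optimal vector is the unique one for which all these ratios are equal. The key structural observation is that the first $n-1$ equations $(E_2),\dots,(E_{n-1})$ do not involve $n$ at all: they just say $\delta_i = \alpha S_{i+1}$ for $i\le n-2$, i.e. $\delta_{i}-\delta_{i-1}=\alpha\delta_{i+1}$, a recurrence relation in which only the value $\alpha$ appears as a parameter. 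What changes when $n$ increases by one is solely the terminal condition, $(E_n):\ \delta_{n-1}=\delta_n=\alpha S_n$.

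Concretely, I would argue as follows. Think of $\alpha$ as a free parameter, set $\delta_1=1$, $\delta_2 = \frac1\alpha - 1$ (forced by $(E_2)$, equivalently $\frac{\delta_1}{S_2}=\alpha$), and let $(\delta_i(\alpha))_{i\ge1}$ be the sequence generated by the linear recurrence $\delta_{i+1}=\frac1\alpha(\delta_i-\delta_{i-1})$ — the same recurrence as in the proof of Proposition \ref{delta i}, with the same complex-conjugate roots $r_\pm$ when $\alpha>\frac14$ (Lemma \ref{Delta<0}). Along this trajectory one automatically has $\frac{\delta_i}{S_{i+1}}=\alpha$ for all $i$. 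Now define $g_n(\alpha) = \frac{\delta_{n}(\alpha)}{S_{n}(\alpha)} - \alpha$; the value $\alpha_n$ is precisely the root of $g_n$ in the relevant range $(\frac14,\frac13]$, since $\alpha_n$ is the unique $\alpha$ making $\frac{\delta_n}{S_n}$ also equal to $\alpha$ (and one checks $\delta_{n-1}(\alpha_n)=\delta_n(\alpha_n)$ is then equivalent, using $\delta_{n-1}=\alpha S_n$). So the statement "$\alpha_{n+1}<\alpha_n$" reduces to showing that the root of $g_{n+1}$ lies strictly to the left of the root of $g_n$. This I would establish by a sign/monotonicity argument: show that $g_n$ is, say, increasing at its root (or has a definite sign pattern), and then show $g_{n+1}(\alpha_n)<0$, i.e. that at the parameter value $\alpha=\alpha_n$ we already have $\frac{\delta_{n+1}}{S_{n+1}}<\alpha_n$; combined with $g_{n+1}$ crossing zero, this forces $\alpha_{n+1}<\alpha_n$.

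An alternative, perhaps cleaner, route avoids the explicit formula entirely: observe that the $\inf\max$ defining $\alpha_{n+1}$ is over vectors in $(\R_+^*)^{n+1}$, and from any such vector one obtains a feasible vector for the $\alpha_n$-problem by deleting the last coordinate — but the ratios change, so this is not immediate. The honest version of that idea is: take the optimal vector $(\delta_1,\dots,\delta_{n+1})$ realising $\alpha_{n+1}$ and feed the truncation $(\delta_1,\dots,\delta_n)$ into the $\alpha_n$ functional. All ratios $\frac{\delta_i}{S_{i+1}}$ for $i\le n-1$ equal $\alpha_{n+1}$ exactly, so the only term that differs is the last one, which becomes $\frac{\delta_n}{S_n}$ instead of $\frac{\delta_n}{S_{n+1}}$; since $S_n<S_{n+1}$ this last term is \emph{larger} than $\alpha_{n+1}$, so the max is some value $\ge\alpha_{n+1}$, giving only $\alpha_n\le$ (that value), which is the wrong direction. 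To fix this I would instead perturb: scale down $\delta_n$ slightly so that $\frac{\delta_n}{S_n}$ drops back to $\alpha_{n+1}$; this also lowers $S_n$ hence raises $\frac{\delta_{n-1}}{S_n}$ above $\alpha_{n+1}$, but only by $O(\text{perturbation})$, and one can then cascade a sequence of diminishing downward corrections $\delta_{n-1},\delta_{n-2},\dots$ — essentially re-solving $(E_i)$ with the new terminal condition — to produce a vector whose max of the $\alpha_n$-ratios is \emph{strictly below} $\alpha_{n+1}$. That strict drop is exactly $\alpha_n<\alpha_{n+1}$... which is again the wrong inequality, so in fact the correct reading of the cascade shows the max can be brought strictly below $\alpha_{n+1}$ only if starting from slack, confirming instead $\alpha_n \le \alpha_{n+1}$ is false and $\alpha_n \geq \alpha_{n+1}$; I would sort out the direction carefully by tracking that adding a coordinate strictly enlarges the feasible perturbation space available to push the governing ratio down.

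The main obstacle I expect is precisely pinning down the direction of the inequality rigorously — it is easy to get turned around — and handling the boundary/monotonicity claim that $g_n$ crosses zero transversally (equivalently, that $\frac{\delta_n(\alpha)}{S_n(\alpha)}$ is monotone in $\alpha$ near $\alpha_n$), for which the explicit closed form from Proposition \ref{delta i} together with $\alpha\in(\frac14,\frac13]$ and Lemma \ref{Delta<0} should give enough control. A clean way to finish is: since $\alpha_n\in(\frac14,\frac13]$ for all $n$ (established elsewhere in the appendix) and the sequence is bounded, showing strict monotonicity via the perturbation argument above suffices; the limit is then treated separately using the explicit formula.
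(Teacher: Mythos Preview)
Your proposal overshoots the difficulty of this lemma and, in the process, never lands on a complete argument. Both routes you sketch are left unresolved: in the root-tracking approach you write ``show that $g_n$ is, say, increasing at its root (or has a definite sign pattern)'' without doing it, and in the truncation approach you explicitly get turned around on the direction of the inequality, try to patch it with an unspecified cascading perturbation, and end with ``I would sort out the direction carefully.'' That is not yet a proof.

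The concrete missing idea is that you are truncating in the wrong direction. You take the optimal $(n{+}1)$-vector and delete the last coordinate, which as you correctly observe raises the final ratio and gives nothing useful. The paper instead \emph{extends}: take the optimal vector $(\delta_1,\dots,\delta_{n-1})$ realising $\alpha_{n-1}$ and append $\delta_n=0$. For $i\le n-2$ the ratios $\frac{\delta_i}{S_{i+1}}$ are unchanged and equal $\alpha_{n-1}$; the new penultimate ratio is $\frac{\delta_{n-1}}{S_n}=\frac{\delta_{n-1}}{S_{n-1}}=\alpha_{n-1}$; and the new last ratio $\frac{\delta_n}{S_n}$ is $0$. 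So the max over the $n$-vector functional is exactly $\alpha_{n-1}$, giving $\alpha_n\le\alpha_{n-1}$ immediately. Strictness then comes for free from Proposition~\ref{delta i}: the minimiser is unique, and this extended vector (with $\delta_{n-1}\neq\delta_n$) is not it, so $\alpha_n<\alpha_{n-1}$. The whole proof is two lines; no recurrence analysis, no transversality of $g_n$, no perturbation cascade is needed.
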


\begin{proof}
Choosing $(\d_1,\ldots,\d_n)$ as the vector which realises $\alpha_{n-1}$ and $\d_n=0$ yields: \\$\max \{\frac{\delta_1}{S_2}, \cdots , \frac{\delta_{i}}{S_{i+1}},\cdots, \frac{\delta_{n-1}}{S_{n}},\frac{\delta_{n}}{S_{n}}\}=\alpha_{n-1}$. Hence, $\alpha_n\leq \alpha_{n-1}$. Since this is not the vector described in Proposition~\ref{delta i}, we have $\alpha_n<\alpha_{n-1}$.
\end{proof}

\begin{prop}\label{alpha decreases}
The sequence $(\alpha_n)$ decreases towards $\frac14$ when $n\rightarrow +\infty$. 
\end{prop}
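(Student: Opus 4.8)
The monotonicity is exactly the previous lemma, so the plan is to show that the decreasing sequence $(\alpha_n)$ converges to $\frac14$; since Lemma~\ref{Delta<0} already gives $\alpha_n>\frac14$ for every $n$, it suffices to produce an upper bound for $\alpha_n$ that tends to $\frac14$, and I would read this bound off the structure of the optimal vector. Fix $n$ and write $\alpha=\alpha_n$. By Lemma~\ref{Delta<0} the characteristic roots $r_\pm$ of $X^2-\frac1\alpha X+\frac1\alpha$ are non-real complex conjugates; since their product and their sum both equal $\frac1\alpha$, we may write $r_+=\rho e^{i\theta_n}$ with $\rho=\frac1{\sqrt\alpha}$ and $\cos\theta_n=\frac1{2\sqrt\alpha}\in(0,1)$, so $\theta_n\in(0,\frac\pi2)$. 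Note that $\cos^2\theta_n=\frac1{4\alpha_n}$, that is $\alpha_n=\frac1{4\cos^2\theta_n}$, so proving $\theta_n\to0$ is the same as proving $\alpha_n\to\frac14$.

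Next I would put the optimal vector from Proposition~\ref{delta i} in polar form. Each $\delta_i$ is real and, by Proposition~\ref{delta i}, of the form $\lambda r_+^{\,i-1}+\overline\lambda\,\overline{r_+}^{\,i-1}$ for some $\lambda=|\lambda|e^{i\psi}$ with $\lambda\neq0$ (else all $\delta_i=0$); hence
\[
\delta_i=2|\lambda|\,\rho^{\,i-1}\cos\!\big(\psi+(i-1)\theta_n\big),\qquad 1\le i\le n.
\]
The essential input is that this vector lies in $(\R_+^*)^n$, so $\cos\phi_i>0$ for every $i$, where $\phi_i:=\psi+(i-1)\theta_n$; equivalently, the arithmetic progression $\phi_1,\dots,\phi_n$ of common difference $\theta_n$ must remain inside the union of the open arcs $\big(2k\pi-\tfrac\pi2,\;2k\pi+\tfrac\pi2\big)$, $k$ an integer.

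The main step, which I expect to be the only subtle one, is the elementary observation that since the common difference satisfies $0<\theta_n<\pi$, the progression cannot pass from one of these length-$\pi$ arcs to the next: if $\phi_i\in(2k\pi-\tfrac\pi2,2k\pi+\tfrac\pi2)$ then $\phi_{i+1}=\phi_i+\theta_n$ lies in $(2k\pi-\tfrac\pi2,\;2k\pi+\tfrac{3\pi}2)$, inside which the only sub-arc of positive cosine is $(2k\pi-\tfrac\pi2,2k\pi+\tfrac\pi2)$ itself, forcing $\phi_{i+1}$ back into that arc. Iterating from $i=1$, all of $\phi_1,\dots,\phi_n$ lie in a single arc of length $\pi$, so $(n-1)\theta_n=\phi_n-\phi_1<\pi$, that is $\theta_n<\frac\pi{n-1}$. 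Hence $\theta_n\to0$; more precisely, as soon as $\frac\pi{n-1}<\frac\pi2$ we have $0<\cos\tfrac\pi{n-1}<\cos\theta_n$, so
\[
\frac14<\alpha_n=\frac1{4\cos^2\theta_n}<\frac1{4\cos^2\frac\pi{n-1}}\;\longrightarrow\;\frac14\qquad(n\to\infty),
\]
which, together with the previous lemma, yields $\alpha_n\downarrow\frac14$.

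One borderline case worth checking carefully in the ``no jump'' step is when $\phi_1$ is close to an endpoint of its arc, which does happen since $\psi\to-\tfrac\pi2$ as $\alpha_n\to\frac14$; but $\delta_1=2|\lambda|\cos\psi>0$ puts $\phi_1$ strictly inside the arc, so the open-interval induction still applies. Everything else (the polar-form computation and the final squeeze) is routine.
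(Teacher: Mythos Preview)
Your proposal is correct and follows essentially the same route as the paper: write the optimal $\delta_i$ in polar form $2|\lambda|\rho^{i-1}\cos(\psi+(i-1)\theta_n)$, use positivity of all $\delta_i$ to force the arithmetic progression of arguments to stay in a single length-$\pi$ arc, deduce $\theta_n=O(1/n)\to0$, and conclude $\alpha_n\to\frac14$. Your ``no-jump'' induction is a cleaner phrasing of the paper's terser argument that the integers $k_i$ in $\frac{2k_i\pi}{i}<\arg r_+<\frac{(2k_i+1)\pi}{i}$ must all vanish; the paper obtains the marginally sharper $\theta_n<\frac{\pi}{n}$ versus your $\frac{\pi}{n-1}$, but this is immaterial for the limit.
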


\begin{proof}

Using the same notations as above, it follows from Proposition \ref{delta i} and Lemma \ref{Delta<0} that $\delta_i=2\Re(\lambda r_+^i)>0$. Hence, $-\frac\pi 2<\arg(\lambda r_+^i)<\frac\pi 2$. Thus, $$i.\arg r_+ +\arg \lambda \mod (2\pi)<\frac\pi 2.$$

Since $\lambda +\mu=1$, we have $\Re(\lambda)=\frac 12$, and $-\frac\pi 2<\arg \lambda< \frac \pi 2$. Hence, $0<i.\arg r_+ \mod (2\pi)<\pi $. Thus, there exists an integer $k_i$, $0\leq k_i<i$, such that $\frac{2k_i\pi}i< \arg r_+<\frac{(2k_i+1)\pi}i$. Since this is true for all $1\leq i\leq n$, we have $k_i=0$ for all $i$, and $0<\arg r_+<\frac\pi{n}$. Therefore $\lim_{n\rightarrow +\infty} \arg r_+=0$. Since $\Im(r_+)=\frac{\sqrt{4\alpha-1}}{2\alpha}$, we have $\lim_{n\rightarrow +\infty} \alpha=\frac 14$.

\end{proof}

\begin{prop}
The first values of $(\alpha_n)$ are $\alpha_1=1$, $\alpha_2=\frac12$, $\alpha_3=\frac1{\varphi^2}$, where $\varphi$ is the golden ratio, and $\alpha_4=\frac13$.
\end{prop}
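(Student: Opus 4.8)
The plan is to compute each $\alpha_n$ directly from the defining $\inf\max$, using the reduction already established in the proof of Proposition~\ref{delta i}: the optimal weights are characterised by the equations $(E_i)$, namely $\delta_i = \alpha S_{i+1}$ for $2\le i\le n-1$ together with the closing condition $\delta_{n-1}=\delta_n=\alpha S_n$. Since scaling is irrelevant, I fix $\delta_1=1$. For each small $n$ I would simply write out these finitely many equations and solve for $\alpha$, exactly as was done for general $n$ but now with concrete small indices where the algebra collapses to a low-degree polynomial.

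First, $\alpha_1$: with a single request the only ratio is $\frac{\delta_1}{S_1}=1$, so $\alpha_1=1$. Next $\alpha_2$: the relevant set is $\{\frac{\delta_1}{S_2},\frac{\delta_2}{S_2}\}$; the $\inf\max$ is attained when the two are equal, forcing $\delta_1=\delta_2$ and giving $\alpha_2=\frac12$. For $\alpha_3$: the conditions are $\frac{\delta_1}{S_2}=\frac{\delta_2}{S_3}=\frac{\delta_3}{S_3}=\alpha$, which (with $\delta_1=1$) give $\delta_2=\delta_3$ and $\delta_2 = \alpha(1+2\delta_2)$ together with $1 = \alpha(1+\delta_2)$. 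Eliminating $\alpha$ yields $\delta_2^2 = \delta_2 + 1$, so $\delta_2=\varphi$ and $\alpha_3 = \frac{1}{1+\varphi} = \frac{1}{\varphi^2}$, using $\varphi^2=\varphi+1$. For $\alpha_4$: the equations are $\frac{\delta_1}{S_2}=\frac{\delta_2}{S_3}=\frac{\delta_3}{S_4}=\frac{\delta_4}{S_4}=\alpha$ with $\delta_3=\delta_4$. Writing $\delta_1=1$ and solving the chain (each $\delta_i=\alpha S_{i+1}$) leads to a cubic in $\delta_2$; I expect the rational root $\delta_2$ making $\alpha=\frac13$ to appear, and then one checks $\delta_1=1$, and suitable $\delta_3=\delta_4$ are consistent, confirming $\alpha_4=\frac13$. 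Alternatively, and perhaps more cleanly, one can verify $\alpha_4=\tfrac13$ by exhibiting the weight vector $\delta=(1,2,\tfrac92,\tfrac92)$ — wait, better to just solve: from $1=\alpha S_2$, $\delta_2=\alpha S_3$, $\delta_3=\alpha S_4$, $\delta_3=\alpha S_4$ one gets a self-consistent system whose solution with $\alpha=\tfrac13$ is $\delta=(1,2,6,6)$, since then $S_2=3,S_3=9$? No — $S_3 = 1+2+6=9$, $\delta_3/S_4$ needs checking; the point is one solves the small linear-plus-quadratic system explicitly.

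The only genuine subtlety — and the step I would be most careful about — is justifying that the $\inf\max$ is attained at the point where all the listed ratios are simultaneously equal to a common value $\alpha$, i.e. that the reduction to the equations $(E_i)$ is valid for these small $n$; but this is precisely the monotonicity argument already carried out in the proof of Proposition~\ref{delta i} (each $\inf_{\delta_i}$ sees one decreasing and one increasing term), so it applies verbatim, and one need only check that the resulting solution lies in $(\R_+^*)^n$, which it does since all the $\delta_i$ come out positive. Everything else is routine polynomial manipulation, with the golden-ratio identity $\varphi^2=\varphi+1$ doing the cosmetic work in the $n=3$ case. Consistency with the earlier results is a useful sanity check: $\alpha_3=1/\varphi^2$ matches the $n=3$, $1\le T<1.5$ entry of Table~\ref{fig n=3 bis}, and $\alpha_4=1/2$? — no, $\alpha_4=1/3$, while the $n=4$ table lists $1/\varphi^2$ for $1\le T<1.5$, which is $\alpha_{4-\lfloor 1/(2-T)\rfloor}=\alpha_3$, consistent; and the $1.5\le T<5/3$ entry $1/2=\alpha_2$, also consistent.
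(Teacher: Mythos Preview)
Your approach is exactly the paper's: reduce to the equations $(E_i)$ from Proposition~\ref{delta i}, then solve the resulting finite system for each small $n$. The cases $n=1,2,3$ are carried out correctly and match the paper verbatim.

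The gap is in $n=4$: you never actually solve the system, and both trial vectors you write down are wrong. With $\delta=(1,2,6,6)$ one has $\delta_2/S_3=2/9\neq 1/3$, and $(1,2,\tfrac92,\tfrac92)$ fails similarly. The computation is short if done systematically: from $\delta_1=\alpha S_2$ and $\delta_1=1$ one gets $\delta_2=\tfrac1\alpha-1$; from $\delta_2=\alpha S_3$ one gets $\delta_3=\tfrac{1-2\alpha}{\alpha^2}$; and from $\delta_3=\delta_4=\alpha S_4$ together with $\alpha S_2=1$ one gets $\delta_3=\tfrac{1}{1-2\alpha}$. Equating the two expressions for $\delta_3$ gives $\alpha^2=(1-2\alpha)^2$, hence $\alpha=\tfrac13$ (the other root $\alpha=1$ forces $\delta_2=0$). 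This yields $\delta=(1,2,3,3)$, which is precisely the vector the paper writes down; you should replace the hand-waving and incorrect guesses with this two-line derivation.
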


\begin{proof}
While $\alpha_1$ is trivially equal to $1$, the following terms of the sequence are calculated using the equations $(E_i)$.
\begin{itemize}
    \item For $n=2$, $(E_2)$ yields $\d_1=\d_2=1$ and $\alpha=\frac12$.
    \item For $n=3$, equations $(E_2)$ and $(E_3)$ yield $\frac1{S_2}=\frac{\d_2}{S_3}=\frac{\d_3}{S_3}=\alpha$. Thus, $\d_2=\d_3=\varphi$ and $\alpha=\frac1{\varphi^2}$.
    \item For $n=4$, equations $(E_2)$, $(E_3)$ and $(E_4)$ yield $\frac1{S_2}=\frac{\d_2}{S_3}=\frac{\d_3}{S_4}=\frac{\d_4}{S_4}=\alpha$. Thus $\d_2=2$ and $\d_3=\d_4=3$ and $\alpha =\frac13$.
\end{itemize}
\end{proof}

\end{document}